\newif\ifsubmission
\newif\iftqc
\newif\ifextendedabstract
\tikzstyle{system}=[rectangle,draw,fill=lightgray,minimum height=0.8cm,minimum
\tikzstyle{BC}=[system]
\tikzstyle{resource}=[system]
\tikzstyle{RO}=[resource, minimum width=1cm]
\tikzstyle{protocol}=[circle, inner sep=0.7mm, draw]
\tikzstyle{simulator}=[circle, inner sep=0.7mm, draw]
\tikzstyle{memory}=[resource]
\tikzstyle{distinguisher}=[resource,fill=white,minimum width=3.5cm,
\tikzstyle{link}=[]
\renewcommand{\secpar}{\lambda}
\newcommand{\personalnote}[3]{
    \expandafter\newcommand\csname#1\endcsname[1]{
        {}  
        }
    }
\newcommand{\bbC}{\mathbb{C}}
\newcommand{\bbE}{\mathbb{E}}
\newcommand{\bbF}{\mathbb{F}}
\newcommand{\bbI}{\mathbb{I}}
\newcommand{\bbK}{\mathbb{K}}
\newcommand{\bbN}{\mathbb{N}}
\newcommand{\bbS}{\mathbb{S}}
\newcommand{\bbZ}{\mathbb{Z}}
\newcommand{\calA}{\mathcal{A}}
\newcommand{\calC}{\mathcal{C}}
\newcommand{\calH}{\mathcal{H}}
\newcommand{\calI}{\mathcal{I}}
\newcommand{\calP}{\mathcal{P}}
\newcommand{\calR}{\mathcal{R}}
\newcommand{\calS}{\mathcal{S}}
\newcommand{\calX}{\mathcal{X}}
\newcommand{\calY}{\mathcal{Y}}
\newcommand{\vect}[1]{\mathbf{#1}}
\newcommand{\Delete}{\mathsf{Delete}}
\newcommand{\Verify}{\mathsf{Verify}}
\newcommand{\Reconstruct}{\mathsf{Reconstruct}}
\newcommand{\Split}{\mathsf{Split}}
\newcommand{\Correct}{\mathsf{Correct}}
\newcommand{\Sim}{\mathsf{Sim}}
\newcommand{\Adv}{\mathsf{Adv}}
\newcommand{\secp}{\lambda}
\newcommand{\CSplit}{\mathsf{CSplit}}
\newcommand{\CReconstruct}{\mathsf{CReconstruct}}
\newcommand{\Twotwo}{2\text{-}2}
\newcommand{\csh}{\mathsf{csh}}
\newcommand{\qsh}{\mathsf{qsh}}
\newcommand{\Interpolate}{\mathsf{Interpolate}}
\newcommand{\verkey}{\mathsf{vk}}
\newcommand{\cert}{\mathsf{cert}}
\newcommand{\Tr}{\mathsf{Tr}}
\newcommand{\TD}{\mathsf{TD}}
\newcommand{\tracedist}{\TD}
\newcommand{\QFT}{\mathsf{QFT}}
\newcommand{\NSCD}{\mathsf{SS\text{-}NSCD}}
\newcommand{\ACD}{\mathsf{SS\text{-}ACD}}
\newcommand{\TwoSSCD}{2\NSCD}
\newcommand{\sh}{\mathsf{sh}}
\newcommand{\chalReg}{\mathcal{C}}
\newcommand{\delLoss}{\ell}
\newcommand{\numChecks}{r}
\newcommand{\delLossValueShort}{t\frac{\log(\secpar)}{\sqrt{\numChecks}}}
\newcommand{\corctr}{a}
\newcommand{\delctr}{b}
\newtheorem{definition}{Definition}
\newtheorem{lemma}{Lemma}
\newtheorem{theorem}{Theorem}
\newtheorem{claim}{Claim}
\newtheorem{remark}{Remark}
\newtheorem{fact}{Fact}
\title{Secret Sharing with Certified Deletion\footnote{In parallel submission to CRYPTO 2024, which emphasizes the cryptographic aspect. This work may also be interesting to the quantum computing community.}}
\title{Secret Sharing with Certified Deletion}
    \author{}
    \institute{}
        \author{James Bartusek\thanks{UC Berkeley} \and Justin Raizes\thanks{Carnegie Mellon University}}
        \author[1]{James Bartusek}
        \author[2]{Justin Raizes}
        \affil[1]{UC Berkeley}
        \affil[2]{Carnegie Mellon University}
\date{}
\begin{document}

\maketitle

\begin{abstract}

    Secret sharing allows a user to split a secret into many shares so that the secret can be recovered if, and only if, an authorized set of shares is collected. Although secret sharing typically does not require any computational hardness assumptions, its security \emph{does} require that an adversary cannot collect an authorized set of shares. Over long periods of time where an adversary can benefit from multiple data breaches, this may become an unrealistic assumption.

    \iftqc
        We initiate the systematic study of secret sharing \emph{with certified deletion} in order to achieve security \emph{even against an adversary that eventually collects an authorized set of shares}. In secret sharing with certified deletion, a (classical) secret is split into quantum shares which can be verifiably destroyed.
        We define two natural notions of security: \textbf{no-signaling security} and \textbf{adaptive security}. 

        Next, we show how to construct (i) a secret sharing scheme with no-signaling certified deletion for \emph{any monotone access structure}, and (ii) a \emph{threshold} secret sharing scheme with adaptive certified deletion. Our first construction uses Bartusek and Khurana's (CRYPTO 2023) 2-out-of-2 secret sharing scheme with certified deletion as a building block, while our second construction is built from scratch and requires several new technical ideas. For example, we significantly generalize the ``XOR extractor'' of Agarwal, Bartusek, Khurana, and Kumar (EUROCRYPT 2023) in order to obtain high rate seedless extraction from certain quantum sources of entropy.
    \else
         We initiate the systematic study of secret sharing \emph{with certified deletion} in order to achieve security \emph{even against an adversary that eventually collects an authorized set of shares}. In secret sharing with certified deletion, a (classical) secret $s$ is split into quantum shares that can be destroyed in a manner verifiable by the dealer.
        
        We put forth two natural definitions of security. \textbf{No-signaling security} roughly requires that if multiple non-communicating adversaries delete sufficiently many shares, then their combined view contains negligible information about $s$, even if the total set of corrupted parties forms an authorized set. \textbf{Adaptive security} requires privacy of $s$ against an adversary that can continuously and adaptively corrupt new shares and delete previously-corrupted shares, as long as the total set of corrupted shares minus deleted shares remains unauthorized.
    
        Next, we show that these security definitions are achievable: we show how to construct (i) a secret sharing scheme with no-signaling certified deletion for \emph{any monotone access structure}, and (ii) a \emph{threshold} secret sharing scheme with adaptive certified deletion. Our first construction uses Bartusek and Khurana's (CRYPTO 2023) 2-out-of-2 secret sharing scheme with certified deletion as a building block, while our second construction is built from scratch and requires several new technical ideas. For example, we significantly generalize the ``XOR extractor'' of Agarwal, Bartusek, Khurana, and Kumar (EUROCRYPT 2023) in order to obtain better seedless extraction from certain quantum sources of entropy, and show how polynomial interpolation can double as a high-rate randomness extractor in our context of threshold sharing with certified deletion.
    \fi


\end{abstract}
\section{Introduction}

Secret sharing \cite{Shamir79,Blakley1899SafeguardingCK,ItoSaiNis87} is a foundational cryptographic primitive that allows a dealer to distribute a secret $s$ among 
$n$ parties so that only certain ``authorized'' subsets of the parties may recover the secret. A particularly common scenario is $(k,n)$ \emph{threshold} secret sharing, where the dealer splits $s$ into $n$ shares such that any $k$ of the shares can be combined to recover the secret $s$, but any $k-1$ or fewer shares leak no information about $s$. However, one can also consider a much more versatile setting, in which the authorized subsets of $n$ are defined by any \emph{monotone access structure} $\bbS$.\footnote{A set $\bbS$ of subsets of $[n]$ is monotone if for any subset $S \in \bbS$, it holds that $S' \in \bbS$ for all supersets $S' \supset S$.} Secret sharing schemes are ubiquitous in cryptography, and we refer the reader to Beimel's survey \cite{10.1007/978-3-642-20901-7_2} for a broader discussion\ifextendedabstract.\else, including several applications.\fi

\ifextendedabstract
A particularly appealing aspect of secret sharing as compared to most other cryptographic primitives is it \emph{doesn't require computational hardness assumptions}. That is, one can construct secret sharing secure against any computationally unbounded adversary, for any monotone access structure (e.g.~\cite{ItoSaiNis87,C:BenLei88,10.1145/3188745.3188936}). 
\else
A particularly appealing aspect of secret sharing that sets it apart from most other cryptographic primitives is that it \emph{doesn't require computational hardness assumptions}. That is, one can construct secret sharing schemes for arbitrary monotone access structures secure against any computationally unbounded adversary (e.g.\ \cite{ItoSaiNis87,C:BenLei88,10.1145/3188745.3188936}). 
\fi

\ifextendedabstract
However, the security of these schemes still rests on a stringent assumption: over the course of the (potentially unbounded) adversary's operation, it only ever sees an unauthorized set of shares. This may be unacceptable for users sharing particularly sensitive information. Even if an adversary initially may only access a limited number of shares, over time they may be able to corrupt more and more parties, or perhaps more and more shares become compromised independently and are leaked into the public domain. A user who becomes paranoid about this possibility generally has no recourse, and, worse yet, cannot even \emph{detect} if an adversary has obtained access to enough shares to reconstruct their secret. 
\else
However, the security of these schemes still rests on a stringent assumption: over the course of the (potentially unbounded) adversary's operation, it only ever sees an unauthorized set of shares. This may be unacceptable for users sharing particularly sensitive information. Even if an adversary initially may only access a limited number of shares, over time they may be able to corrupt more and more parties, or perhaps more and more shares become compromised independently and are leaked into the public domain. A user who becomes paranoid about this possibility generally has no recourse, and, worse yet, cannot even \emph{detect} if an adversary has obtained access to enough shares to reconstruct their secret. 
\fi

In this work, we ask whether it is possible to strengthen the standard notion of secret sharing security, and relax the assumption that the adversary only ever corrupts an unauthorized set of parties. 
In particular:

\begin{quote}
\begin{center}
    \emph{Is it possible to achieve meaningful secret sharing security against adversaries that may eventually corrupt authorized sets of parties?}
\end{center}
\end{quote}

\ifextendedabstract
If the shares consist of only classical data, then there is no hope to do so. Indeed, an adversary can copy and store any share they receive.  Once they've collected and stored an authorized set, they'll be able to recover the secret due to the correctness of the secret sharing scheme. 
\else
Now, if the shares consist of only classical data, then there is no hope of answering the above question in the affirmative. Indeed, once an adversary obtains any share, they can make a copy and store it away. Once they've collected and stored an authorized set, they'll be able to recover the secret due to the correctness of the secret sharing scheme. 
\fi

\paragraph{Certified deletion.} On the other hand, the \emph{uncertainty principle} of quantum mechanics offers some hope: if shares are encoded into quantum states, then the useful share information may be erased by applying certain destructive measurements. Thus, a user that is worried about an adversary eventually corrupting an authorized set of shares may request and verify that this ``deletion'' operation is performed on some set of their shares. Now, even if the adversary learns enough shares in the future to constitute an authorized set, the already-deleted shares will remain useless, and there is hope that the user's secret remains private.

\ifextendedabstract
Indeed, the basic idea of leveraging the uncertainty principle to perform ``certified deletion'' of private information was first put forth by Broadbent and Islam \cite{TCC:BroIsl20} in the context of one-time-pad encryption, and has since been applied in several contexts throughout cryptography, e.g.\
\cite{AC:HMNY21, C:HMNY22, poremba:LIPIcs.ITCS.2023.90, C:BarKhu23, 10.1007/978-3-031-48624-1_7, HKMNPY24, BGKMRR24}. In fact, \cite{C:BarKhu23} previously constructed a very limited flavor of secret sharing with certified deletion: 2-out-of-2 secret sharing where only one of the two shares can be deleted. 
\else
Indeed, the basic idea of leveraging the uncertainty principle to perform ``certified deletion'' of private information was first put forth by Broadbent and Islam \cite{TCC:BroIsl20} in the context of one-time-pad encryption, and has since been applied in several contexts throughout cryptography, e.g.\
\cite{AC:HMNY21, C:HMNY22, poremba:LIPIcs.ITCS.2023.90, C:BarKhu23, 10.1007/978-3-031-48624-1_7, HKMNPY24, BGKMRR24}. In fact, Bartusek and Khurana \cite{C:BarKhu23} previously constructed a very limited flavor of secret sharing with certified deletion, namely, 2-out-of-2 secret sharing where only one of the two shares admits the possibility the deletion. 
\fi
\ifextendedabstract
Their scheme allows a user to split a secret $s$ into a quantum share $\ket{\sh_1}$ and a classical share $\sh_2$. If an adversary first obtains and deletes $\ket{\sh_1}$, then obtains $\sh_2$, it will still be unable to reconstruct the secret $s$. One can also view the original one-time-pad encryption with certified deletion scheme of \cite{TCC:BroIsl20} as exactly this flavor of secret sharing, where the quantum share is the ciphertext and the classical share is the secret key.
\else
Their scheme allows a user to split a secret $s$ into a quantum share $\ket{\sh_1}$ and a classical share $\sh_2$. If an adversary first obtains and deletes $\ket{\sh_1}$, then obtains $\sh_2$, it will still be unable to reconstruct the secret $s$. One can also view the original one-time-pad encryption with certified deletion scheme of \cite{TCC:BroIsl20} as exactly this flavor of 2-out-of-2 secret sharing with certified deletion, where the quantum share is the ciphertext and the classical share is the secret key.
\fi
In this work, we show that it is possible to introduce certified deletion guarantees into more versatile and general flavors of secret sharing, addressing several definitional and technical issues along the way.


\subsection{Our Results}

We formulate two powerful but incomparable notions of certified deletion security for general-purpose secret sharing schemes, and show how to construct a scheme satisfying each definition. One of our key technical tools is a high-rate seedless extractor from certain quantum sources of entropy that significantly generalizes and improves upon the ``XOR extractor'' of \cite{EC:ABKK23}.

\paragraph{No-signaling security.}

First, we address the shortcomings of
\cite{C:BarKhu23}'s security definition for 2-out-of-2 secret sharing sketched above, and formulate a natural extension that (i) applies to schemes for \emph{any} monotone access structure, and (ii) allows for the possibility that \emph{any} of the shares may be deleted. 

\ifextendedabstract
Consider a scenario involving multiple non-communicating adversaries that each individually can access some unauthorized set of shares. These adversaries may share entanglement, but may not signal. 
Now, the user may request that some of its shares are deleted. If the adversaries jointly delete enough shares so that the remaining undeleted shares form an \emph{unauthorized} set, then we require that the user's secret remains private even given the combined views of all the adversaries.
That is, even if a single adversarial entity can eventually corrupt \emph{all} of the parties, the secret is still hidden if enough shares have previously been deleted.\footnote{We remark that this definition also captures adversaries that don't end up corrupting \emph{all} the shares, by imagining that there is a separate component of the adversary that honestly deletes the uncorrupted shares.}
\else
In particular, we model a scenario involving multiple non-communicating adversaries that each individually have access to some unauthorized set of shares. These adversaries may even share entanglement, but may not exchange messages. 
Now, the user may request that some of its shares are deleted. If the adversaries jointly delete enough shares so that the remaining undeleted shares form an \emph{unauthorized} set, then we combine the views of all the adversaries together, and require that the user's secret remains private even given this joint view.
That is, even if a single adversarial entity is eventually able to corrupt up to \emph{all} of the parties, they will still not be able to recover the secret if enough shares have previously been deleted.\footnote{We remark that this definition also captures adversaries that don't end up corrupting \emph{all} the shares, by imagining that there is a separate component of the adversary that honestly deletes the uncorrupted shares.}
\fi

We refer to this security notion for secret sharing schemes as \emph{no-signaling security} 
\ifextendedabstract\else(see \cref{sec:definitions} for a precise definition)\fi,
emphasizing the fact that shares must be deleted by adversaries that cannot yet pool information about an authorized set of shares, as this would trivially allow for reconstruction of the secret. Then, \ifextendedabstract\else in \cref{sec:no-signaling-construction} \fi we show how to combine \cite{C:BarKhu23}'s simple 2-out-of-2 secret sharing scheme with any standard secret sharing scheme for monotone access structure $\bbS$ (e.g.\ \cite{ItoSaiNis87,C:BenLei88,10.1145/3188745.3188936}) in order to obtain a secret sharing scheme for $\bbS$ with no-signaling security.

\begin{theorem}[Informal]
There exists a secret sharing scheme with no-signaling certified deletion security for any monotone access structure $\bbS$.
\end{theorem}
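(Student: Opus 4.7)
The plan is to combine any classical secret sharing scheme $\Pi^{\csh}$ for the access structure $\bbS$ with Bartusek--Khurana's $\Twotwo$ secret sharing with certified deletion $\Pi^{\Twotwo}$. To share a secret $s$, I would first sample classical shares $(\sh_1^c,\ldots,\sh_n^c) \gets \Pi^{\csh}.\Split(s)$, then for each $i \in [n]$ re-share $\sh_i^c$ via $\Pi^{\Twotwo}.\Split$, obtaining a quantum component $\ket{\psi_i}$, a classical component $c_i$, and a verification key $\verkey_i$ retained by the dealer. Party $i$'s share is $(\ket{\psi_i}, c_i)$. Reconstruction on authorized $T \in \bbS$ recovers each $\sh_i^c$ via $\Pi^{\Twotwo}.\Reconstruct$ and then calls classical reconstruction; deletion for party $i$ runs $\Pi^{\Twotwo}.\Delete(\ket{\psi_i})$, yielding a $\cert_i$ the dealer checks against $\verkey_i$. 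Correctness is immediate from the correctness of the two building blocks.

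For security, I would fix non-signaling adversaries $\calA_1,\ldots,\calA_m$ corrupting unauthorized share sets $T_1,\ldots,T_m$, outputting accepting deletion certificates for subsets $D_j\subseteq T_j$, with $R := \bigcup_j (T_j\setminus D_j) \notin \bbS$. Writing $D = \bigcup_j D_j = \{i_1,\ldots,i_\ell\}$, I would proceed by a hybrid argument over $k = 0,\ldots,\ell$: in hybrid $H_k$, the value fed into $\Pi^{\Twotwo}.\Split$ at position $i_k$ is swapped for a uniformly random element of the share space, while the outer classical $\Pi^{\csh}$ sharing remains anchored to the true $s$. Indistinguishability of $H_{k-1}$ and $H_k$ reduces to CD security of $\Pi^{\Twotwo}$: the reduction receives a challenge $(\ket{\psi^*},c^*)$ encoding either $\sh_{i_k}^c$ or a uniform value, embeds it at position $i_k$ inside $\calA_{j(k)}$'s input, honestly generates every other share, runs all adversaries in parallel on their local portions of any pre-shared entanglement (no classical communication is required precisely because the $\calA_j$ are non-signaling), and forwards the $i_k$-th certificate back to the challenger. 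After the $\ell$ hops, $\sh_i^c$ for $i \in D$ is uniform and independent of $s$, so the only residual information about $s$ in the combined view is carried by $(\sh_i^c)_{i\in R}$; since $R$ is unauthorized, statistical privacy of $\Pi^{\csh}$ closes the argument.

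The main obstacle is formalizing this CD-security reduction in the multi-adversary, entanglement-sharing, no-signaling regime. The $\Pi^{\Twotwo}$ CD security is naturally stated against a single adversary receiving one $(\ket{\psi}, c)$ pair, whereas here the reduction must interact with $\calA_{j(k)}$ while simultaneously running the other adversaries, which may share pre-distributed entanglement with $\calA_{j(k)}$. The conceptual point that makes this go through is that no-signaling plus locality of the $\Pi^{\Twotwo}$ deletion protocol lets the reduction treat the other adversaries as a purely local post-processing of the challenger's output, so only the single certificate for share $i_k$ needs to be sent back; articulating this precisely, and verifying that conditioning on acceptance of every certificate composes correctly across the hybrid sequence, is the most delicate piece.
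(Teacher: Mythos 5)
There is a fatal flaw in your construction itself, before the proof is even reached. You give party $i$ \emph{both} components $(\ket{\psi_i}, c_i)$ of the 2-out-of-2 sharing of $\sh_i^c$. But the certified deletion guarantee of the Bartusek--Khurana scheme only holds when the adversary must produce its certificate \emph{before} seeing the classical share: in their scheme the classical share reveals which positions of the quantum state are data and which are checks, so an adversary holding both can measure the data positions in the computational basis to recover $\sh_i^c$, measure the check positions in the Hadamard basis to produce an accepting certificate, and keep $\sh_i^c$ in classical memory. Since the partition $(P_1,\dots,P_\ell)$ covers all of $[n]$, each $\calA_t$ can do this for every $i \in P_t$, delete everything with valid certificates, and the combined view then contains all of $\sh_1^c,\dots,\sh_n^c$, hence $s$. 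The paper avoids exactly this by \emph{secret-sharing the classical component} $\csh_i$ of the 2-of-2 sharing across all $n$ parties with the classical $\bbS$-scheme, so that an adversary restricted to an unauthorized set $P_t$ has no information about $\csh_i$ at deletion time; the classical component is only ``revealed'' (via the simulator and reverse-sampling) after the certificate is produced, which is what makes the reduction to the 2-of-2 game type-check.

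Separately, even granting a corrected construction, your hybrid argument has a gap: the hybrids $H_k$ are indexed by the deleted set $D$, which is a random variable determined only after the adversaries have run, whereas the reduction must decide at share-generation time which position carries the challenge and whether the value there is real or random (the shares handed to the \emph{other} adversaries depend on this choice when the classical component is itself spread across parties). The paper handles this by having the challenger guess the deleted set $C$ upfront, incurring a multiplicative $2^{-n}$ loss, and compensating by running the 2-of-2 scheme at security parameter $\kappa = \max\{\secp,n\}^2$ so that its $2^{-\Omega(\kappa)}$ bound absorbs the loss. Your sketch does not address how the hybrid sequence is made well-defined against an adversary whose deletion choices are adaptive, nor the resulting security loss.
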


\paragraph{Adaptive security.} 
\ifextendedabstract
Next, we consider a particularly cunning but natural class of adversaries. Suppose that initially the adversary only obtains access to some unauthorized set of shares. At some point, the user becomes paranoid and requests that some subset of these shares are deleted. The adversary obliges but then continues to corrupt new parties or locate other leaked shares. The adversary may continue to delete some of these shares to appease the user, while continuing to work behind the scenes to mount a long-term attack on the system. However, as long as the set of corrupted parties minus the set of certifiably deleted shares continues to be unauthorized, the user's secret should remain private from such an adversary.
\else
Next, we consider a particularly cunning but natural class of adversaries that exhibit the following behavior. Suppose that initially the adversary only obtains access to some unauthorized set of shares. At some point, the user becomes paranoid and requests that some subset of these shares are deleted. The adversary obliges but then continues to corrupt new parties or locate other leaked shares. The adversary may continue to delete some of these shares to appease the user, while continuing to work behind the scenes to mount a long-term attack on the system. However, as long as the set of corrupted parties minus the set of certifiably deleted shares continues to be unauthorized, we can hope that the user's secret remains private from such an adversary.
\fi

\ifextendedabstract
Unfortunately, the notion of no-signaling security does not capture such \emph{adaptive} behavior. No-signaling security only models adversaries that delete once, and then receive some extra information after this single round of deletion. Thus, we formalize \emph{adaptive security} as an alternative and quite strong notion of certified deletion security for secret sharing schemes.
\else
Unfortunately, the notion of no-signaling security does not capture the behavior of such an \emph{adaptive} adversary. That is, no-signaling security only models adversaries that delete once, and then receive some extra information after this single round of deletion. Thus, we formalize \emph{adaptive security} as an alternative and quite strong notion of certified deletion security for secret sharing schemes \ifextendedabstract\else(see \cref{sec:definitions} for a precise definition).\fi
\fi

Protecting against such arbitrarily adaptive adversaries turns out to be a significant challenge. The main technical component of our work realizes a secret sharing scheme with adaptive certified deletion security for the specific case of threshold access  
\ifextendedabstract structures.\else structures (\cref{sec:adaptive-construction}).\fi

\begin{theorem}[Informal]
There exists a threshold secret sharing scheme with adaptive certified deletion security.
\end{theorem}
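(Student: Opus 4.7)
The plan is to construct a $(k,n)$-threshold scheme whose shares are BB84-style quantum encodings structured so that reconstruction is polynomial interpolation and deletion is conjugate-basis measurement. Concretely, to share a secret $s\in\{0,1\}^\secpar$, I would sample a uniformly random polynomial $p$ of degree $k-1$ over a sufficiently large field $\bbF$, set the mask $p(0)$ to encrypt $s$, and give party $i$ an encoding of $p(i)$ together with many ``padding'' field elements, the whole bundle placed in BB84 states under a dealer-chosen random basis string. A classical header recoverable at reconstruction time specifies which positions carry the real $p(i)$ (to be measured in the computational basis) and which are padding; deletion $\Delete$ measures every position in the Hadamard basis and returns the outcomes as $\cert$, and $\Verify$ accepts iff the parities of the padding coordinates match the randomness the dealer stored.

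Correctness and verification are the routine parts: any $k$ honest parties measure their primal registers, interpolate $p$, and obtain $s$; a certificate for share $i$ is accepted iff the relevant Hadamard-basis parities agree with the dealer's purification. For privacy in a non-adaptive warm-up, I would purify the dealer, and use a monogamy-of-entanglement style argument on each deleted share to show that conditioned on a valid certificate the smooth min-entropy of the dealer's primal register, given the adversary's quantum state, is essentially full. Applying this share-by-share produces a high-entropy classical source $(p(i))_{i \in D}$ where $D$ is the certifiably-deleted set, about which the adversary knows little, and then the key observation is that polynomial interpolation itself realizes a high-rate seedless extractor on this source: because $p(0)$ is a \emph{nonzero} linear combination of \emph{any} $k$ evaluations, the mask is nearly uniform once $|D|$ combined with the unauthorized corrupted set leaves the complement containing at least $k$ unknown evaluations. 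This step is exactly where I would invoke and generalize the XOR-extractor lemma of \cite{EC:ABKK23} from arbitrary linear combinations with $\pm 1$ coefficients to linear combinations with arbitrary nonzero Lagrange coefficients over $\bbF$.

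The main obstacle, and the reason this proof requires real care, is lifting the above extractor argument to the fully \emph{adaptive} setting. An adaptive $\calA$ can corrupt a share, delete it, corrupt another, delete it, corrupt a third, and so on in any order, with the only invariant being that at every moment the set of currently-held shares is unauthorized; the split between ``ultimately deleted'' and ``ultimately live'' shares is only decided at the end. I plan to handle this via a hybrid over the $n$ shares combined with a deferred-measurement technique: for each share the dealer's basis and padding values are kept in superposition, and only when $\calA$ finally commits by either returning a valid certificate or by retaining that share as a corruption is the classical value sampled by measuring the dealer's purification in the appropriate basis. The hybrid then walks through shares in the order $\calA$ commits them, and each step reduces to a one-shot certified-deletion sub-game controlled by the entropy/extractor argument of the previous paragraph.

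Finally I would stitch the hybrids together by a union bound, tracking two counters --- the size of the currently-corrupted set and the number of verified deletions --- and using the threshold invariant to guarantee that at the end of the interaction the number of primal evaluations that $\calA$ can predict is strictly less than $k$, so that at least one ``fresh'' Lagrange coefficient survives in the extractor. This last step is where the generalized polynomial-interpolation extractor pays off, since it gives a near-optimal-rate mask on $s$ without requiring any seeded extraction and hence without forcing the dealer to keep extra state beyond what is already purifying the shares.
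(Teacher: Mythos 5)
There is a genuine gap, and it is precisely the pitfall the paper identifies as the reason a new construction is needed. Your scheme stores ``a classical header recoverable at reconstruction time'' that specifies which positions of each share carry the real evaluation $p(i)$ and which are padding. In the adaptive game this header is fatal: an adversary can corrupt and immediately (honestly) delete shares one by one until it has recovered the header --- whatever classical, undeletable form it is encoded in --- and from that point on, whenever it corrupts a fresh share $i$ it knows exactly which positions are data, measures those in the computational basis to retain $p(i)$, and measures only the padding positions in the Hadamard basis to produce a certificate that $\Verify$ accepts. It thereby ``deletes'' every share while keeping every evaluation, and eventually reconstructs $s$. This is exactly the attack described in \cref{sec:overview-adaptive-cd} against the no-signaling construction, and the central idea of the paper's threshold scheme is to \emph{eliminate the header entirely}: check positions are never revealed, and reconstruction instead treats them as random errors corrected by the Reed--Solomon decoder. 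Your proposal does not contain this idea, and without it the construction is broken rather than merely hard to analyze.

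Two further points would still need repair even with the header removed. First, a degree-$(k-1)$ polynomial with a single evaluation per share cannot tolerate the imperfect deletion that any cut-and-choose verification permits: the adversary can guess the location of the one data position with inverse-polynomial probability and retain it while passing verification. The paper instead packs $t'$ evaluations into each share and inflates the degree to $(k-1)t' + (n-k+1)\ell$ so that the $\ell$ evaluations per deleted share that may survive are information-theoretically useless. Relatedly, your extractor only needs to output the single mask $p(0)$, whereas the proof requires extracting nearly a full share ($t'-\ell$ field elements) from each deleted share in order to argue that the \emph{later-corrupted shares}, which are interpolated from the deleted ones, look uniform. Second, your adaptive hybrid (``each step reduces to a one-shot certified-deletion sub-game'') glosses over the real obstruction: the challenger must re-use the purifying register of a deleted share to interpolate \emph{several} subsequently corrupted shares, so the one-shot extractor reduction cannot be invoked share-by-share in the forward direction. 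The paper handles this with the ``predicates first, replacement later'' ordering and the truncated-experiment induction; some such mechanism is unavoidable and is absent from your outline.
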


\paragraph{High-rate seedless extractors from quantum sources of entropy.} One of our technical building blocks is an improved method for seedless extraction from certain quantum sources of entropy. Roughly, the source of entropy comes from performing a standard basis measurement on a register that is in superposition over a limited number of Fourier basis states. 

While seedless extraction from such sources of entropy \cite{EC:ABKK23} has been a crucial component in previous realizations of cryptographic primitives with certified deletion \cite{C:BarKhu23},\footnote{See discussion \ifextendedabstract in the full version \else therein\fi for why \emph{seedless} as opposed to seeded extraction is crucial.} the technique had been limited to (i) extracting from qubit registers (i.e.\ where data is natively represented as superpositions of bitstrings) and (ii) extracting only a single bit of entropy. Here, we generalize these techniques to extract from qu\emph{dit} registers (i.e.\ where data is natively represented as superpositions of vectors over finite fields), and produce several field elements worth of entropy, vastly improving the rate of extraction. 
\ifextendedabstract
Beyond being interesting in its own right, these improvements are crucial for obtaining our construction of threshold sharing with adaptive certified deletion security. 
Moreover, we show how to apply these high-rate extraction techniques to extension fields. This allows us to represent our quantum shares as string of qubits (as opposed to qudits), removing the need for entanglement in our construction.
\else
Beyond being interesting in its own right, it turns out that these improvements are crucial for showing security our construction of threshold sharing with adaptive certified deletion. 
Moreover, we show how these high-rate extraction techniques can be applied to extension fields, meaning that we can represent our quantum shares as string of qubits (as opposed to qudits), removing the need for entanglement in our construction. 
\ifextendedabstract We refer the reader to the full version for more details.\else We refer the reader to \cref{sec:overview-extractor} and \cref{sec:extractor} for more details.\fi
\fi

\ifextendedabstract
\else
\section{Technical Overview}

Intuitively, certified deletion for secret sharing aims to keep the secret private from an adversary if the total set of undeleted shares they have access to is unauthorized. One could formalize this by considering an adversary who initially receives an unauthorized set of shares and then deletes some of them. If the undeleted shares are still unauthorized when combined with the shares that the adversary did not receive, then we allow the adversary to access these remaining shares. This closely matches the definition of encryption with certified deletion, where the adversary initially receives and deletes a ciphertext $\mathsf{Enc}(k, m)$ encrypting message $m$ using key $k$, and then later receives the key $k$.

However, this definition is not meaningful for all access structures. For example, in a $k$ out of $n$ access structure where $k < n/2$, the shares that the adversary does not start with \emph{already} form an authorized set on their own, so it never makes sense to allow the adversary to access all of these shares at once. In this section, we give an overview of two different ways to address this definitional deficiency: no-signaling certified deletion and adaptive certified deletion.

\subsection{No-Signaling Certified Deletion}

In no-signaling certified deletion, we address this problem by allowing the adversary to delete from multiple sets of shares $P_1, \dots, P_\ell$. However, if $P_1\cup \dots \cup P_\ell$ contains all shares, then the adversary as a whole gets to see every share before it generates any deletion certificates. Thus, to prevent trivial attacks, we do not allow the adversary to communicate across sets. However, the different parts of the adversary may still \emph{share entanglement}. This modification yields the no-signaling certified deletion game $\NSCD_{\bbS}(s)$ for secret $s$ and access structure $\bbS$ over $n$ parties, which we describe here.
\begin{enumerate}
    \item The challenger secret-splits $s$ into $n$ shares with access structure $\bbS$.
    \item Each adversary $\Adv_i$ is initialized with one register of a shared state $\ket{\psi}$,  receives the shares in a set $P_i$, and produces some set of certificates $\{\cert_j\}_{j \in P_i}$. If $\Adv_i$ does not wish to delete share $j$, then it may set $\cert_j = \bot$.
    \item If the total set of shares that have not been deleted is unauthorized, then output the joint view of the adversaries. Otherwise, output $\bot$.
\end{enumerate}

\noindent No-signaling certified deletion for secret sharing requires that for every secret pair $(s_0, s_1)$ and every partition $P = (P_1, \dots, P_\ell)$ of $[n]$, the outputs of $\NSCD_{\bbS}(s_0)$ and $\NSCD_{\bbS}(s_1)$ have negligible trace distance.

\paragraph{Tool: 2-of-2 Secret Sharing with Certified Deletion~\cite{C:BarKhu23}.} Recently, Bartusek and Khurana constructed a variety of primitives with certified deletion. One of these primitives is a secret sharing scheme which splits a secret $s$ into a quantum share $\ket{\sh_1}$ and a classical share $\sh_2$, along with a verification key $\verkey$ that can be used to test the validity of deletion certificates. Given either one of the shares, the secret is hidden. Furthermore, if an adversary given $\ket{\sh_1}$ performs a destructive measurement that yields a valid deletion certificate, then they will never be able to recover $s$, even if they later obtain $\sh_2$. Note that in this scheme, only one of the two shares can be deleted.

\paragraph{A Black-Box Compiler.} We show how to compile Bartusek and Khurana's 2-of-2 certified deletion scheme together with any classical secret sharing scheme into a secret sharing scheme with no-signaling certified deletion. Notably, the compiled scheme inherits the same access structure as the classical secret sharing scheme. Thus, one can construct secret sharing with no-signaling certified deletion for general access structures by using any classical secret sharing scheme for general access structures, e.g.~\cite{ItoSaiNis87,EC:ABFNP19}. 

As a starting point, let us first construct a scheme where only one of the shares can be deleted.
\begin{enumerate}
    \item Secret split the secret $s$ into a quantum share $\ket{\qsh}$ and a classical share $\csh$ using the 2-of-2 secret sharing scheme with certified deletion. This also produces a verification key $\vk$.
    \item Split the 2-of-2 classical share $\csh$ into classical shares $\csh_{1},\dots, \csh_{n}$ using the classical secret sharing scheme for $\bbS$.
    \item The verification key is $\verkey$ and the $i$'th share is $\csh_i$. The deletable quantum share is $\ket{\qsh}$. 
\end{enumerate}
Given the quantum share and any authorized set of classical shares, $s$ can be reconstructed by first recovering $\csh$ from the authorized set.
On the other hand, any adversary which attempts to delete $\ket{\qsh}$ with only access to an unauthorized set of classical shares has no information about the 2-of-2 classical share $\csh$. Thus if they produce a valid deletion certificate, they will have no information about $s$ even after obtaining the rest of the classical shares, which only reveals $\csh$.

\paragraph{Who Deletes?} To finish the compiler, we need to enable certified deletion of \emph{any share}. 
This can be achieved by adding a step at the beginning of the compiler to create $n$ classical shares $\sh_1, \dots, \sh_n$ of $s$ with the same access structure $\bbS$. Then, the splitter can enable certified deletion for each share $\sh_i$ by using the prior compiler to produce a set of classical shares $\csh_{i, 1}, \dots, \csh_{i,n}$, a deletable quantum share $\ket{\qsh_i}$, and a verification key $\verkey_i$. The $i$'th share contains the deletable state $\ket{\qsh_i}$, as well as $\{\csh_{j,i}\}_{j \in [n]}$.

Note that anyone holding share $i$ is able to delete $\sh_i$ by deleting $\ket{\qsh_i}$, as discussed previously. If sufficiently many shares are deleted, so that the only remaining $\sh_i$ form an unauthorized set, then no adversary can learn anything about the secret even after obtaining all of the remaining shares and residual states.

\paragraph{Proof of Security: Guessing Deletions.} Although the intuition is straightforward, there is a nuance in the proof of security. When proving security, we wish to replace the deleted 2-of-2 secrets $\sh_i$ with empty secrets $\bot$. If we could do so, then security immediately reduces to the security of the classical $\bbS$-scheme, since only an unauthorized set of $\sh_i$ remains. However, it is difficult to determine which of these 2-of-2 secrets $\sh_i$ to replace with $\bot$ when preparing the shares. 

Since non-local operations commute, we could consider generating the shares for each adversary $\Adv_i$ one at a time. For example, supposing $\Adv_1$ operates on the set of shares $P_1 \subset [n]$, the experiment could initialize $\Adv_1$ with uniformly random shares, and then for each $i \in P_1$, reverse-sample the shares $\{\csh_{i,j}\}_{j\in[n] \setminus P_1}$ for the rest of the adversaries to match either $\sh_i$ or $\bot$, depending on whether or not $\Adv_1$ deleted share $i$.

Unfortunately, we cannot continue this strategy for all of the adversaries. It may be the case that the union of $\Adv_1$ and $\Adv_2$'s shares $P_1 \cup P_2$ contains an authorized set. Thus, when initializing $\Adv_2$, the challenger must already know whether, for each $i \in P_2$, the $i$'th share of $s$ should be set to $\sh_i$ or $\bot$ (since this will be determined by $\{\csh_{i,j}\}_{j \in P_1 \cup P_2}$). This view is constructed before the adversary decides whether or not to delete share $i$, so the only way for the challenger to do this is to guess whether the adversary will delete share $i$ or not. 

Now, guessing which shares the entire set of adversaries will delete incurs a multiplicative exponential (in $n$) loss in security. Fortunately, Bartusek and Khurana's 2-of-2 scheme actually satisfies an \emph{inverse exponential} trace distance between the adversary's view of any two secrets, after deletion. Thus, by setting the parameters carefully, we can tolerate this exponential loss from guessing, and show that our scheme for general access structures satisfies negligible security. 



\subsection{Adaptive Certified Deletion}\label{sec:overview-adaptive-cd}

Intuitively, any definition of certified deletion should allow the adversary to eventually receive an authorized set of shares, as long as they have previously deleted enough shares so that their total set of undeleted shares remains unauthorized. In no-signaling certified deletion, we allowed multiple non-communicating adversaries to delete from different unauthorized sets of shares. That is, when $\Adv_i$ generates its set of certificates, it only has access to a single unauthorized set $P_i$. However, one could also imagine a more demanding setting where, after deleting some shares, the adversary can adaptively corrupt \emph{new shares}, as long as their total set of undeleted shares remains unauthorized. Then, they can continue deleting shares and corrupting new shares as long as this invariant holds. This setting arises naturally when we consider an adversary which covertly compromises shares over a long period of time, while occasionally deleting shares to avoid revealing the extent of the infiltration. We call this notion \emph{adaptive certified deletion}. It is defined using the following adaptive certified deletion game $\ACD_{\bbS}(s)$.
\begin{enumerate}
    \item The challenger splits the secret $s$ into $n$ shares with access structure $\bbS$. The adversary starts with an empty corruption set $C$ and an empty deletion set $D$.
    
    \item For as many rounds as the adversary likes, it gets to see the shares in $C$ and choose whether to corrupt or delete a new share.

        \textbf{Corrupt a new share.} The adversary corrupts a new share $i$ by adding $i$ to $C$. If $C\backslash D$ is authorized, the experiment immediately outputs $\bot$.
        
        \textbf{Delete a share.} The adversary outputs a certificate $\cert$ for a share $i$. If $\cert$ is valid, add $i$ to $D$. Otherwise, the experiment immediately outputs $\bot$.
        
    \item When the adversary is done, the experiment outputs its view.
\end{enumerate}

Adaptive certified deletion for secret sharing requires that for every secret pair $(s_0, s_1)$, the outputs of $\ACD_{\bbS}(s_0)$ and $\ACD_{\bbS}(s_1)$ have negligible trace distance. In this work, we focus on the $(k,n)$ threshold access structure, where any set of size $\geq k$ is authorized. 

\paragraph{Incomparable Definitions.}
We have already seen that no-signaling certified deletion does not imply adaptive certified deletion. It is also the case that adaptive certified deletion does not imply no-signaling certified deletion. Consider a two-part no-signaling adversary $\Adv_1$ and $\Adv_2$ with views $P_1$ and $P_2$. To change $(\Adv_1, \Adv_2)$ to an adaptive adversary, one would need to come up with a transformation that deletes the same shares as $(\Adv_1, \Adv_2)$, in the same way. However, $\Adv_1$ might not even decide which shares to delete until after they have seen every share in $P_1$. So, the new adaptive adversary would have to corrupt all of $P_1$ before it can delete a single share that $\Adv_1$ would. Similarly, it would also have to corrupt all of $P_2$ before it knows which shares $\Adv_2$ would delete. However, if $P_1\cup P_2$ is authorized, then the experiment would abort before the new adaptive adversary gets the chance to delete shares for both $\Adv_1$ and $\Adv_2$.

\paragraph{An Attack on the No-Signaling Construction.} Unfortunately, the previous construction actually does \emph{not} in general satisfy adaptive certified deletion security. Indeed, observe that the classical parts of each share can never be deleted. Because of this, an adversary could, for any $i$, obtain $k$ classical shares $\csh_{i,1},\dots, \csh_{i,k}$ that reveal $\csh_i$, simply by corrupting and immediately deleting the first $k$ shares one-by-one. Afterwards, the adversary will always have \emph{both} the classical share $\csh_i$ and the quantum share $\ket{\qsh_i}$ when it corrupts a new share $i$, so it can recover the underlying classical share $\sh_i$. Now it can ``delete'' the $i$'th share while keeping $\sh_i$ in its memory. Eventually, it can collect enough $\sh_i$ in order to obtain the secret $s$.


The core problem with the no-signaling construction is the fact that it encodes the 2-of-2 classical shares $\csh$ in a form which can never be deleted.
If we were to take a closer look at Bartusek and Khurana's 2-of-2 scheme, we would observe that $\csh$ contains a mapping $\theta$ of which parts of $\ket{\qsh}$ encode the secret (the ``data indices'') and which parts contain only dummy information used to verify certificates (the ``check indices''). Without $\theta$, there is no way to decode the secret.
Unfortunately, in order to encode $\theta$ in a deletable form, we seem to be back where we started - we need secret sharing with adaptive certified deletion!



\paragraph{A New Construction.} To avoid this pitfall, we take a new approach that allows the parties to reconstruct \emph{without knowledge of the check indices}. This removes the need to encode $\theta$ altogether. To achieve this, we begin with Shamir's secret sharing \cite{Shamir79}, in which the shares are evaluation points of a degree $k-1$ polynomial $f$ where $f(0) = s$. This polynomial is over some finite field $\bbK$ with at least $n+1$ elements. A useful property of Shamir's secret sharing is that it has good error-correcting properties - in fact, it also forms a Reed-Solomon code, which has the maximum possible error correction~\cite{ReedSolomon60}. 

To split a secret $s$, we start by constructing a polynomial $f$ where $f(0) = s$. Each share contains some number of evaluations of $f$ encoded in the computational basis. These evaluations are mixed with a small number of random Fourier basis states that will aid in verifying deletion certificates. The positions of these checks, along with the value encoded, make up the verification key. An example share and key are illustrated here.

\[
\begin{array}{cccccccccccc}
     \sh =  &\ket{f(1)} &\otimes& \ket{f(2)} &\otimes&  \mathsf{QFT}\ket{r_1} &\otimes&  \ket{f(4)} &\otimes& \mathsf{QFT}\ket{r_2} &\otimes&  \dots 
     \\
     \verkey =  &* && * && r_1 && * && r_2 && \dots
\end{array}
\]

\noindent When reconstructing the secret, these checks are essentially random errors in the polynomial evaluation. By carefully tuning the degree of the polynomial together with the number of evaluations and checks in each share, we can ensure that any $k$ shares contain enough evaluation points to correct the errors from the check positions, but that any $k-1$ shares do not contain enough evaluation points to determine the polynomial. This results in $f$ being determined by slightly more than $k-1$ shares worth of evaluations. Additionally, we slightly increase the degree of the polynomial to account for the limited amount of information that an adversary can retain after deletion. See \Cref{sec:adaptive-construction} for more details.

Share deletion and certificate verification follow the established formula. To delete a share, measure it in the Fourier basis and output the result as the certificate. To verify the certificate, check that it matches the verification key at the check positions.

\paragraph{Proving Adaptive Certified Deletion.}\label{sec:overview-proving-adaptive}
Intuitively, we want to show that after the adversary deletes a share, the next share it corrupts gives it no additional information, no matter how many shares the adversary has seen so far. To formalize this, we will show that the adversary cannot distinguish between the real $\ACD_{(k,n)}(s)$ experiment and an experiment in which each share is generated uniformly at random and independently of the others. Since the first $k-1$ shares to be corrupted do not yet uniquely determine the polynomial $f$, they already satisfy this. Thus, we can restrict our attention to modifying the last $n-k+1$ shares to be corrupted.

It will be useful to name the shares in the order they are corrupted or deleted. The $\corctr$'th share to be corrupted is $c_\corctr$, and the $\delctr$'th share to be deleted is share $d_\delctr$. In the $(k,n)$ threshold case, if $c_{k-1+\delctr}$ is corrupted before $d_{\delctr}$ is deleted, then $C\backslash D$ has size $k$ and is authorized, so the experiment will abort.

\paragraph{Techniques from BK23.} We begin by recalling the techniques introduced in \cite{C:BarKhu23} to analyze 2-of-2 secret sharing with certified deletion, along with the construction. These techniques will form the starting point of our proof. To share a single-bit secret $s \in \{0,1\}$, sample random $x, \theta \gets \{0,1\}^\secpar$ and output 
\[
    \sh_1 = H^{\theta} \ket{x},\quad
    \sh_2 = \left(\theta, s \oplus \bigoplus_{i:\theta_i = 0} x_i\right),\quad
    \vk = (x, \theta),
\]

\noindent where $H^\theta$ denotes applying the Hadamard gate $H$ to the $i$'th register for each $i : \theta_i = 1$. Bartusek and Khurana showed that if an adversary given $\sh_1$ produces a certificate $\cert$ such that $\cert_i = x_i$ for every check position $i : \theta_i = 1$, then they cannot distinguish whether $s=0$ or $s=1$ even if they later receive $\sh_2$. Their approach has three main steps. 
\begin{enumerate}
    \item First, they delay the dependence of the experiment on $s$ by initializing $\sh_1$ to be the register $\calX$ in $\sum_{x}\ket{x}^{\calX} \otimes \ket{x}^{\calY}$. Later, the challenger can obtain $x$ by measuring register $\calY$, and use it to derive $s$. 
    
    
    \item Second, they argue that if the adversary produces a valid deletion certificate, then $\sh_1$ has been ``almost entirely deleted'', in the sense that the challenger's copy satisfies a checkable predicate with high probability. Intuitively, this predicate shows that the data positions ($\theta_i = 0$) of the challenger's copy have high joint entropy when measured in the computational basis. To show that the predicate holds, they use the fact that the adversary does not have $\sh_2$ in their view, so $\sh_1$ looks uniformly random. This allows a cut-and-choose argument where the locations of the check indices are determined \emph{after} the adversary outputs its deletion certificate.
    
    \item Finally, they show that the challenger derives a bit $s$ that is uniformly random and independent of the adversary's view. This utilizes a result from \cite{EC:ABKK23} which shows that XOR is a good seedless extractor for entropy sources that satisfy the aforementioned predicate.
\end{enumerate}

\paragraph{Adapting to Secret Sharing.} As a starting point, let us try to adapt these techniques to undetectably change shares to uniformly random. For concreteness, consider the task of switching a share $c_{k-1+\delctr}$ to uniformly random. Although we have not yet outlined the general proof structure, we will eventually need to perform this task. We will use this starting point to gain insights that will help guide the eventual proof structure.

\begin{enumerate}
    \item The first step is to delay the synthesis of the secret information until after the adversary outputs a deletion certificate. In our case, we will delay creating share $c_{k-1+\delctr}$ until after the adversary produces a valid certificate for share $d_{\delctr}$.

    This can be achieved by sampling the first $k-1$ corrupted shares to be uniformly random, then using polynomial interpolation to prepare the rest of the shares. 
    More concretely, consider the first corrupted $k-1$ shares $c_1, \dots, c_{k-1}$. The challenger will prepare each of these shares $c_\corctr$ using two registers $\chalReg_{c_\corctr}$ and $\calS_{c_\corctr}$, then send the share to the adversary in register $\calS_{c_\corctr}$.
    To prepare the $j$'th position of $c_\corctr$, the experiment challenger prepares either a uniform superposition $\sum_{x \in \bbK}\ket{x}^{\calC_{c_\corctr,j}}$ or $\sum_{x \in \bbK}\QFT\ket{x}^{\calC_{c_\corctr,j}}$, depending on whether $j$ is an evaluation position or a check position. If $j$ is an evaluation position for share $c_{\corctr}$, the experiment challenger copies $\calC_{c_{\corctr},j}$ to $\calS_{c_{\corctr},j}$ in the computational basis, yielding \[\propto \sum_{x \in \bbK}\ket{x}^{\calS_{c_\corctr,j}} \otimes \ket{x}^{\calC_{c_\corctr,j}},\]
    and otherwise it copies $\calC_{c_{\corctr},j}$ to $\calS_{c_{\corctr},j}$ in the Fourier basis, yielding \[\propto \sum_{x \in \bbK}\QFT\ket{x}^{\calS_{c_\corctr,j}} \otimes \QFT\ket{x}^{\calC_{c_\corctr,j}}.\] 

    Note that the adversary cannot determine which positions are evaluation positions and which are check positions, since each $\calS_{c_a,j}$ register looks maximally mixed. Also observe that $\chalReg_{c_\corctr}$ contains a copy of share $c_\corctr$, and the evaluation positions in the initial $k-1$ $\chalReg_{c_\corctr}$ registers determine the polynomial $f$. Then, when the adversary requests to corrupt a later share, the challenger computes the evaluation points for that share by performing a polynomial interpolation using its copies of the prior shares. For reasons that will become apparent shortly, we require that share $d_\delctr$ is included when interpolating $c_{k-1+\delctr}$. The other points may be arbitrary.

    The above procedure is actually slightly simplified; since the degree of $f$ is slightly larger than the number of evaluation positions in $k-1$ shares, the first $k-1$ shares do not quite determine $f$. To remedy this, we will also initialize a small portion of \emph{every} $\calS_i$ to be uniformly random, before any interpolation takes place.

    \item Next, we will need to show that $\chalReg_{d_{\delctr}}$, which contains the challenger's copy of share $d_{\delctr}$, satisfies the deletion predicate if $\cert_{d_{\delctr}}$ passes verification.
    This is not hard to show if $d_\delctr$ was generated uniformly at random, but it is not clear what happens if the adversary has some information about where the check positions are in $d_{\delctr}$ before deleting it.
    The first $k-1$ shares are uniformly random in the original experiment, so this is not a problem for any share which is deleted before $c_k$ is corrupted. However, later shares depend on earlier shares, potentially leaking information about the check positions. This will be our first barrier to overcome.
    

    \item Finally, we will need to show that interpolating $c_{k-1+\delctr}$ using $\chalReg_{d_{\delctr}}$ produces a uniformly random value whenever $\chalReg_{d_{\delctr}}$ satisfies the deletion predicate. In other words, \textbf{polynomial interpolation should double as a good randomness extractor from deleted shares}. Fortunately, polynomial interpolation is a matrix multiplication, and we have intuition from the classical setting that linear operations are good randomness extractors.
    Since a small amount of every share is uniformly random ``for free'', the extractor needs to produce only slightly less than a full share's worth of evaluations to produce $c_{k-1+\delctr}$. This is our second technical contribution, which we will revisit in \Cref{sec:overview-extractor}.
\end{enumerate}

In step two, we seem to need the evaluation points in $d_{\delctr}$ to look like the check positions when $d_{\delctr}$ is deleted, i.e. they should be uniformly random and independent of the rest of the adversary's view.
A natural approach to ensure this is to modify the shares to uniformly random round-by-round over a series of hybrid experiments. 
In hybrid $i$, the first $k-1+i$ shares to be corrupted are uniformly random. Since $d_{i+1}$ must be deleted before $c_{k+i}$ is corrupted (or else the experiment aborts), $d_{i+1}$ must have been one of the uniformly random shares. 
Now we can apply the cut-and-choose argument to show that $\chalReg_{d_i}$ satisfies the deletion predicate, thereby satisfying the extractor requirements to change $c_{k+i}$ to be uniformly random and reach hybrid $i+1$.
The first $k-1$ shares are already uniformly random, which gives us an opening to begin making modifications in round $k$.

Unfortunately, the strategy of modifying the shares one-by-one to be uniformly random has a major flaw.  In particular, the challenger needs to produce additional polynomial evaluations whenever the adversary wishes to corrupt another share, which it does via interpolation. Recall that in order to claim that $c_{k-1+i}$ is indistinguishable from random, we apply an extractor which uses register $\chalReg_{d_i}$ as its source of entropy. 
But in order to invoke the security of the extractor, it seems that we cannot allow the challenger to re-use $\chalReg_{d_i}$ when interpolating later points, as this might leak additional information about the source to the adversary. 

To get around this issue, we might require that the challenger never uses $\calC_{d_i}$ again to interpolate later points. However, the randomness extractor outputs \emph{less} randomness than the size of a share. Intuitively, this occurs because the adversary can avoid fully deleting the source share $d_{i}$ by guessing the location of a very small number of check positions.\footnote{One may wonder whether it is possible to instead use coset states for the shares, which provide guarantees of \emph{full} deletion \cite{BGKMRR24}. Unfortunately, coset states induce errors which are the sum of a small number of uniformly random vectors. It is not clear how to correct these errors to reconstruct the secret without prior knowledge of the underlying subspace. However, encoding the subspace brings us back to the original problem of secret sharing with adaptive certified deletion.}
Imperfect deletion limits the entropy of the source, which in turn limits the size of the extractor output. 
Now, since the challenger started with \emph{exactly} enough evaluations to uniquely determine $f$, if we take away the points in $\chalReg_{d_i}$ then there are no longer enough evaluation points remaining to create the rest of the shares, even given the newly interpolated points.  


\paragraph{Predicates First, Replacement Later.} Intuitively, the problem outlined above arises from the possibility that the adversary receives additional information about earlier shares from the later ones, since they are all correlated through the definition of the polynomial $f$. Our first idea to overcome this issue is to prove that the predicate holds for all rounds \emph{before} switching any shares to uniformly random. In particular, we will consider a sequence of hybrid experiments where in the $i$'th hybrid, the challenger performs the predicate measurement on $\chalReg_{d_i}$ after receiving and verifying the corresponding certificate. If the measurement rejects, the experiment immediately aborts and outputs $\bot$. 

If we can undetectably reach the last hybrid experiment, then it is possible to undetectably replace every share with uniform randomness by working backwards. In the last hybrid experiment, either the predicate holds on the challenger's copy $\chalReg_{d_{n-k+1}}$ of share $d_{n-k+1}$ or the experiment aborts. In either case, the last share $c_n$ to be corrupted can be undetectably replaced with uniform randomness. Since no further shares are interpolated, we no longer run into the issue of re-using the randomness source $\chalReg_{d_{n-k+1}}$, allowing the challenger to safely complete the experiment. Then, once $c_n$ is uniformly random, the challenger no longer needs to interpolate shares after $c_{n-1}$, so $c_{n-1}$ can also be replaced with uniform randomness. This argument can be continued until all shares are replaced.

To undetectably transition from hybrid $i$ to hybrid $i+1$, we must show that the predicate measurement returns success with high probability on $\chalReg_{d_{i+1}}$. This is not hard to show for shares which are deleted \emph{before} the $k$'th share is corrupted, because the deleted shares must be one of the shares which were generated uniformly at random. However, it is not clear how to show for shares which are deleted \emph{after} the $k$'th share is corrupted, since this seems to require replacing $c_k$ with uniform randomness, which brings us back to our previous problem.

\paragraph{Chaining Deletions via Truncated Experiments.} Our second insight is the observation that the result of a measurement made when $d_i$ is deleted \emph{is independent of later operations}. Thus, when arguing about the probability that the predicate measurement accepts on $\chalReg_{d_i}$, it is sufficient to argue about the truncated experiment that ends immediately after the predicate measurement on $\chalReg_{d_i}$.
Crucially, the adversary cannot corrupt share $c_{k+i}$ in the truncated experiment without causing the experiment to abort due to $|C\backslash D| \geq k$. Instead, $c_{k-1+i}$ is the last share that can be corrupted.
This prevents the catastrophic re-use of $\chalReg_{d_i}$ after share $c_{k-1+i}$ is constructed.

Let us assume that we have already shown that the deletion predicate measurement accepts on $\chalReg_{d_i}$ with high probability; for example, this clearly holds for $d_1$, which must be corrupted before $c_k$ is corrupted. How likely it is to accept on $\chalReg_{d_{i+1}}$?
Say the deletion predicate measurement accepts on $\chalReg_{d_i}$. Then we can invoke the extractor to undetectably replace share $c_{k-1+i}$ with uniform randomness in the truncated game, since no further shares are corrupted before the game ends.
We can use similar logic to replace each of the first $k-1+i$ shares to be corrupted in the truncated game.
At this point, the adversary has no choice but to delete a uniformly random share as $d_{i+1}$, so we can apply a cut-and-choose argument to show that the predicate holds with high probability on $\chalReg_{d_{i+1}}$
This argument can be repeated inductively to show that the predicate holds in each of the polynomially many rounds.


\paragraph{Recap of the First Challenge.} In summary, the first challenge to address in proving adaptive certified deletion is the possibility of later shares leaking information about prior shares through the re-use of $\chalReg_{d_\delctr}$ in interpolation. This prevents directly replacing each share with uniform randomness. 
To sidestep this issue, we first argue that every $\chalReg_{d_\delctr}$ is a good source of entropy using a series of games which end after $d_\delctr$ is deleted. Then even if $\chalReg_{d_\delctr}$ is used to interpolate both share $c_{k-1+\delctr}$ and $c_{k+\delctr}$, we can rely on the entropy from $\chalReg_{d_{\delctr+1}}$ to mask its re-usage when interpolating $c_{k+\delctr}$.

\subsection{High Rate Seedless Extractors from Quantum Sources of Entropy}\label{sec:overview-extractor}

The final task to finish the proof of adaptive certified deletion security in the previous section is to show that polynomial interpolation is a good randomness extractor for entropy sources formed by deleted shares. Although polynomial interpolation arises quite naturally in our construction, there are additional technical reasons why it would be difficult to design a construction for adaptive certified deletion using existing extractors.

If we were to design a scheme using a \emph{seeded} extractor, as done in \cite{TCC:BroIsl20}, then every deletion would need to be independent of the seed to avoid the entropy source depending on the seed. However, as we saw with the no-signaling construction, safely encoding the seed seems to already require secret sharing with adaptive certified deletion.
\cite{C:BarKhu23} makes use of the seedless XOR extractor developed by \cite{EC:ABKK23} to avoid a similar problem. Unfortunately, the XOR extractor produces only a single bit from a relatively large input. In the case of threshold secret sharing, the extractor must use the randomness produced by deleting a single share to extract an output which is only slightly smaller than a full share.

To address this need, we give a new family of seedless randomness extractors for quantum entropy sources with high rate. These constructions have connections to linear error-correcting codes and may be of independent interest. 

\paragraph{A Family of Extractors.} 
The input of the extractor is a vector of $M$ elements of a finite field $\bbF$, and the output is a vector of $m$ elements of $\bbF$. The source consists of a register $\calX$ which may be arbitrarily entangled with a side-information register $\calA$. If the register $\calX$ is in superposition over Fourier basis vectors with Hamming weight $\leq (M-m)/2$ in $\bbF$, then we can argue that the output $\mathsf{Extract}(\calX)$ is uniformly random, even given $\calA$.\footnote{The Hamming weight over a (potentially non-binary) finite field is being the number of nonzero entries in the vector.}

The extractor family consists of matrices $R \in \bbF^{m \times M}$ such that every set of $m$ columns of $R$ are linearly independent. In other words, $R$ is a parity check matrix for a linear error-correcting code with distance at least $m$. 
An extractor $R$ is applied by coherently multiplying $R$ with the source register $\calX$ in the computational basis and writing the result to the output register.

\paragraph{Application to Polynomial Interpolation.} This family generalizes both the XOR extractor and polynomial interpolation. The XOR extractor can be represented as the all-ones matrix with one row. Each column is non-zero, so the extractor can produce a one-bit output.
In the case of polynomial interpolation, we can write the linear interpolation operator for a polynomial $f$ as a matrix $R$ with $\mathsf{deg}(f)+1$ columns and a number of rows equal to the number of points being interpolated. $R$ is a sub-matrix of a parity check matrix for a Reed-Solomon code, so it falls into the new extractor family. In fact, our result shows that any subset of columns in a polynomial interpolation matrix forms a good randomness extractor for an appropriate randomness source. When interpolating a share $c_{k-1+\delctr}$, we can write the interpolation matrix as $R = [R_1 | R_2]$, where $R_2$ is applied to $d_\delctr$ and $R_1$ is applied to the other points $x$ on the polynomial. Then the new share is $c_{k-1+\delctr} = R_1 x + R_2 d_{\delctr}$. If $d_{\delctr}$ has satisfies the deletion predicate, then our extractor result shows that $R_2 d_{\delctr}$ is uniformly random. Thus, the newly interpolated share $c_{\delctr}$ is also uniformly random.

\paragraph{Removing Entanglement.} A downside of using polynomials for secret sharing is that each evaluation point exists in field $\bbF$ whose size scales with the total number of distinct points that must be defined on the polynomial. For example, $\bbF$ might be $\bbZ_p$ for some prime $p> nt$, where $t$ is the number of evaluations per share.
Using the approach outlined so far, each check position must be encoded in the Fourier basis over the same field $\bbK$. However, a logical $\bbZ_p$-qudit requires $\lceil \log_2(nt+1)\rceil$ qubits, which must all be entangled to produce a Fourier basis element of $\bbZ_p$.

We show how to remove the entanglement of the construction to only use \textit{single-qubit} states, either in the Hadamard basis or in the computational basis. We modify the construction by setting the field $\bbF$ to be the binary extension field with $2^{\lceil\log_2(nt+1)\rceil}$ elements, so that each check position consists of $\lceil\log_2(nt+1)\rceil$ qubits. Then, we \emph{individually} set each of these qubits to be a random Hadamard basis element. The other parts of the construction remain the same. Note that computational basis vectors over $\bbF$ can be encoded as a tuple of computational basis qubits.

Proving the security of this modification requires an expansion of the extractor theorem to allow general finite fields $\bbF$, which may have $p^k$ elements for some prime $p$ and $k\geq 1$. A Fourier basis element for such a field is obtained by applying the quantum Fourier transform over the additive group of $\bbF$, which is $\bbZ_p^k$. In particular, a Fourier basis element of $\bbF$ consists of $k$ Fourier basis elements of $\bbZ_p$. In the case where $p=2$, these are single-qubit Hadamard basis elements.

We emphasize that the \emph{only} change is to modify how Fourier basis elements are defined by allowing general finite fields; both the extractor family and the Hamming weight requirement remain the same (i.e.\ they are still defined with respect to $\bbF$, \emph{not} $\bbZ_p$). 
To gain intuition about the usefulness of this statement, let us consider its application in our secret-sharing construction. Ideally, an honest deleter would measure each qubit of its share in the Hadamard basis. However, since the dealer can only verify check positions, which each consist of $\lceil\log_2(nt+1)\rceil$ qubits, we can only prove a bound on the Hamming weight of the deleted state over $\lceil\log_2(nt+1)\rceil$-sized chunks, which corresponds to $\bbF$. This matches the entropy source requirements of the theorem. On the other hand, the polynomial that the secret is encoded in is also over $\bbF$, so polynomial interpolation must take place over $\bbF$. This matches the extractor family.

\subsection{Open Problems}

Although our results significantly strengthen secret sharing to resist new classes of attacks, we have only scratched the surface of an area with many fascinating open problems. We mention a few of them here.

\begin{itemize}

    \item \textbf{Adaptive Certified Deletion for General Access Structures.} Against adaptive attacks, we construct a secret sharing scheme for the special case of threshold access structures. Is it possible to construct one for \emph{general access structures}?

    \item \textbf{Stronger Definitions.} We prove the security of our schemes against \emph{either} ``distributed'' attacks (i.e.\ no-signaling security) \emph{or} adaptive attacks. Can we (i) formulate natural security definitions that capture both types of attacks, and (ii) prove the security of secret sharing schemes under such all-encompassing definitions?  

    \item \textbf{Public Verification.} The question of publicly verifiable certificates for encryption with certified deletion has seen significant progress recently~\cite{AC:HMNY21,poremba:LIPIcs.ITCS.2023.90,C:BarKhuPor23,BGKMRR24,TCC:KitagawaNY23}. However, the techniques used seem to require the use of a classical secret to decode the plaintext. For secret sharing with certified deletion, this secret would need to also be encoded in a manner that can be certifiably deleted, as mentioned in \Cref{sec:overview-adaptive-cd}. Is it possible to construct secret sharing with \emph{publicly verifiable} certificates of deletion?

    \item \textbf{Other Threshold Primitives.} Aside from secret sharing, there are many other primitives which use thresholds or other access structures. For example, in threshold signatures, any $k$ members may non-interactively sign messages under a secret key split between $n$ parties~\cite{C:DesFra89}. Is it possible to construct \emph{threshold signatures or other threshold primitives} with certified deletion?

    \item \textbf{High Rate Commitments with Certified Deletion.} A commitment with certified deletion allows the committed message to be certifiably and information-theoretically deleted~\cite{C:HMNY22,C:BarKhu23}. However, current approaches either work in the random oracle model or require $\Theta(\secpar)$ qubits to commit to a single bit.
    Our new high-rate extractor (\Cref{lem:extractor}) provides a promising start to reduce the commitment overhead. Unfortunately, the proof technique pioneered by~\cite{C:BarKhu23} for the plain model requires guessing the committed message, which incurs a security loss that is exponential in the size of the message.
    Is it possible to overcome this difficulty and construct commitments with certified deletion that have are \emph{not much larger than the committed message}?
\end{itemize}
\ifsubmission \else
\section{Preliminaries}

\subsection{Quantum Computation}

For any set $S$, an $S$-qudit is a quantum state in the Hilbert space spanned by $\{\ket{s}: s\in S\}$. A quantum register $\calX$ contains some number of qubits. $\ket{x}^{\calX}$ denotes a quantum state $\ket{x}$ stored in register $\calX$. A classical operation $f$ can be applied to a quantum register $\calX$ using the map 
\[
    \ket{x}^{\calX} \otimes \ket{y}^{\calY} \mapsto \ket{x}^{\calX} \otimes \ket{y + f(x)}^{\calY}
\]
We denote the application of this operation as $\calY \gets f(\calX)$.

\begin{lemma}[Gentle Measurement \cite{TIF:Win99}]\label{lem:gentle-measurement}
Let $\rho^X$ be a quantum state and let $(\Pi, \bbI - \Pi)$ be a projective measurement on $X$ such that $\Tr(\Pi\rho) \geq 1 -\delta$. Let 
\[
    \rho' = \frac{\Pi\rho\Pi}{\Tr[\Pi\rho]}
\]
be the state after applying $(\Pi, \bbI - \Pi)$ to $\rho$ and post-selecting on obtaining the first outcome. Then $\tracedist(\rho, \rho') \leq 2\sqrt{\delta}$.
\end{lemma}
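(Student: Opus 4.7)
The plan is to reduce to the pure-state case via purification and then descend back to mixed states using monotonicity of trace distance under partial trace. First, I would introduce a purification $\ket{\psi}^{XR}$ of $\rho$ on $X$ together with an auxiliary register $R$, so that $\Tr_R[\ket{\psi}\bra{\psi}] = \rho$. Applying the projector $\Pi \otimes \bbI_R$ and renormalizing yields a unit vector $\ket{\psi'}^{XR}$ whose reduced state on $X$ is exactly $\rho' = \Pi\rho\Pi/\Tr[\Pi\rho]$; that is, $\ket{\psi'}$ is a purification of $\rho'$.

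Next, I would bound the overlap of these two purifications. A direct calculation gives $|\braket{\psi|\psi'}| = \|(\Pi\otimes\bbI_R)\ket{\psi}\| = \sqrt{\Tr[\Pi\rho]} \geq \sqrt{1-\delta}$, where the final inequality is exactly the hypothesis $\Tr[\Pi\rho] \geq 1-\delta$. For pure states the Fuchs--van de Graaf identity then gives $\tracedist(\ket{\psi}\bra{\psi}, \ket{\psi'}\bra{\psi'}) = \sqrt{1 - |\braket{\psi|\psi'}|^2} \leq \sqrt{\delta}$.

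Finally, to pass from the purifications back down to $\rho$ and $\rho'$, I would invoke the fact that trace distance is non-increasing under partial trace (a special case of monotonicity under CPTP maps). Tracing out $R$ thus yields $\tracedist(\rho, \rho') \leq \sqrt{\delta} \leq 2\sqrt{\delta}$, as desired.

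There is no substantive obstacle in this argument, since every ingredient (existence of purifications, Fuchs--van de Graaf for pure states, and monotonicity of trace distance) is completely standard; if anything, the purification route produces the slightly tighter constant $\sqrt{\delta}$, and the $2\sqrt{\delta}$ appearing in the statement is a convenient loose bound. An alternative proof that avoids purification altogether expands $\rho - \Pi\rho\Pi = \Pi\rho(\bbI - \Pi) + (\bbI - \Pi)\rho\Pi + (\bbI - \Pi)\rho(\bbI-\Pi)$ and bounds each piece using a Hölder-type inequality $\|AB\|_1 \leq \|A\|_2\|B\|_2$ together with $\Tr[(\bbI - \Pi)\rho] \leq \delta$; one then absorbs the normalization factor $1/\Tr[\Pi\rho]$ via the triangle inequality to recover the constant $2\sqrt{\delta}$ stated in the lemma.
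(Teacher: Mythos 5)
Your proof is correct. Note that the paper does not prove this lemma at all --- it is imported verbatim from Winter with only a citation --- so there is no in-paper argument to compare against. Your purification route (purify $\rho$, observe that $(\Pi\otimes\bbI_R)\ket{\psi}/\|(\Pi\otimes\bbI_R)\ket{\psi}\|$ purifies $\rho'$, compute the overlap $\sqrt{\Tr[\Pi\rho]}\ge\sqrt{1-\delta}$, apply Fuchs--van de Graaf for pure states, and push down via monotonicity of trace distance under partial trace) is the standard modern derivation and is sound at every step; as you say, it actually yields the sharper bound $\tracedist(\rho,\rho')\le\sqrt{\delta}$, of which the stated $2\sqrt{\delta}$ is a loosening. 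Your alternative operator-decomposition sketch also closes: the two cross terms each contribute at most $\sqrt{\delta}$ to $\|\rho-\Pi\rho\Pi\|_1$ and the corner term and renormalization each contribute at most $\delta$, which is absorbed into $2\sqrt{\delta}$ since $\delta\le\sqrt{\delta}$ for $\delta\le 1$.
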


\paragraph{Quantum Fourier Transform over Finite Groups.} Let $G$ be a finite cyclic group and let $\omega_{|G|} \coloneqq \exp(\frac{2\pi i}{|G|})$ be a primitive $|G|$'th root of unity.  Let $\calX$ be a register containing a $G$-qudits. The quantum Fourier transform (QFT) over $G$ applied to $\calX$ is the operation
\[
    \ket{x}^\calX \mapsto \sum_{y\in G} \omega_{|G|}^{x y} \ket{y}^{\calX}
\]
Any Abelian group $H$ may be decomposed as a product of cyclic groups $G_1 \times \dots \times G_k$. The QFT over $H$ is the tensor of the QFTs on each of these groups. For example, if $H = G^k$, then the QFT over $H$ is given by the operation
\[
    \ket{x}^\calX \mapsto \sum_{y\in G^k} \omega_{|G|}^{x\cdot y} \ket{y}^{\calX}
\]
When we consider taking a QFT over a finite field, we technically mean taking the QFT over its additive group. For example, if $\bbF$ has order $p^k$ for some prime $p$, then its additive group is $\bbZ_{p}^k$ and the QFT is the mapping above, where $G=\bbZ_p$. 



Fourier transforms are closely related to roots of unity. An $n$'th root of unity $\omega$ is an element of $\bbC$ such that $\omega^n = 1$. $\omega$ is a \emph{primitive} $n$'th root of unity if $\omega^k \neq 1$ for every $k \in [n-1]$. The following claim about the summation of roots of unity will be useful.

\ifsubmission\begin{proposition}\else\begin{claim}\fi\label{claim:sum-roots-unity}
    Let $G$ be a cyclic group and let $\omega \neq 1$ be an $|G|$'th root of unity. Then
    \[
        \sum_{x \in G} \omega^x = 0
    \]
\ifsubmission\end{proposition}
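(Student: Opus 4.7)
The plan is to use the standard symmetry/shift argument: multiplying the sum by $\omega$ simply permutes the terms, so the sum is fixed by multiplication by $\omega$, and since $\omega \neq 1$ this forces the sum to be zero. Concretely, I would first identify $G$ with $\bbZ_{|G|}$ (using that $G$ is cyclic), so that $\omega^x$ is well-defined via a choice of generator and the summation $\sum_{x\in G}\omega^x$ ranges over $\omega^0,\omega^1,\ldots,\omega^{|G|-1}$.

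Let $S \coloneqq \sum_{x\in G} \omega^x$. The key step is the observation that the map $x \mapsto x+1$ is a bijection on $G$, so
\[
    \omega \cdot S \;=\; \sum_{x \in G} \omega^{x+1} \;=\; \sum_{y \in G} \omega^{y} \;=\; S.
\]
Rearranging gives $(\omega - 1) S = 0$, and since $\omega \neq 1$ by hypothesis, we may divide by $\omega - 1$ in $\bbC$ to conclude $S = 0$.

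An entirely equivalent route would be to invoke the finite geometric series formula $\sum_{i=0}^{n-1} \omega^i = (\omega^n - 1)/(\omega - 1)$, and use that $\omega^n = \omega^{|G|} = 1$ so the numerator vanishes while the denominator does not. I would expect no real obstacle here; the only thing to be a bit careful about is that the notation $\omega^x$ for $x\in G$ tacitly picks an isomorphism $G \cong \bbZ_{|G|}$, but the statement is insensitive to this choice since a different generator just permutes the summands.
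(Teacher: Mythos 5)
Your proposal is correct and is essentially identical to the paper's own proof: both multiply the sum by $\omega$, use that $x \mapsto x+1$ permutes $G$, and conclude from $\omega \neq 1$ that the sum vanishes. The remarks about the geometric series and the choice of generator are fine but not needed.
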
\else\end{claim}\fi
\begin{proof}
    \begin{align}
        \omega \sum_{x \in G} \omega^x &=  \sum_{x \in G} \omega^{1 + x}
        \\
        &= \sum_{z \in G} \omega^{z}
    \end{align}
    where $z = 1+x \in G$. Since $\omega \neq 1$ by definition, this can only be the case if $\sum_{x \in G} \omega^x = 0$.
\end{proof}

\subsection{Statistics}
\ifsubmission\begin{proposition}\else\begin{claim}\fi[Hoeffding's Inequality~\cite{Hoeffding1994}]\label{claim:hoeffding}
Let $X_1, \dots, X_n$ be Boolean independent random variables. Let $\mu = \bbE[\sum_{i=1}^{n} X_i]$. Then for every $\delta > 0$,
\[
    \Pr\left[\left|\sum_{i=1}^{n} X_i - \mu ]\right| \geq \delta \right] \leq 2\exp\left(\frac{-2\delta^2}{n}\right)
\]
\ifsubmission\end{proposition}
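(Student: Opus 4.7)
The plan is to establish the claim via the standard Chernoff--Hoeffding argument based on the moment generating function. Let $S = \sum_{i=1}^n X_i$ and $Y_i = X_i - \bbE[X_i]$, so that each $Y_i$ is a centered random variable taking values in $[-1,1]$ (in fact in an interval of length at most $1$, since $X_i \in \{0,1\}$). First I would bound one tail, say $\Pr[S - \mu \geq \delta]$, and then obtain the absolute-value bound by symmetry and a union bound, which contributes the factor of $2$ in front of the exponential.

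To bound a single tail, the standard move is to apply Markov's inequality to $e^{t(S-\mu)}$ for an arbitrary parameter $t > 0$, giving
\[
\Pr[S - \mu \geq \delta] \;\leq\; e^{-t\delta} \cdot \bbE\!\left[e^{t(S-\mu)}\right] \;=\; e^{-t\delta}\prod_{i=1}^n \bbE\!\left[e^{tY_i}\right],
\]
where the factorization uses independence of the $X_i$'s. The key technical step, and the one I expect to be the main obstacle, is Hoeffding's lemma: for any mean-zero random variable $Y$ supported in an interval of length $\ell$, $\bbE[e^{tY}] \leq e^{t^2 \ell^2 / 8}$. For Boolean $X_i$, each $Y_i$ lies in an interval of length $1$, so $\bbE[e^{tY_i}] \leq e^{t^2/8}$. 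The usual proof of Hoeffding's lemma uses convexity of $y \mapsto e^{ty}$ to bound it by the line through its endpoints, takes expectations, and then bounds the resulting function of $t$ by $e^{t^2\ell^2/8}$ via a second-order Taylor expansion of its logarithm; this is the one place where nontrivial calculation is unavoidable.

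Combining, $\Pr[S - \mu \geq \delta] \leq e^{-t\delta + nt^2/8}$. Optimizing over $t > 0$ by setting $t = 4\delta/n$ yields $\Pr[S - \mu \geq \delta] \leq \exp(-2\delta^2/n)$. Applying the same argument to $-Y_i$ bounds the lower tail identically, and a union bound over the two tails produces the advertised factor of $2$, completing the proof.
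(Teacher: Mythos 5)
The paper does not prove this statement at all: it is imported verbatim from the literature with a citation to Hoeffding, so there is no in-paper argument to compare against. Your proposal is the standard and correct Chernoff--Hoeffding derivation --- Markov's inequality applied to $e^{t(S-\mu)}$, factorization of the moment generating function by independence, Hoeffding's lemma with interval length $1$ for Boolean variables giving $\bbE[e^{tY_i}] \leq e^{t^2/8}$, optimization at $t = 4\delta/n$ yielding the exponent $-2\delta^2/n$, and a union bound over the two tails for the factor of $2$ --- and it recovers exactly the constants in the stated bound.
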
\else\end{claim}\fi

\subsection{Polynomials and Reed-Solomon Codes}\label{sec:prelim-poly-RS}

We give some useful facts about polynomials over finite fields and the related Reed-Solomon error-correcting code.

\paragraph{Interpolation.} Lagrange interpolation gives a method of finding every point on a degree $d$ polynomial $f$ over any field $\bbK$, given any $d+1$ points on $f$~\cite{lagrange1901lectures}. In particular, for every degree $d$, there is a linear operation $\Interpolate_d(Y, X)$ 
that takes in a set of $d+1$ pairs $(x, f(x))$ and outputs $f(y)$ for each $y\in Y$. The operation to find a set of $s$ points is described by a matrix $R \in \bbK^{s\times (d+1)}$.

\begin{fact}\label{fact:interpolation-linear-independence}
    Let $R \in \bbK^{s \times (d+1)}$ be an interpolation matrix for a degree $d$ polynomial. Then any $s$ columns of $R$ are linearly independent.
\end{fact}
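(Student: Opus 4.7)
The plan is to use the Lagrange basis to express the matrix entries explicitly, and then reduce linear dependence among columns to a nonzero polynomial of degree at most $d$ having too many roots.

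First, I would unpack the definition of the interpolation matrix. Write the distinct input abscissae as $x_1, \ldots, x_{d+1}$ and the distinct output abscissae as $y_1, \ldots, y_s$ (all distinct from each other and from the $x_i$'s, as is implicit in calling $R$ an interpolation matrix). By standard Lagrange interpolation, the entries are $R_{j,i} = L_i(y_j)$, where $L_i(y) = \prod_{k \neq i}(y - x_k)/(x_i - x_k)$ is the $i$-th Lagrange basis polynomial, of degree exactly $d$, satisfying $L_i(x_{i'}) = \delta_{i,i'}$.

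Next, fix any index set $I \subseteq [d+1]$ with $|I| = s$, and suppose the corresponding $s \times s$ submatrix $R_I$ has $R_I v = 0$ for some coefficient vector $v = (v_i)_{i \in I}$. Define the polynomial
\[
g(y) \;=\; \sum_{i \in I} v_i L_i(y),
\]
which has degree at most $d$. By the kernel assumption, $g(y_j) = 0$ for every $j \in [s]$. By the interpolatory property of the Lagrange basis, $g(x_i) = 0$ for every $i \notin I$ (and $g(x_i) = v_i$ for $i \in I$).

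The key step is now a counting argument: $g$ vanishes at the $(d+1-s) + s = d+1$ pairwise distinct points $\{x_i : i \notin I\} \cup \{y_1, \ldots, y_s\}$, while having degree at most $d$, so $g \equiv 0$. Therefore $v_i = g(x_i) = 0$ for every $i \in I$, which shows $R_I$ has trivial kernel, hence is invertible, and the chosen $s$ columns are linearly independent. I do not anticipate a real obstacle here; the only subtle point is ensuring that the $y_j$'s are distinct from the $x_i$'s for $i \notin I$, which is part of the standing distinctness assumption built into the definition of an interpolation matrix.
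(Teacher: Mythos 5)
Your proof is correct and rests on the same core idea as the paper's: a linear dependency among $s$ columns yields a nonzero polynomial of degree at most $d$ vanishing at $d+1$ distinct points (the $s$ output abscissae plus the $d+1-s$ input abscissae outside the chosen column set), which is impossible. The paper packages this via the augmented matrix $[R \mid -I_s]$ and its kernel rather than writing out the Lagrange basis explicitly, but the counting argument is identical.
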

\begin{proof}
    Consider the matrix $P = [R | -I_s]$. It suffices to show that any $s$ columns of $P$ are linearly independent. Observe that $Py = 0$ if and only if $y_i$ lies on the unique degree $d$ polynomial defined by $y_1, \dots, y_{d+1}$ for every $i \in [d+2, d+1 + s]$. Assume, for the sake of contradiction, that some set of $s$ columns of $P$ were linearly dependent. Then there exists a $y$ with exactly $s$ non-zero values such that $Py = 0$. Since $y$ is in the kernel of $P$, it consists of $d+1 + s$ evaluations of a degree $d$ polynomial. However, this would imply the existence of a degree $d$ polynomial with $d +1 + s - s = d+1$ zeros, which does not exist.
\end{proof}

\paragraph{Reed-Solomon Codes.} A Reed-Solomon code is a error correcting code based on polynomials over a finite field $\bbK$~\cite{ReedSolomon60}. Given a message $m \in \bbK^{d}$, it encodes $m$ as a polynomial $f$ with degree $d+1$, then outputs $s$ evaluations of $f$ as the codeword. For finite field $\bbK$ and degree $d$, there exists an efficient correction procedure $\Correct_{\bbK,d}(X)$ which attempts to recover a polynomial $f$ using a noisy set of evaluation points $X$, e.g.~\cite{WelchBerlekamp86,Gao03}. If $X$ has size $s$ and there are $<(s-d+1)/2$ pairs $(x,y)\in X$ such that $y \neq f(x)$, then $\Correct_{\bbK,d}(X)$ outputs $f$.

\subsection{Secret Sharing}\label{subsec:prelim-ss}

\paragraph{Classical secret sharing.} We introduce the standard notion of a secret sharing scheme, which allows one party to distribute a classical secret $s$ among $n$ parties, such that only certain subsets of parties have the ability to reconstruct the secret $s$. An access structure $\bbS\subseteq \calP([n])$ for $n$ parties is a monotonic set of sets, i.e.\ if $S \in \bbS$ and $S' \supset S$, then $S' \in \bbS$. Any set of parties $S \in \bbS$ is authorized to access the secret. Secret sharing for general monotone access structures was first introduced by \cite{ItoSaiNis87}.

\begin{definition}[Secret Sharing for Monotone Access Structures]\label{def:classical-SS}
A secret sharing scheme is specified by a monotone access structure $\bbS$ over $n$ parties, and consists of two classical algorithms:
\begin{itemize}
    \item $\Split_\bbS(s)$ is a randomized algorithm that takes in a secret $s$, and outputs $n$ shares $\sh_1,\dots,\sh_n$.
    \item $\Reconstruct_\bbS(\{\sh_i\}_{i \in P})$ is a deterministic algorithm that takes in a set of shares $\{\sh_i\}_{i \in P}$ for some $P \subseteq [n]$, and outputs either $s$ or $\bot$.
\end{itemize}

The scheme should satisfy the following notions of correctness and security.

\begin{itemize}
    \item \textbf{Correctness.} For all subsets $P \subseteq [n]$ such that there exists $S \in \bbS$ such that $P \subseteq S$, 
    \[\Pr\left[\Reconstruct(\{\sh_i\}_{i \in P}) =s  : (\sh_1,\dots,\sh_n) \gets \Split_{\bbS}(s)\right] = 1.\]
    \item \textbf{Privacy.} There exists a randomized algorithm $\Sim$ such that for all subsets $P \subseteq [n]$ such that for all $S \in \bbS$, $P \not\subseteq S$, and any $s$, \[\left\{\{\sh_i\}_{i \in P} : (\sh_1,\dots,\sh_n) \gets \Split_{\bbS}(s)\right\} \equiv \left\{\{\sh_i\}_{i \in P} : \{\sh_i\}_{i \in P} \gets \Sim(P)\right\}.\]
\end{itemize}
    
\end{definition}

\paragraph{2-out-of-2 secret sharing with certified deletion} Now, we recall the definition of 2-out-of-2 secret sharing \emph{with certified deletion} as defined by \cite{C:BarKhu23}. A 2-out-of-2 secret sharing scheme is a very special case of secret sharing where the secret is split into two shares such that both shares together determine the secret, but either share individually cannot be used to recover the secret.

\begin{definition}[2-out-of-2 Secret Sharing with Certified Deletion]\label{def:2-2-SS}
We augment the standard notion of secret sharing to include a deletion algorithm $\Delete_{\Twotwo}$ and a verification algorithm $\Verify_{\Twotwo}$. We also specify that one share is quantum and the other share is classical. Finally, we introduce a security parameter $1^\secp$, since our deletion security guarantee will be statistical rather than perfect.
\begin{itemize}
    \item $\Split_{\Twotwo}(1^\secp,s)$ is a quantum algorithm that takes in the security parameter $1^\secp$, a secret $s$, and outputs a quantum share $\ket{\sh_1}$, a classical share $\sh_2$, and a classical verification key $\vk$.
    \item $\Reconstruct_{\Twotwo}(\ket{\sh_1},\sh_2)$ is a quantum algorithm that takes in two shares and outputs the secret $s$.
    \item $\Delete_{\Twotwo}(\ket{\sh_1})$ is a quantum algorithm that takes in a quantum share $\ket{\sh_1}$ and outputs a deletion certificate $\cert$.
    \item $\Verify_{\Twotwo}(\vk,\cert)$ is a classical algorithm that takes in a verification key $\vk$ and a deletion certificate $\cert$ and outputs either $\top$ or $\bot$.
\end{itemize}

Beyond satisfying the standard secret sharing notions of correctness and privacy (\cref{def:classical-SS}) for the 2-out-of-2 access structure, the scheme should satisfy the following properties.

\begin{itemize}
    \item \textbf{Deletion correctness.} For all $\secp \in \bbN$ and $s$,
    \[ \Pr\left[\Verify_{\Twotwo}(\vk,\cert) = \top : \begin{array}{r}(\ket{\sh_1},\sh_2) \gets \Split_{\Twotwo}(1^\secp,s) \\ \cert \gets \Delete_{\Twotwo}(\ket{\sh_1})\end{array}\right] = 1.\]
    \item \textbf{Certified deletion security.} Let $\Adv$ be an adversary, $s$ be a secret, and define the experiment $\TwoSSCD(1^\secp,\Adv,s)$ as follows:
    \begin{itemize}
        \item Sample $(\ket{\sh_1},\sh_2,\vk) \gets \Split_{\Twotwo}(1^\secp,s)$.
        \item Run $(\cert,\calR) \gets \Adv(1^\secp,\ket{\sh_1})$, where $\calR$ is an arbitrary output register.
        \item If $\Verify_{\Twotwo}(\vk,\cert) = \top$, output $(\calR,\sh_2)$, and otherwise output $\bot$.
    \end{itemize}
    Then, for any unbounded adversary $\Adv$ and any pair of secrets $s_0,s_1$, it holds that 
    \[\TD\left[\TwoSSCD(1^\secp,\Adv,s_0), \TwoSSCD(1^\secp,\Adv,s_1)\right] = 2^{-\Omega(\secp)}.\]
\end{itemize}

\end{definition}

\cite{C:BarKhu23} showed the existence of a 2-out-of-2 secret sharing scheme with certified deletion satisfying the above definition. Notice that our certified deletion security definition requires a trace distance of $2^{-\Omega(\secp)}$. While the theorem from \cite{C:BarKhu23} only states a bound of $\negl$, a quick inspection of their proof establishes that they in fact show a bound of $2^{-\Omega(\secp)}$.

\fi
\section{High-Rate Seedless Quantum Extractors}\label{sec:extractor}

In this section, we study seedless extraction of large amounts of entropy from a quantum source. The source of entropy comes from applying a quantum Fourier transform to a state which is ``almost'' a computational basis state. In particular, the source register $\calX$ is in superposition over vectors with low Hamming weight, and may be arbitrarily entangled with a register $\calA$ that contains side-information about the source. Previously, \cite{EC:ABKK23} showed that the XOR function perfectly extracts a single bit of entropy in this setting. However, in order to extract multiple bits of entropy, they resorted to the use of a random oracle. We also remark that the case of \emph{seeded} extraction has been well-studied by, e.g. ~\cite{TCC:RenKon05,C:DFLSS09,C:BouFeh10}.

We describe a general class of extractors that produces multiple truly random elements of any finite field $\bbF$, even conditioned on the side-information register $\calA$. In the case where the finite field has order $p^k$ for a prime $p$, we show that a large amount of entropy is generated even when the quantum Fourier transform is applied by interpreting each element $x\in \bbF_{p^k}$ as a vector $\vect{x}\in \bbF_{p}^k$ and applying the transform mod $p$ to each index (as opposed to applying the transform mod $p^k$ directly to the field element). This feature allows the source to be prepared using less entanglement in our eventual application to secret sharing.

\paragraph{Notation.} The Hamming weight $h_{\bbF}(\vect{v})$ of a vector $\vect{v} \in \bbF^M$ over a finite field $\bbF$ is its number of non-zero positions. We denote vectors $\vect{v}$ and matrices $\vect{R}$ using bold font. 
Since we will be working with elements which can be interpreted as elements of two different fields, we use $(\cdot)_{\bbF}$ to denote that the contents of the parentheses should be interpreted as elements and operations over $\bbF$.
For example, $(x + y)_{\bbF}$ denotes addition of $x$ and $y$ inside the field $\bbF$. If $\vect{x}, \vect{y} \in \bbF^k$, then $(\vect{x} + \vect{y})_\bbF$ denotes vector addition. 
For an extension field $\bbK$ of $\bbF$, a scalar $x\in \bbK$ can also be interpreted as a vector $\vect{x}\in \bbF^k$. In this case, for $x,y \in \bbK$, $(\vect{x}\cdot \vect{y})_{\bbF}$ produces a scalar in $\bbF$. If an element can be interpreted as either a vector or a scalar, we bold it depending on the context of the first operation applied; for example, $(x \vect{y})_{\bbK}$ or $(\vect{x}\cdot \vect{z})_{\bbF}$ for $x\in \bbK$, $\vect{y} \in \bbK^n$, and $\vect{z}\in \bbF^k$.



\begin{theorem}\label{lem:extractor}
    Let $\bbF$ be a finite field of order $p^k$. Let $X = \calX_1, \dots, \calX_M$ be a register containing $M$ $\bbF$-qudits, and consider any quantum state
    \[
        \ket{\gamma}^{\calA,\calX} = \sum_{\vect{u} \in \bbF^M:h_{\bbF}(\vect{u})<\frac{M-m}{2}} \ket{\psi_{\vect{u}}}^\calA \otimes \ket{\vect{u} + \vect{w}}^{\calX}
    \]
    for some integer $m \leq M$ and some fixed string $\vect{w}\in \bbF^M$.
    Let $\vect{R}\in \bbF^{m \times M}$ be a matrix such that every set of $m$ columns of $\vect{R}$ are linearly independent.
    
    Let $\rho^{\calA,\calY}$ be the mixed state that results from the following procedure:
    \begin{enumerate}
        \item Apply a quantum Fourier transform over $\bbF$'s additive group $\bbZ_{p}^k$ to each register $\calX_i$. In other words, interpret $\calX_i$ as a sequence of registers $\calX_{i,1},\dots, \calX_{i,k}$ containing $\bbZ_p$-qudits, then apply a quantum Fourier transform mod $p$ to each $\calX_{i,j}$.
        \item Initialize a fresh register $\calY$, and apply the isometry $\ket{\vect{x}}^{\calX} \mapsto \ket{\vect{x}}^{\calX}\otimes \ket{\vect{Rx}}^{\calY}$.
        \item Trace out register $\calX$.
    \end{enumerate}

    \noindent Then 
    \[
        \rho^{\calA, \calY} = \Tr^{X}[\ket{\gamma}\bra{\gamma}] \otimes \left(\frac{1}{|\bbF|^m} \sum_{\vect{y}\in \bbF^m} \ket{\vect{y}}\bra{\vect{y}} \right).
    \]
\end{theorem}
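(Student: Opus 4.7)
The plan is to track the state through the QFT and isometry, then trace out $\calX$, and show that tracing kills all cross-terms between distinct $\vect{u}, \vect{u}'$ in the expansion of $\ket{\gamma}$. Since the QFT mod $p$ applied to each of the $k$ component $\bbZ_p$-qudits of $\calX_i$ produces phases of the form $\omega_p^{(\vect{v}\cdot\vect{x})_{\bbF_p}}$, where for $\vect{v}, \vect{x}\in \bbF^M$ we write $(\vect{v}\cdot\vect{x})_{\bbF_p}$ for the $\bbF_p$-valued pairing obtained by identifying each component of $\bbF$ with a vector in $\bbF_p^k$, a direct computation gives
\[
\rho^{\calA, \calY} \;\propto\; \sum_{\vect{y}\in \bbF^m} \ket{\vect{y}}\bra{\vect{y}}^{\calY} \otimes \sum_{\vect{u}, \vect{u}'} S_{\vect{y}}(\vect{u}-\vect{u}')\, \ket{\psi_\vect{u}}\bra{\psi_{\vect{u}'}}^{\calA},
\]
where $S_\vect{y}(\vect{v}) \coloneqq \sum_{\vect{x}: (\vect{R}\vect{x})_\bbF = \vect{y}} \omega_p^{(\vect{v}\cdot \vect{x})_{\bbF_p}}$ and the sum over $\vect{u}, \vect{u}'$ is restricted to the support of $\ket{\gamma}$. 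Because $h_\bbF(\vect{u}-\vect{u}') \leq h_\bbF(\vect{u}) + h_\bbF(\vect{u}') < M-m$ for such indices, it suffices to show $S_\vect{y}(\vect{v}) = |\bbF|^{M-m}$ when $\vect{v}=0$ and $S_\vect{y}(\vect{v}) = 0$ for every nonzero $\vect{v}$ with $h_\bbF(\vect{v}) < M-m$.

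The $\vect{v}=0$ case is just the size of the fiber of $\vect{R}$. For nonzero $\vect{v}$, fixing a particular preimage $\vect{x}_0$ of $\vect{y}$ factors $S_\vect{y}(\vect{v})$ into a global phase times the character sum $\sum_{\vect{z}\in C}\omega_p^{(\vect{v}\cdot\vect{z})_{\bbF_p}}$ over the $\bbF$-linear code $C \coloneqq \ker(\vect{R})$. By character orthogonality on the additive group of $C$, this sum vanishes unless the character $\vect{z}\mapsto \omega_p^{(\vect{v}\cdot\vect{z})_{\bbF_p}}$ is trivial on $C$. Using non-degeneracy of the trace form $T(x,y) \coloneqq \Tr_{\bbF/\bbF_p}(xy)$, I would rewrite the character as $\omega_p^{\Tr_{\bbF/\bbF_p}((\vect{v}'\cdot\vect{z})_\bbF)}$ for a unique $\vect{v}'\in \bbF^M$, and verify coordinate-wise that $v'_i = 0 \iff \vect{v}_i = 0$, so in particular $h_\bbF(\vect{v}') = h_\bbF(\vect{v})$. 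Triviality on $C$ then reduces to $(\vect{v}'\cdot \vect{z})_\bbF = 0$ for all $\vect{z}\in C$: if the $\bbF$-linear functional $\vect{z}\mapsto (\vect{v}'\cdot\vect{z})_\bbF$ were nonzero on $C$, its image would be all of $\bbF$, and summing $\omega_p^{\Tr(\cdot)}$ over the balanced fibers in $\bbF$ yields zero by \cref{claim:sum-roots-unity}. So the character is trivial on $C$ precisely when $\vect{v}' \in C^\perp$, the $\bbF$-dual of $C$ (equivalently, the row-space of $\vect{R}$).

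To rule this out, I would invoke the hypothesis that any $m$ columns of $\vect{R}$ are $\bbF$-linearly independent. This forces $C$ to have $\bbF$-minimum distance at least $m+1$; combined with $\dim_\bbF C = M-m$, Singleton's bound is saturated, so $C$ is a maximum-distance-separable (MDS) code. The dual of an MDS code is again MDS, so $C^\perp$ has $\bbF$-minimum distance exactly $M-m+1$, and every nonzero element of $C^\perp$ has weight at least $M-m+1 > h_\bbF(\vect{v}')$, a contradiction. Hence $S_\vect{y}(\vect{v}) = 0$ for all nonzero $\vect{v}$ in the relevant range, leaving only the $\vect{u}=\vect{u}'$ terms with coefficient $S_\vect{y}(0) = |\bbF|^{M-m}$ uniformly in $\vect{y}$; combined with $\sum_\vect{u}\ket{\psi_\vect{u}}\bra{\psi_\vect{u}} = \Tr^X[\ket{\gamma}\bra{\gamma}]$ (which holds because the kets $\ket{\vect{u}+\vect{w}}$ are orthonormal as $\vect{u}$ varies), this gives the claimed tensor-product form.

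The main obstacle I anticipate is bridging the $\bbF_p$-level character structure that emerges from the coordinate-wise QFT and the $\bbF$-level MDS structure of $C$. The trace-form rewriting and the coordinate-wise support preservation of the map $\vect{v}\mapsto \vect{v}'$ are the delicate ingredients that make this bridge work uniformly over proper extension fields $\bbF = \bbF_{p^k}$, which is exactly the generality needed for the theorem's application to qubit-based (rather than qudit-based) threshold secret sharing.
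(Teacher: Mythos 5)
Your proposal is correct, and the opening reduction (push the QFT through, read off the coefficient of each cross-term $\ket{\psi_{\vect{u}}}\bra{\psi_{\vect{u}'}}$ as the exponential sum $S_{\vect{y}}(\vect{u}-\vect{u}')$ over the fiber $\{\vect{x}: (\vect{R}\vect{x})_{\bbF}=\vect{y}\}$) matches the paper exactly. Where you diverge is in how you kill the nonzero-$\vect{v}$ terms, and the divergence is genuine. The paper (its \cref{claim:extractor-subclaim}) constructs an explicit vector $\vect{z}\in\ker(\vect{R})$ supported on $J\cup\{i\}$ with $\vect{z}_i=1$, slices the affine fiber into lines $\{\vect{x}+c\vect{z}\}_{c\in\bbF}$, and cancels each line by isolating a single $\bbZ_p$-coordinate $t$ of $\vect{u}_i$ with $u_{i,t}\neq 0$ and invoking \cref{claim:sum-roots-unity}; this is elementary and stays entirely at the level of one hand-picked direction. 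You instead factor the whole sum into a character sum over the code $C=\ker(\vect{R})$, convert the coordinate-wise $\bbF_p$-pairing into the trace form via a support-preserving map $\vect{v}\mapsto\vect{v}'$, and rule out triviality of the character by the MDS weight bound on $C^\perp=\mathsf{rowspan}(\vect{R})$. Your route is more structural and makes transparent exactly what the paper's remark after its claim alludes to: over a proper extension field the naive condition ``$\vect{v}\notin\mathsf{rowspan}(\vect{R})$'' is not directly usable because the relevant character lives at the $\bbF_p$ level, but the weight restriction survives the passage $\vect{v}\mapsto\vect{v}'$ and is what actually certifies $\vect{v}'\notin C^\perp$. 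The paper's argument buys self-containedness (no appeal to MDS duality or trace non-degeneracy); yours buys a cleaner conceptual picture and, as a bonus, a slightly sharper statement (the cross-term vanishes whenever $h_{\bbF}(\vect{v})\leq M-m$ and $\vect{v}\neq 0$, with no need to exhibit the index set $J$ explicitly). I would only note that the appeal to ``the dual of an MDS code is MDS'' can be replaced by a one-line direct argument — a nonzero row combination $\vect{a}^{T}\vect{R}$ vanishing on $m$ coordinates would make $\vect{a}$ orthogonal to $m$ independent columns — but this is a matter of taste, not a gap.
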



\begin{remark} As an example, consider $\bbF$ to be the field with $2^n$ elements. Note that for the source, the Hamming weight is taken over the larger field $\bbF$, but the quantum Fourier transform is done over the individual qubits, which in this case makes it just a Hadamard gate.
The extractor $R$ operates over $\bbF$ and produces an output in $\bbF^m$.
\end{remark}

\begin{proof}
    First, we apply the Fourier transform to $\ket{\gamma}$ to obtain 

    \[\sqrt{|\bbF|^{-M}} \sum_{\vect{u} \in \bbF^M:h_{\bbF}(\vect{u})<\frac{M-m}{2}} \ket{\psi_{\vect{u}}}^{\calA} \otimes \sum_{\vect{x} \in \bbF^M} \omega_{p}^{((\vect{u}+\vect{w})_{\bbF}\cdot \vect{x})_{\bbZ_p}} \ket{\vect{x}}^{\calX},\] where $\omega_{p}$ is a primitive $p$'th root of unity. Next, after applying the extractor, but before tracing out $\calX$, the state becomes
    \begin{align}
        &\sqrt{|\bbF|^{-M}} \sum_{\vect{x}\in \bbF^{M}} 
            \left(\sum_{\vect{u} \in \bbF^M:h_{\bbF}(\vect{u})<\frac{M-m}{2}} \omega_{p}^{((\vect{u}+\vect{w})_{\bbF}\cdot \vect{x})_{\bbZ_p}}\ket{\psi_{\vect{u}}}^\calA \right) 
            \otimes \ket{\vect{x}}^\calX 
            \otimes \ket{\vect{R}\vect{x}}^{\calY}
        \\
        &\coloneqq \sqrt{|\bbF|^{-M}} \sum_{\vect{x}\in \bbF^{M}} \ket{\phi_{\vect{x}}}^\calA \otimes \ket{\vect{x}}^\calX \otimes \ket{\vect{R}\vect{x}}^{\calY}.
    \end{align}
     Since the additive group of $\bbF$ is $\bbZ_p^k$, for every $\vect{u}, \vect{w}\in \bbF^M$ and $\vect{x}\in \bbZ_p^{kM}$, we have 
    \[
        ((\vect{u}+\vect{w})_{\bbF}\cdot \vect{x})_{\bbZ_p} = ((\vect{u}+\vect{w})_{\bbZ_p^k}\cdot \vect{x})_{\bbZ_p} = ((\vect{u}+\vect{w})\cdot \vect{x})_{\bbZ_p}.
    \]
    Tracing out register $\calX$ yields
    \begin{align}
        \rho^{\calA, \calY} &= |\bbF|^{-M} \sum_{\vect{x}\in \bbF^{M}} \ket{\phi_{\vect{x}}}\bra{\phi_{\vect{x}}} 
            \otimes \ket{\vect{R}\vect{x}}\bra{\vect{R}\vect{x}}
        \\
        &= |\bbF|^{-M} \sum_{\substack{\vect{y}\in \bbF^m \\ \vect{x}\in \bbF^{M}: (\vect{R}\vect{x})_{\bbF} = \vect{y}}} \ket{\phi_{\vect{x}}}\bra{\phi_{\vect{x}}} 
            \otimes \ket{\vect{y}}\bra{\vect{y}}
        \\
        &= |\bbF|^{-M} \sum_{\substack{\vect{y}\in \bbF^m \\ \vect{x}\in \bbF^{M}: (\vect{R}\vect{x})_{\bbF} = \vect{y}}} 
            \left( \sum_{\substack{\vect{u}_1, \vect{u}_2 \in \bbF^M:\\ h_{\bbF}(\vect{u}_1),h_{\bbF}(\vect{u}_2) \leq \frac{M-m}{2}}} \omega_{p}^{((\vect{u}_1+\vect{w}) \cdot \vect{x})_{\bbZ_p}} \overline{\omega_{p}^{((\vect{u}_2 + \vect{w}) \cdot \vect{x})_{\bbZ_p}}} \ket{\phi_{\vect{u}_1}}\bra{\phi_{\vect{u}_2}}\right) 
            \otimes \ket{\vect{y}}\bra{\vect{y}}
        \\
        &= \sum_{\substack{\vect{u}_1, \vect{u}_2 \in \bbF^M:\\ h_{\bbF}(\vect{u}_1),h_{\bbF}(\vect{u}_2) \leq \frac{M-m}{2}}} 
            \ket{\phi_{\vect{u}_1}}\bra{\phi_{\vect{u}_2}}
            \otimes \left(|\bbF|^{-M} \sum_{\vect{y}\in \bbF^m} \ket{\vect{y}}\bra{\vect{y}} \sum_{\vect{x}\in \bbF^{M}: (\vect{R}\vect{x})_{\bbF} = \vect{y}} \omega_{p}^{((\vect{u}_1 + \vect{w}) \cdot \vect{x} - (\vect{u}_2 + \vect{w}) \cdot \vect{x} )_{\bbZ_p}}\right)
        \\
        &= \sum_{\substack{\vect{u}_1, \vect{u}_2 \in \bbF^M:\\ h_{\bbF}(\vect{u}_1), h_{\bbF}(\vect{u}_2) \leq \frac{M-m}{2}}} 
            \ket{\phi_{\vect{u}_1}}\bra{\phi_{\vect{u}_2}}
            \otimes \left(|\bbF|^{-M} \sum_{\vect{y}\in \bbF^m} \ket{\vect{y}}\bra{\vect{y}} \sum_{\vect{x}\in \bbF^{M}: (\vect{R}\vect{x})_{\bbF} = \vect{y}} \omega_{p}^{((\vect{u}_1 - \vect{u}_2) \cdot \vect{x})_{\bbZ_p}}\right).
    \end{align}

    Next, we apply \Cref{claim:extractor-subclaim}, proven below, to show that every $\ket{\phi_{\vect{u}_1}}\bra{\phi_{\vect{u}_2}}$ term where $\vect{u}_1 \neq \vect{u}_2$ has coefficient 0. To see this, consider any such $\vect{u}_1, \vect{u}_2$ and the value $\vect{u} = (\vect{u}_1 - \vect{u}_2)_{\bbZ_p} = (\vect{u}_1 - \vect{u}_2)_{\bbF}$. Condition 1 is satisfied by $\vect{u}$ since $\vect{u}_1\neq \vect{u}_2$. Condition 2 is satisfied since $h_{\bbF}(\vect{u}) \leq h_{\bbF}(\vect{u}_1) + h_{\bbF}(\vect{u}_2) \leq M - m$, so there are at least $m$ indices of $\vect{u}$ which are zero. Finally, condition 3 is satisfied since any $m$ columns of $R$ are linearly independent. 

    Finally, noting that if $\vect{u}_1 = \vect{u}_2$, then the coefficient of $\ket{\phi_{\vect{u}_1}}\bra{\phi_{\vect{u}_1}}\otimes \ket{\vect{y}}\bra{\vect{y}}$ is the number of solutions to $(\vect{R}\vect{x})_{\bbF} = \vect{y}$, which is $|\bbF|^{M-m}$, we conclude that 
    \begin{align}
        \rho^{\calA, \calY} 
        &= \sum_{\vect{u} \in \bbF^M: h_{\bbF}(\vect{u}) \leq \frac{M-m}{2}} 
            \ket{\phi_{\vect{u}}}\bra{\phi_{\vect{u}}}
            \otimes \left(|\bbF|^{-m} \sum_{\vect{y}\in \bbF^m} \ket{\vect{y}}\bra{\vect{y}}\right)
        \\
        &= \Tr^{X}[\ket{\gamma}\bra{\gamma}]
            \otimes \left(|\bbF|^{-m} \sum_{\vect{y}\in \bbF^m} \ket{\vect{y}}\bra{\vect{y}}\right).
    \end{align}

\ifsubmission\qed\fi\end{proof}

\ifsubmission\begin{proposition}\else\begin{claim}\fi\label{claim:extractor-subclaim}
    Let $\vect{u}\in \bbF^M$ and $\vect{y}\in \bbF^m$, and suppose that 
    \begin{enumerate}
        \item $\vect{u}_i \neq 0$ for some index $i$.
        \item There exists a set $J\subseteq [0,\dots, M-1]$ of size $m$ such that for every $j\in J$, $\vect{u}_j = 0$.
        \item The submatrix $\vect{R}_J$ consisting of the columns of $\vect{R}$ corresponding to $J$ has full rank.
    \end{enumerate}
    Then
    \[
        \sum_{\vect{x}\in \bbF^M: (\vect{R}\vect{x})_\bbF = \vect{y}} \omega_{p}^{(\vect{u} \cdot \vect{x})_{\bbZ_p}} = 0.
    \]
\ifsubmission\end{proposition}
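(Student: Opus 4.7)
The plan is to parameterize the affine solution set $\{\vect{x} \in \bbF^M : (\vect{R}\vect{x})_{\bbF} = \vect{y}\}$ by its free coordinates, split the phase into a part that depends only on the free coordinates and a part that vanishes under the hypothesis $\vect{u}_J = 0$, and then recognize what remains as a product of classical character sums, at least one of which is nontrivial and vanishes by \Cref{claim:sum-roots-unity}.

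First, I would set $\bar J \coloneqq \{0,\dots,M-1\}\setminus J$. By condition 3, $\vect{R}_J$ is an invertible $m\times m$ matrix over $\bbF$, so for every $\vect{x}_{\bar J}\in \bbF^{M-m}$ there is a unique completion $\vect{x}_J = \vect{R}_J^{-1}(\vect{y} - \vect{R}_{\bar J}\vect{x}_{\bar J})$ such that the resulting $\vect{x}$ satisfies $(\vect{R}\vect{x})_\bbF = \vect{y}$. This gives a bijection between $\bbF^{M-m}$ and the solution set, so the sum reparameterizes as a sum indexed by the free block $\vect{x}_{\bar J}$.

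Next, I would unpack the notation $(\vect{u}\cdot \vect{x})_{\bbZ_p}$. Regarding each $\vect{u}_i, \vect{x}_i \in \bbF$ as a vector in $\bbZ_p^k$, the phase decomposes as $\sum_{i=0}^{M-1}(\vect{u}_i\cdot \vect{x}_i)_{\bbZ_p}$. By condition 2 the terms indexed by $i\in J$ vanish identically, so the phase depends only on $\vect{x}_{\bar J}$; in particular the dependent block $\vect{x}_J$ drops out entirely. The sum thus reduces to
\[
    \sum_{\vect{x}_{\bar J}\in \bbF^{M-m}} \omega_p^{(\vect{u}_{\bar J}\cdot \vect{x}_{\bar J})_{\bbZ_p}}.
\]

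Finally, conditions 1 and 2 together force $\vect{u}_{\bar J}\neq 0$ as an element of $\bbF^{M-m}$, hence as an element of $\bbZ_p^{k(M-m)}$, so some coordinate $(\vect{u}_i)_\ell\in \bbZ_p$ with $i\in \bar J$ is nonzero. After identifying $\bbF^{M-m}$ with $\bbZ_p^{k(M-m)}$, the character factors into a product of independent single-coordinate sums over $\bbZ_p$. The factor corresponding to $(\vect{x}_i)_\ell$ is $\sum_{t\in\bbZ_p}\omega_p^{(\vect{u}_i)_\ell \cdot t}$; since $(\vect{u}_i)_\ell\neq 0$ in the prime field $\bbZ_p$, the base $\omega_p^{(\vect{u}_i)_\ell}$ is a nontrivial $p$-th root of unity, and this factor vanishes by \Cref{claim:sum-roots-unity}, killing the whole product. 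The main thing to be careful about is simply keeping the $\bbF$-structure (used for the linear constraint and the full-rank hypothesis) separate from the $\bbZ_p$-structure (used in the phase); once condition 2 is invoked to eliminate the dependence on $\vect{x}_J$, the rest is essentially forced.
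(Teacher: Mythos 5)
Your proof is correct, and it takes a slightly different route from the paper's. The paper fixes the single nonzero index $i$ of $\vect{u}$, constructs one kernel vector $\vect{z}$ of $\vect{R}$ supported on $J\cup\{i\}$ with $\vect{z}_i=1$ (possible by the invertibility of $\vect{R}_J$), partitions the solution set $S_{\vect{y}}$ into one-dimensional cosets $\{\vect{x}+c\vect{z}\}_{c\in\bbF}$, and shows each line's sum vanishes by decomposing $c\in\bbZ_p^k$ along a coordinate where $\vect{u}_i$ is nonzero. You instead use the invertibility of $\vect{R}_J$ to parameterize all of $S_{\vect{y}}$ by the free block $\vect{x}_{\bar J}\in\bbF^{M-m}$, observe that condition 2 makes the phase independent of the dependent block $\vect{x}_J$, and then factor the resulting full character sum over $\bbZ_p^{k(M-m)}$ into single-coordinate sums, one of which is nontrivial. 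Both arguments rest on the same essential point --- that $\vect{u}_J=0$ decouples the $\bbZ_p$-phase from the $\bbF$-linear constraint, which is exactly what lets the claim survive the passage to extension fields --- but your global parameterization avoids constructing the explicit kernel vector and replaces the coset-by-coset analysis with a one-shot product factorization; the paper's line decomposition is more local but isolates the single direction in which cancellation happens. Your closing remark about keeping the $\bbF$-structure and the $\bbZ_p$-structure separate is precisely the subtlety the paper's own remark after the claim is flagging.
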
\else\end{claim}\fi

\begin{remark}
We note that in the case that $\bbF = \bbF_p$, then the above expression actually holds for any $\vect{u} \notin \mathsf{rowspan}(\vect{R})$, which follows from a standard argument. The three conditions above do imply that $\vect{u} \notin \mathsf{rowspan}(\vect{R})$, but are more restrictive. We take advantage of the extra restrictions in order to prove that the expression holds even in the case where $\bbF$ is an extension field of $\bbF_p$.
\end{remark}

\begin{proof}
    Our strategy will be to partition the affine subspace $S_{\vect{y}} = \{\vect{x}\in \bbF^{M}: (\vect{R}\vect{x})_{\bbF} = \vect{y}\}$ into parallel lines, and then claim that the sum over each line is $0$. 

    To begin, define a vector $\vect{z} \in \bbF^M$ so that 
    \begin{itemize}
        \item $\vect{z}_i = 1$,
        \item $\vect{z}_j = 0$ for all $j \notin J \cup \{i\}$, and 
        \item $(\vect{R}\vect{z})_\bbF = 0^m$,
    \end{itemize}
    which is possible because the $m \times m$ submatrix $\vect{R}_J$ has full rank.
    By construction, we have that for any $c \in \bbF$, 
    \begin{equation}\label{eq:simplifyDot}
        (\vect{u} \cdot (c\vect{z})_{\bbF})_{\bbZ_p} 
        = 
        \left(\vect{u}_i \cdot (c \vect{z}_i)_{\bbF} + \sum_{j\in J} \vect{u}_{j} \cdot (c \vect{z}_j)_{\bbF} + \sum_{j\notin J \cup \{i\}} \vect{u}_{j} \cdot (0)_{\bbF}\right)_{\bbZ_p} = (\vect{u}_i \cdot \vect{c})_{\bbZ_p},
    \end{equation}
    where note that in the final expression, $\vect{u}_i$ and $\vect{c}$ are interpreted as vectors in $\bbZ_p^k$.
    
    
    
    
    
    Now, fix any $\vect{x} \in S_{\vect{y}}$ and $c \in \bbF$. Then we have that $(\vect{x} + c\vect{z})_{\bbF} \in S_{\vect{y}}$, since $(\vect{R}(\vect{x} + c\vect{z}))_{\bbF} = \vect{y} + 0$. Therefore, we can partition $S_{\vect{y}}$ into one-dimensional cosets (lines) of the form $\{\vect{x} + c\vect{z}\}_{c \in \bbF}$.

    We now show that the sum over any particular coset is $0$, i.e. that for any $\vect{x} \in S_{\vect{y}}$,
    \[
        \sum_{c \in \bbF} \omega_{p}^{(\vect{u} \cdot (\vect{x} + c \vect{z})_{\bbF})_{\bbZ_p}} = 0.
    \]

    Since the additive group of $\bbF$ is $\bbZ_p^k$, by \cref{eq:simplifyDot} we have that
    \begin{align*}
        (\vect{u} \cdot (\vect{x} + c \vect{z})_{\bbF})_{\bbZ_p} 
            &= (\vect{u} \cdot \vect{x} + \vect{u} \cdot (c \vect{z})_{\bbF})_{\bbZ_p}
            \\
            &= (\vect{u} \cdot \vect{x} + \vect{u}_{i} \cdot \vect{c})_{\bbZ_p}
    \end{align*}
    We now view $\vect{u}_i\in \bbF$ as a vector $\vect{u}_i = u_{i,0},\  \dots,\  u_{i,k-1} \in \bbZ_{p}^k.$
    In particular, since $\vect{u}_i\neq 0$, there exists an index $t$ such that $u_{k,t} \neq 0 \in \bbZ_p$. 
    By also interpreting $\vect{c}\in \bbF$ as an element of $\bbZ_{p}^{k}$, we can decompose it as $\vect{c} = (\vect{c'} + c_t \vect{e}_t)_{\bbZ_p}$, where $\vect{c'} \in \bbZ_{p}^{k}$ is such that $\vect{c}'_t = 0$, where $c_t \in \bbZ_p$, and where $\vect{e_t}\in \bbZ_{p}^{k}$ is the $t$'th standard basis vector. Therefore 
    \[
        (\vect{u} \cdot (\vect{x} + c \vect{z})_{\bbF})_{\bbZ_p} = (\vect{u} \cdot \vect{x} + \vect{u}_i \cdot \vect{c'} + u_{i,t} c_t)_{\bbZ_p}.
    \]

    Since $u_{i,t} \neq 0$ and $\omega_{p}$ is a primitive $p$'th root of unity, we know that $\omega_{p}^{u_{i,t}} \neq 1$ is a $p$'th root of unity. Therefore by \Cref{claim:sum-roots-unity},
    \begin{align}
        \sum_{c \in \bbF} 
            \omega_{p}^{(\vect{u} \cdot (\vect{x} + \vect{c} \vect{z})_{\bbF})_{\bbZ_p}}
        &= \sum_{\vect{c'} \in \bbZ_{p}^{k}:\vect{c}'_t = 0}  \omega_{p}^{(\vect{u} \cdot \vect{x} + \vect{u}_i \cdot \vect{\vect{c}'})_{\bbZ_{p}}} \cdot 
            \sum_{c_t \in \bbZ_p} \left(\omega_{p}^{u_{i,t}}\right)^{c_t}
        \\
        &= \sum_{\vect{c}' \in \bbZ_{p}^{k}:\vect{c}'_t = 0} \omega_{p}^{(\vect{u} \cdot \vect{x} + \vect{u}_i \cdot \vect{c}')_{\bbZ_{p}}}  \cdot 0\\
        &= 0.
    \end{align}
\ifsubmission\qed\fi\end{proof}
\section{Definitions of Secret Sharing with Certified Deletion}\label{sec:definitions}

A secret sharing scheme with certified deletion augments the syntax of a secret sharing scheme with additional algorithms to delete shares and verify deletion certificates. We define it for general access structures. As described in \cref{subsec:prelim-ss}, an access structure $\bbS\subseteq \calP([n])$ for $n$ parties is a monotonic set of sets, i.e.\ if $S \in \bbS$ and $S' \supset S$, then $S' \in \bbS$. Any set of parties $S \in \bbS$ is authorized to access the secret. A simple example of an access structure is the threshold structure, where any set of at least $k$ parties is authorized to access the secret. We denote this access structure as $(k, n)$.

\begin{definition}[Secret Sharing with Certified Deletion]\label{def:sscd-syntax}
    A secret sharing scheme with certified deletion is specified by a monotone access structure $\bbS$ over $n$ parties, and consists of four algorithms:
    \begin{itemize}
        \item $\Split_\bbS(1^\secpar, s)$ takes in a secret $s$, and outputs $n$ share registers $\calS_1, \dots, \calS_n$ and a verification key $\verkey$.
        \item $\Reconstruct_\bbS(\{S_i\}_{i\in P})$ takes in set of share registers for some $P \subseteq [n]$, and outputs either $s$ or $\bot$.
        \item $\Delete_\bbS(\calS_i)$ takes in a share register and outputs a certificate of deletion $\cert$.
        \item $\Verify_\bbS(\verkey, i, \cert)$ takes in the verification key $\verkey$, an index $i$, and a certificate of deletion $\cert$, and outputs either $\top$ (indicating accept) or $\bot$ (indicating reject).
    \end{itemize}
\end{definition}

\begin{definition}[Correctness of Secret Sharing with Certified Deletion]\label{def:ss-recon-correct}
    A secret sharing scheme with certified deletion must satisfy two correctness properties:
    \begin{itemize}
        \item \textbf{Reconstruction Correctness.} For all $\secpar \in \bbN$ and all sets $S\in \bbS$,
        \[
            \Pr\left[\Reconstruct_\bbS(\{\calS_i\}_{i\in S}) : (\calS_1, \dots, \calS_n, \verkey) \gets \Split_\bbS(1^\secpar,s)\right] = 1.
        \]
        
        \item \textbf{Deletion Correctness.} For all $\secpar \in \bbN$ and all $i \in [n]$,
        \[
            \Pr\left[\Verify_\bbS(\vk, i, \cert) = \top : 
                \begin{array}{r}
                    (\calS_1, \dots, \calS_n, \verkey) \gets \Split_\bbS(1^\secp, s) \\
                    \cert \gets \Delete_\bbS(\calS_i)
                \end{array}
            \right] = 1.
        \]
        
    \end{itemize}
\end{definition}

The standard notion of security for secret sharing requires that no set of unauthorized shares $S\notin \bbS$ reveals any information about the secret (see \cref{subsec:prelim-ss}). We next present our notion of \emph{no-signaling certified deletion security}. Here, the shares are partitioned into unauthorized sets, and different parts of the adversary operate on each partition, potentially deleting some number of shares from each. The different parts of the adversary are allowed to share entanglement, but are not allowed to signal. If the adversaries jointly produce a valid certificate for at least one share from every authorized set, then we require that the joint residual state of \emph{all} of the adversaries contains no (or negligible) information about the secret. Observe that this notion of security is at least as strong as the standard notion of security for secret sharing (if we relax to statistical rather than perfect security). Indeed, if the standard notion does not hold, and thus there is some unauthorized set $S$ that leaks information about the secret, then the adversary would be able to win the certified deletion game by honestly deleting every share except for those in $S$.




\begin{definition}[No-Signaling Certified Deletion Security for Secret Sharing]\label{def:ns-cd}
    Let $P = (P_1,\dots,P_\ell)$ be a partition of $[n]$, let $\ket{\psi}$ be an $\ell$-part state on registers $\calR_1,\dots,\calR_\ell$, and let $\Adv = (\Adv_1,\dots,\Adv_\ell)$ be an $\ell$-part adversary. Define the experiment $\NSCD_\bbS(1^\secp,P,\ket{\psi},\Adv,s)$ as follows:
    \begin{enumerate}
        \item Sample $(\calS_1,\dots,\calS_n,\vk) \gets \Split_\bbS(1^\secp,s)$.
        \item For each $t \in [\ell]$, run $(\{\cert_i\}_{i \in P_t},\calR_t') \gets \Adv_t(\{\calS_i\}_{i \in P_t},\calR_t)$, where $\calR_t'$ is an arbitrary output register.
        \item If for all $S \in \mathbb{S}$, there exists $i \in S$ such that $\Verify_\bbS(\vk,i,\cert_i) = \top$, then output $(\calR'_1,\dots,\calR'_\ell)$, and otherwise output $\bot$.
    \end{enumerate}

    A secret sharing scheme for access structure $\bbS$ has no-signaling certified deletion security if for any ``admissible'' partition $P = (P_1,\dots,P_\ell)$ (i.e.\ for all $P_t \in P$ and $S \in \mathbb{S}$, $P_t \not\subseteq S$), any $\ell$-part state $\ket{\psi}$, any (unbounded) $\ell$-part adversary $\Adv$, and any pair of secrets $s_0,s_1$,
    \[
        \TD[\NSCD_\bbS(1^\secp,P,\ket{\psi},\Adv,s_0),\  \NSCD_\bbS(1^\secp,P,\ket{\psi},\Adv,s_1)] = \negl.
    \]
\end{definition}

Next, we present an alternative definition which allows the adversary to start by corrupting some unauthorized set, and then continue to adaptively delete some shares and corrupt new parties, as long as the total set of parties corrupted minus the set of shares deleted is unauthorized. 
Similarly to the previous definition, adaptive certified deletion for secret sharing subsumes the standard notion of security for secret sharing.

\begin{definition}[Adaptive Certified Deletion for Secret Sharing]\label{def:adapt-CD}
    Let $\Adv$ be an adversary with internal register $\calR$ which is initialized to a state $\ket{\psi}$, let $\bbS$ be an access structure, and let $s$ be a secret. Define the experiment $\ACD_{\bbS}(1^\secpar, \ket{\psi}, \Adv, s)$ as follows:
    \begin{enumerate}
        \item Sample $(\calS_1, \dots, \calS_n, \verkey) \gets \Split_{\bbS}(1^\secpar, s)$. Initialize the corruption set $C = \emptyset$ and the deleted set $D = \emptyset$.
        
        \item In each round $i$, the adversary may do one of three things:
        \begin{enumerate}
            \item End the experiment by outputting a register $\calR \gets \Adv(\{\calS_j\}_{j\in C}, \calR)$.
            
            \item Delete a share by outputting a certificate $\cert_i$, an index $j_i \in [n]$, and register $(\cert_{i}, j_i,\calR) \gets \Adv(\{\calS_j\}_{j\in C}, \calR)$. When the adversary chooses this option, if $\Verify_\bbS(\verkey, j_i, \cert_i)$ outputs $\top$, then add $j_i$ to $D$. Otherwise, immediately abort the experiment and output $\bot$.
            
            \item Corrupt a new share by outputting an index $j_i \in [n]$ and register $(j_i, \calR) \gets \Adv(\{\calS_j\}_{j\in C}, \calR)$. When the adversary chooses this option, add $j_i$ to $C$. If $C\backslash D \in \bbS$, immediately abort the experiment and output $\bot$.
        \end{enumerate}
        \item Output $\calR$, unless the experiment has already aborted.
    \end{enumerate}

    A secret sharing scheme for access structure $\bbS$ has adaptive certified deletion security if for any (unbounded) adversary $\Adv$, any state $\ket{\psi}$, and any pair of secrets $(s_0, s_1)$,
    \[
        \TD[\ACD_{\bbS}(1^\secpar, \ket{\psi}, \Adv, s_0),\ \ACD_{\bbS}(1^\secpar,\ket{\psi}, \Adv, s_1)] = \negl
    \]
\end{definition}

It will also be convenient to establish some notation for the order of the corrupted and deleted shares.  Let $c_{\corctr}$ be the $\corctr$'th share to be corrupted (i.e. added to $C$) and let $d_{\delctr}$ be the $\delctr$'th share to be deleted (i.e. added to $D$).






\section{Secret Sharing with No-Signaling Certified Deletion}\label{sec:no-signaling-construction}

In this section, we'll show how to combine any classical secret sharing scheme $(\CSplit_\bbS, \CReconstruct_\bbS)$ (\cref{def:classical-SS}) for access structure $\bbS \in \calP([n])$ with a 2-out-of-2 secret sharing scheme with certified deletion $(\Split_{\Twotwo},\allowbreak\Reconstruct_{\Twotwo},\allowbreak\Delete_{\Twotwo},\allowbreak\Verify_{\Twotwo})$ (\cref{def:2-2-SS}) in order to obtain a secret sharing scheme for $\bbS$ that satisfies no-signaling certified deletion security (\cref{def:ns-cd}).

\begin{figure}[h]
\begin{framed}
\begin{minipage}{\textwidth}

\paragraph{$\underline{\Split_\bbS(1^\secpar, s)}$} 
\begin{itemize}
    \item Sample $(\sh_1,\dots,\sh_n) \gets \CSplit_\bbS(s)$.
    \item Set $\kappa = \max\{\secp,n\}^2$, and for each $i \in [n]$, sample $(\ket{\qsh_i},\csh_i,\vk_i) \gets \Split_{\Twotwo}(1^\kappa,\sh_i)$.
    \item For each $i \in [n]$, sample $(\csh_{i,1},\dots,\csh_{i,n}) \gets \CSplit_{\bbS}(\csh_i)$.
    \item Set $\vk = (\vk_1,\dots,\vk_n)$, and initialize register $\calS_i$ to $\ket{\qsh_i},\{\csh_{j,i}\}_{j \in [n]}.$
\end{itemize}

\paragraph{$\underline{\Reconstruct_\bbS(\{\calS_i\}_{i \in P})}$}
\begin{itemize}
    \item Parse each register $\calS_i$ as $\ket{\qsh_i},\{\csh_{j,i}\}_{j \in [n]}.$
    \item For each $i \in P$, compute $\csh_{i} \gets \CReconstruct_\bbS(\{\csh_{i,j}\}_{j \in P})$, and output $\bot$ if the result is $\bot$.
    \item For each $i \in P$, compute $\sh_i \gets \Reconstruct_{\Twotwo}(\ket{\qsh_i},\csh_i)$.
    \item Output $s \gets \CReconstruct_\bbS(\{\sh_i\}_{i \in P})$.
\end{itemize}

\paragraph{$\underline{\Delete_\bbS(\calS_i)}$}
\begin{itemize}
    \item Parse $\calS_i$ as $\ket{\qsh_i},\{\csh_{j,i}\}_{j \in [n]}$ and output $\cert \gets \Delete_{\Twotwo}(\ket{\qsh_i})$.
\end{itemize}

\paragraph{$\underline{\Verify_\bbS(\vk,i,\cert)}$}
\begin{itemize}
    \item Parse $\vk = (\vk_1,\dots,\vk_n)$ and output $\Verify_{\Twotwo}(\vk_i,\cert)$.
\end{itemize}

\end{minipage}
\end{framed}
\caption{Secret sharing with no-signaling certified deletion security for any access structure $\bbS$.}\label{fig:SS-CD-general}
\end{figure}

\begin{theorem}
    The construction given in \cref{fig:SS-CD-general} satisfies no-signaling certified deletion security (\cref{def:ns-cd}).
\end{theorem}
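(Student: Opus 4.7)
The plan is to prove no-signaling certified deletion security by a hybrid argument indexed by the set $T \subseteq [n]$ of shares that the adversary actually succeeds in deleting. A first structural observation is that the success condition of $\NSCD_\bbS$---that every $S \in \bbS$ contain some $i$ with a valid certificate---is equivalent to the undeleted set $[n] \setminus T$ being unauthorized. I will therefore decompose the output of $\NSCD_\bbS(1^\secp, P, \ket{\psi}, \Adv, s_b)$ as a sum over $T$ of sub-distributions $\widehat{G}_{b,T}$ conditioned on both the realized deletion set equaling $T$ and $[n] \setminus T$ being unauthorized, and compare each $\widehat{G}_{b,T}$ with a modified version $\widehat{G}_{b,T}^{\mathsf{mod}}$ in which, for every $i \in T$, the 2-out-of-2 split of $\sh_i$ is replaced by a 2-out-of-2 split of $\bot$.

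For each such $T$ I plan to iterate over $i \in T$, switching $\sh_i$ to $\bot$ one share at a time by invoking the 2-out-of-2 certified deletion security of \cite{C:BarKhu23}. Each step will contribute at most $2^{-\Omega(\kappa)}$ in trace distance, so $\TD[\widehat{G}_{b,T}, \widehat{G}_{b,T}^{\mathsf{mod}}] \leq n \cdot 2^{-\Omega(\kappa)}$. Summing over the at most $2^n$ choices of $T$ will give a total loss of $2^n \cdot n \cdot 2^{-\Omega(\kappa)}$, which is negligible precisely because $\kappa = \max\{\secp, n\}^2$ is chosen so that the $2^n$ blowup from guessing $T$ is absorbed by the sub-exponential strength of the underlying 2-out-of-2 scheme.

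The main obstacle will be implementing a single reduction step that switches $\sh_i$ to $\bot$. The difficulty is that the classical 2-out-of-2 share $\csh_i$ is itself re-shared among all $n$ parties by $\CSplit_\bbS$, so every adversary $\Adv_t$ holds some shares of $\csh_i$ in its view upfront, yet the 2-out-of-2 challenger only hands $\csh_i$ to the reduction once a valid certificate has been produced. I plan to exploit the non-signaling property (so that the $\Adv_t$'s actions commute and may be processed sequentially in any order) by processing the unique adversary $\Adv_{t^*}$ containing share $i$ first: sample $(\ket{\qsh_j}, \csh_j, \vk_j)$ locally for $j \neq i$, embed the challenger's $\ket{\qsh_i}$ into $\Adv_{t^*}$'s view, and simulate the classical shares $\{\csh_{i,k}\}_{k \in P_{t^*}}$ without knowing $\csh_i$ using the simulator for $\CSplit_\bbS$ on the unauthorized set $P_{t^*}$. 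Once $\Adv_{t^*}$ outputs $\cert_i$, the reduction forwards it to the 2-out-of-2 challenger and, if the certificate is valid, receives $\csh_i$ and reverse-samples $\{\csh_{i,k}\}_{k \notin P_{t^*}}$ to be jointly consistent with both $\csh_i$ and the already-committed shares; such an extension exists by the reverse-sampling property implicit in perfect privacy of $\CSplit_\bbS$. The remaining $\Adv_t$ for $t \neq t^*$ can then be processed with these now-consistent shares, and 2-out-of-2 CD security bounds the trace distance between the two hybrids by $2^{-\Omega(\kappa)}$.

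Finally, for each $T$ with $[n] \setminus T$ unauthorized, I expect $\widehat{G}_{0,T}^{\mathsf{mod}} = \widehat{G}_{1,T}^{\mathsf{mod}}$ as distributions: once every $\sh_i$ for $i \in T$ has been replaced by $\bot$, the only remaining dependence on $s_b$ enters through $\{\sh_j\}_{j \notin T}$ whose indices form an unauthorized set for $\CSplit_\bbS$, so by perfect privacy the joint view of all adversaries is identically distributed for $s_0$ and $s_1$. A triangle inequality across the three hops then yields $\TD[G_0, G_1] \leq 2 \cdot 2^n \cdot n \cdot 2^{-\Omega(\kappa)} = \negl$, completing the argument.
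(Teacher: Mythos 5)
Your proposal is correct and matches the paper's proof in all essentials: your decomposition over deletion sets $T$ with a $2^n$ summation loss is exactly the paper's ``guess $C$ uniformly and abort if wrong'' hybrid, your per-share reduction (process the adversary holding share $i$ first, simulate its classical shares of $\csh_i$ via the privacy simulator for the unauthorized set $P_{t^*}$, then reverse-sample the rest after the 2-out-of-2 challenger reveals $\csh_i$) is the paper's reduction, and the final appeal to perfect privacy on the unauthorized undeleted set, with $\kappa=\max\{\secp,n\}^2$ absorbing the exponential loss, is identical.
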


\begin{proof}

Let $\Adv = (\Adv_1,\dots,\Adv_\ell)$ be any $\ell$-part adversary that partitions the shares using an admissible partition $P = (P_1,\dots,P_\ell)$ and is initialized with the $\ell$-part state $\ket{\psi}$ on registers $\calR_1,\dots,\calR_\ell$. Let $s_0,s_1$ be any two secrets, and assume for contradiction that 

\[\TD[\NSCD_\bbS(1^\secp,P,\ket{\psi},\Adv,s_0),\  \NSCD_\bbS(1^\secp,P,\ket{\psi},\Adv,s_1)] = \mathsf{nonnegl}(\secp).\]

Now, for $s \in \{s_0,s_1\}$, define a hybrid $\calH_1(s)$ as follows.\\

\noindent\underline{$\calH_1(s)$}
\begin{enumerate}
    \item Sample $C \gets \calP([n])$.
    \item Sample $(\sh_1,\dots,\sh_n) \gets \CSplit_\bbS(s)$.
    \item Set $\kappa = \max\{\secp,n\}^2$, and for each $i \in [n]$, sample $(\ket{\qsh_i},\csh_i,\vk_i) \gets \Split_{\Twotwo}(1^\kappa,\sh_i)$.
    \item For each $i \in [n]$, sample $(\csh_{i,1},\dots,\csh_{i,n}) \gets \CSplit_{\bbS}(\csh_i)$.
    \item Set $\vk = (\vk_1,\dots,\vk_n)$, and initialize register $\calS_i$ to $\ket{\qsh_i},\{\csh_{j,i}\}_{j \in [n]}.$
    \item For each $t \in [\ell]$, run $(\{\cert_i\}_{i \in P_t},\calR'_t) \gets \Adv_t(\{\calS_i\}_{i \in P_t},\calR_t)$.
    \item Let $C^* \coloneqq \{i : \Verify_{\Twotwo}(\vk_i,i,\cert_i) = \top\}$. Output $(\calR'_1,\dots,\calR'_\ell)$ if $C = C^*$ and $[n] \setminus C^* \notin \bbS$, and otherwise output $\bot$.
\end{enumerate}

That is, $\calH_1(s)$ is the same as $\NSCD_\bbS(1^\secp,P,\ket{\psi},\Adv,s)$ except that $\calH_1(s)$ makes a uniformly random guess $C$ for the subset of shares for which the adversary produces a valid deletion certificate, and aborts if this guess is incorrect.

\ifsubmission\begin{proposition}\else\begin{claim}\fi\label{claim:H1}
$\TD\left[\calH_1(s_0),\calH_1(s_1)\right] = \mathsf{nonnegl}(\secp) \cdot 2^{-n}.$ 
\ifsubmission\end{proposition}\else\end{claim}\fi

\begin{proof}
This follows directly from the fact that $\calH_1(s)$'s guess for $C$ is correct with probability $1/2^n$, and, conditioned on the guess being correct,  $\calH_1(s)$ is identical to $\NSCD_\bbS(1^\secp,P,\ket{\psi},\Adv,s)$.
\ifsubmission\qed\fi\end{proof}

Now, for $s \in \{s_0,s_1\}$ and $k \in [0,\dots,n]$, define a sequence of hybrids $\calH_{2,k}(s)$ as follows.\\

\noindent\underline{$\calH_{2,k}(s)$}
\begin{enumerate}
    \item Sample $C \gets \calP([n])$.
    \item Sample $(\sh_1,\dots,\sh_n) \gets \CSplit_\bbS(s)$.
    \item Set $\kappa = \max\{\secp,n\}^2$ and for each $i \in [n]$, if $i \leq k$ and $i \in C$, sample $(\ket{\qsh_i},\csh_i,\vk_i) \gets \Split_{\Twotwo}(1^\kappa,\bot)$, and otherwise sample $(\ket{\qsh_i},\csh_i,\vk_i) \gets \Split_{\Twotwo}(1^\kappa,\sh_i)$.
    \item For each $i \in [n]$, sample $(\csh_{i,1},\dots,\csh_{i,n}) \gets \CSplit_\bbS(\csh_i)$.
    \item Set $\vk = (\vk_1,\dots,\vk_n)$, and initialize register $\calS_i$ to $\ket{\qsh_i},\{\csh_{j,i}\}_{j \in [n]}$.
    \item For each $t \in [\ell]$, run $(\{\cert_i\}_{i \in P_t},\calR'_t) \gets \Adv_t(\{\calS_i\}_{i \in P_t},\calR_t)$.
    \item Let $C^* \coloneqq \{i : \Verify_{\Twotwo}(\vk_i,i,\cert_i) = \top\}$. Output $(\calR'_1,\dots,\calR'_\ell)$ if $C = C^*$ and $[n] \setminus C^* \notin \bbS$, and otherwise output $\bot$.
\end{enumerate}

First, note that $\calH_1(s_0) = \calH_{2,0}(s_0)$ and $\calH_1(s_1) = \calH_{2,0}(s_1)$. Next, we show the following claim.

\ifsubmission\begin{proposition}\else\begin{claim}\fi\label{claim:H2n}
$\calH_{2,n}(s_0) \equiv \calH_{2,n}(s_1)$.
\ifsubmission\end{proposition}\else\end{claim}\fi

\begin{proof}

In each experiment, if the output is not $\bot$, then we know that $[n] \setminus C$ is an unauthorized set. Moreover, the experiments do not depend on the information $\{\sh_i\}_{i \in C}$. Thus, the claim follows by the perfect privacy of $(\CSplit_\bbS,\CReconstruct_\bbS)$, which implies that 

\begin{align*}
&\{\{\sh_i\}_{i \in C \setminus [n]}: (\sh_1,\dots,\sh_n) \gets \CSplit_\bbS(s_0)\}\\ &~~~~\equiv \{\{\sh_i\}_{i \in C \setminus [n]}: (\sh_1,\dots,\sh_n) \gets \CSplit_\bbS(s_1)\}.
\end{align*}

\ifsubmission\smallskip\qed\fi\end{proof}

Finally, we show the following claim.

\ifsubmission\begin{proposition}\else\begin{claim}\fi\label{claim:H2k}
For $s \in \{s_0,s_1\}$ and $k \in [n]$, it holds that $\TD\left[\calH_{2,k-1}(s),\calH_{2,k}(s)\right] = 2^{-\Omega(\kappa)}.$
\ifsubmission\end{proposition}\else\end{claim}\fi

\begin{proof}
The only difference between these hybrids is that if $k \in C$, we switch $\sh_k$ to $\bot$ in the third step. So, suppose that $k \in C$, and consider the following reduction to the certified deletion security (\cref{def:2-2-SS}) of $(\Split_{\Twotwo},\allowbreak\Reconstruct_{\Twotwo},\allowbreak\Delete_{\Twotwo},\allowbreak\Verify_{\Twotwo})$. This experiment is parameterized by a bit $b$ which determines which one of two secrets the certified deletion challenger will share. 

\begin{itemize}
    \item The reduction samples $C \gets \calP([n])$ and $(\sh_1,\dots,\sh_n) \gets \CSplit_\bbS(s)$, and sends $\{\sh_k,\bot\}$ to the challenger.
    \item The challenger samples $(\ket{\qsh_k},\csh_k,\vk_k) \gets \Split_{\Twotwo}(1^\kappa,\sh_k)$ if $b=0$ or $(\ket{\qsh_k},\csh_k,\vk_k) \gets \Split_{\Twotwo}(1^\kappa,\bot)$ if $b=1$, and sends $\ket{\qsh_k}$ to the reduction.
    \item For each $i \in [n] \setminus \{k\}$, if $i < k$ and $i \in C$, the reduction samples $(\ket{\qsh_i},\csh_i,\vk_i) \gets \Split_{\Twotwo}(1^\kappa,\bot)$, and otherwise samples $(\ket{\qsh_i},\csh_i,\vk_i) \gets \Split_{\Twotwo}(1^\kappa,\sh_i)$.
    \item Let $t^* \in [\ell]$ be such that $k \in P_{t^*}$. For each $i \in [n] \setminus \{k\}$, the reduction samples $(\csh_{i,1},\dots,\csh_{i,n}) \gets \CSplit_\bbS(\csh_i)$. Next, the reduction samples $\{\csh_{k,i}\}_{i \in P_{t^*}} \gets \Sim(P_{t^*})$, where $\Sim$ is the simulator guaranteed by the privacy of $(\CSplit_\bbS,\CReconstruct_\bbS)$.
    \item For each $i \in P_{t^*}$, initialize register $\calS_i$ to $\ket{\qsh_i},\{\csh_{j,i}\}_{j \in [n]}$.
    \item The reduction runs $(\{\cert_i\}_{i \in P_{t^*}},\calR'_{t^*}) \gets \Adv_{t^*}(\{\calS_i\}_{i \in P_{t^*}},\calR_{t^*})$, and sends $\cert_k$ to the challenger.
    \item The challenger checks whether $\Verify_{\Twotwo}(\vk_k,\cert_k) = \top$. If so, the challenger returns $\csh_k$, and otherwise the experiment aborts and outputs $\bot$.
    \item The reduction samples $\{\csh_{k,i}\}_{i \in [n] \setminus P_{t^*}}$ conditioned on the joint distribution of $(\csh_{k,1},\dots,\csh_{k,n})$ being identical to $\CSplit_\bbS(\csh_k)$. This is possible due to the guarantee of $\Sim(P_{t^*})$, that is, \begin{align*}
    &\left\{\{\csh_{k,i}\}_{i \in P_{t^*}} : (\csh_{k,1},\dots,\csh_{k,n}) \gets \CSplit_{\bbS}(\csh_k)\right\}\\ &~~~~\equiv \left\{\{\csh_{k,i}\}_{i \in P_{t^*}} : \{\csh_{k,i}\}_{i \in P_{t^*}} \gets \Sim(P_{t^*})\right\}.
    \end{align*}
    \item For each $i \in [n] \setminus P_{t^*}$, the reduction initializes register $\calS_i$ to $\ket{\qsh_i}, \{\csh_{j,i}\}_{j \in [n]}$.
    \item For each $t \in [\ell] \setminus \{t^*\}$, run $(\{\cert_i\}_{i \in P_t},\calR'_t) \gets \Adv_t(\{\calS_i\}_{i \in P_t},\calR_t)$.
    \item Let $C^* \coloneqq \{i : \Verify_{\Twotwo}(\vk_i,i,\cert_i) = \top\}$. The reduction outputs $(\calR'_1,\dots,\calR'_\ell)$ if $C = C^*$ and $[n] \setminus C^* \notin \bbS$, and otherwise outputs $\bot$.
\end{itemize}

Observe that in the case $b=0$, the output of this experiment is identical to $\calH_{2,k-1}(s)$ while if $b=1$, the output of this experiment is identical to $\calH_{2,k}(s)$. Thus, the claim follows from the certified deletion security of $(\Split_{\Twotwo},\Reconstruct_{\Twotwo},\Delete_{\Twotwo},\Verify_{\Twotwo})$.

\ifsubmission\qed\fi\end{proof}

Thus, by combining \cref{claim:H2n} and \cref{claim:H2k}, we have that \[\TD\left[\calH_1(s_0),\calH_1(s_1)\right] = 2n \cdot 2^{-\Omega(\kappa)} \leq 2^{-\Omega(\{\max\{\secp,n\}^2\})}.\] However, this violates \cref{claim:H1}, since \[2^{-\Omega(\{\max\{\secp,n\}^2\})} < \mathsf{nonnegl}(\secp) \cdot 2^{-n},\] which completes the proof.

\ifsubmission\qed\fi\end{proof}

\section{Threshold Secret Sharing with Adaptive Certified Deletion}\label{sec:adaptive-construction}

In this section, we show how to construct a secret sharing scheme for threshold access structures that satisfies adaptive certified deletion (\Cref{def:adapt-CD}).

\subsection{Construction}\label{subsec:adaptive_construction}

Our construction is given in \Cref{fig:SS-CD-Construction}, which uses a set of parameters described in \Cref{fig:acd-parameters-loose}. We provide some intuition about the parameter settings here.

The secret is encoded in a polynomial $f$ of degree $p$. For security, we need $p$ to be at least as large as the number of points of $f$ that the adversary can learn. At most, the adversary can hold up to $k-1$ intact shares and the residual states of $n-k+1$ deleted shares. Each of the $k-1$ intact shares contains $t'$ evaluations of $f$. Additionally, the adversary may retain some small amount of information about each of the deleted shares. We upper bound the retained information by a parameter $\ell$ for each share. This gives the adversary a maximum of
\[
    (k-1)t' + (n-k+1)\ell
\]
evaluations of $f$, which becomes the minimum safe setting for $p$.

Each share will also include a number of ``check positions'', which contain Fourier basis states that are used for verification of deletion. The number of check positions $r$ and upper bound $\ell$ are set roughly so that with overwhelming probability, the adversary can retain no more than $\ell$ evaluations of $f$ when it deletes a share (more precisely, the adversary may retain a \emph{superposition} over potentially different sets of $\ell$ evaluations). 
The reader may find it useful to think of $\ell$ as being the maximum number of unexamined positions in a classical string $x$ when an adversary successfully creates a string $y$ that matches $x$ on $r$ random verification indices. 
Finally, the total size $t$ of each share is set so that $k$ shares contain less than $(kt-p)/2$ check positions, which is the maximum number of errors that can be corrected in $kt$ evaluations of a polynomial of degree $p$ (see \Cref{sec:prelim-poly-RS}).

\begin{figure}[t]
    \begin{framed}
    \begin{minipage}{\textwidth}
         The construction in \Cref{fig:SS-CD-Construction} uses the following parameters.\footnote{The parameters provided here are slightly looser than necessary, to facilitate easier inspection. We present a tighter set of parameters in \Cref{fig:acd-parameters-tight}.}
         \begin{itemize}
            \item Each share consists of $t$ total $\bbK$-registers, where 
            \[
            t = (k+1)\numChecks \left(1 + \frac{(n-k+1)\log(\secpar)}{\secpar}\right) + 1
            \]
            
            \item A share is divided into $\numChecks$ check indices and $t'=t-r$ data indices, where
            \[
            \numChecks = (\secpar + (n-k+1)\log(\secpar))^2
            \]

            \item $\ell$ intuitively represents an upper bound on the amount of information which is not destroyed when an adversary generates a valid deletion certificate for a share.
            \[
                \ell = \delLossValueShort
            \]

            See the proof of \Cref{claim:strongcd} for a more precise usage of $\ell$.
            
            \item The secret will be encoded in a polynomial of degree
            \[
            p = (k-1)t' + (n-k+1)\delLoss
            \]
         \end{itemize}
    \end{minipage}
    \end{framed}
    \caption{Parameters for Secret Sharing with Adaptive Certified Deletion}
    \label{fig:acd-parameters-loose}
\end{figure}

\begin{figure}[ht!]
\begin{framed}
\begin{minipage}{\textwidth}
\textbf{\underline{Parameters:}} Let $\bbF_2$ be the binary field and let $\bbK$ be the field with $2^{\lceil \log_2(nt+1) \rceil}$ elements. See \Cref{fig:acd-parameters-loose} for descriptions and settings of the parameters $t$, $t'$, $r$, $\ell$, and $p$.

\paragraph{$\underline{\Split_{(k, n)}(1^\secpar, s)}$} 
\begin{itemize}
    \item Sample a random polynomial $f$ with coefficients in $\bbK$ and degree $p$ such that $f(0) = s$.
    \item For each $i\in [n]$:
    \begin{enumerate}
        \item Sample a random set of indices $J_i \subset [t]$ of size $t'= t-r$.
        \item For each $j\in J_i$, set $\ket{\psi_{i, j}} = \ket{f(it + j)}$. These are the $t'$ data positions.
        \item For each $j\in [t]\backslash J_i$, sample a uniform element $y_{i,j} \gets \bbK$ and set $\ket{\psi_{i, j}} = H^{\otimes \lceil \log_2(n + 1) \rceil} \ket{y_{i,j}}$.
        These are the $r$ check positions.
        \item Initialize register $\calS_i$ to $\bigotimes_{j=1}^{t} \ket{\psi_{i, j}}$.

    \end{enumerate}
    \item Set $\vk = \{J_i, \{y_{i,j}\}_{j\in [t]\backslash J_i}\}_{i\in [n]}$.
\end{itemize}

\paragraph{$\underline{\Reconstruct_{(k, n)}(1^\secpar, \{\calS_{i}\}_{i\in P})}$} 
\begin{itemize}
    \item If $|P| < k$, output $\bot$. Otherwise, set $P'$ to be any $k$ shares in $P$.
    \item For each $i \in P'$, measure $\calS_i$ in the computational basis to obtain $y_i = (y_{i,1},\dots, y_{i,t}) \in \bbK^t$.
    \item Compute $f \gets \Correct_{\bbK, p}(\{(it + j, y_{i,j})\}_{i\in P', j\in [t]})$, as defined in \Cref{sec:prelim-poly-RS}.
    \item Output $f(0)$.
\end{itemize}

\paragraph{$\underline{\Delete_{(k,n)}(1^\secpar, \calS_i)}$} 
\begin{itemize}
    \item Parse $\calS_i$ as a sequence of $t\lceil \log_2(n+1) \rceil$ single qubit registers, measure each qubit register in Hadamard basis and output the result $\cert$.
\end{itemize}

\paragraph{$\underline{\Verify_{(k,n)}(1^\secpar, \verkey, i, \cert)}$}
\begin{itemize}
    \item Parse $\verkey = \{J_i, \{y_{i,j}\}_{j\in [t]\backslash J_i}\}_{i\in [n]}$, and parse $\cert \in \bbK^t$ as a sequence of $t$ elements of $\bbK$. Output $\top$ if $\cert_j = y_{i,j}$ for every $j \in J_i$, and $\bot$ otherwise.
\end{itemize}

\end{minipage}
\end{framed}
\caption{Construction for Secret Sharing with Adaptive Certified Deletion}\label{fig:SS-CD-Construction}
\end{figure}

\begin{theorem}\label{thm:ss-strongcd}
    There exists secret sharing for threshold access structures which satisfies adaptive certified deletion.
\end{theorem}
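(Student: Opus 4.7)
The plan is to prove that $\ACD_{(k,n)}(1^\secpar, \ket{\psi}, \Adv, s_0)$ and $\ACD_{(k,n)}(1^\secpar, \ket{\psi}, \Adv, s_1)$ are negligibly close in trace distance by exhibiting a hybrid sequence that gradually replaces every share the adversary sees with an independent uniformly random element of $\bbK^t$. In the final hybrid all shares are independent of $f$, so the output distribution is identical for $s_0$ and $s_1$, since the secret only enters through the polynomial $f$ which is never evaluated. To make this meaningful I would first purify the construction: for each register position the challenger prepares $\sum_{x \in \bbK} \ket{x}^{\chalReg_{i,j}}\ket{x}^{\calS_{i,j}}$ on data indices and $\sum_{x \in \bbK} \QFT\ket{x}^{\chalReg_{i,j}}\QFT\ket{x}^{\calS_{i,j}}$ on check indices, sends the $\calS_i$ registers to the adversary upon corruption, and uses the $\chalReg$ registers to coherently interpolate each newly corrupted share $c_{k-1+\delctr}$ from $\chalReg_{d_\delctr}$ together with sufficiently many earlier reference points. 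A small uniformly random ``slack'' inserted into every share will make up for the fact that $k-1$ shares do not quite determine a polynomial of degree $p$.

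The heart of the argument is the ``predicates first, replacement later'' strategy outlined in the overview. For each $\delctr \in [n-k+1]$ I would define a truncated experiment that halts immediately after $d_\delctr$ is deleted and inserts a predicate measurement on $\chalReg_{d_\delctr}$ which accepts iff the state lies in the span of Fourier basis vectors of $\bbK$-Hamming weight at most $\ell$. I would then prove by induction on $\delctr$ that this predicate accepts with overwhelming probability. The inductive step runs as follows: conditioned on the predicates on $\chalReg_{d_1},\dots,\chalReg_{d_{\delctr-1}}$ holding, invoke \Cref{lem:extractor} (combined with \Cref{fact:interpolation-linear-independence}, applied to the Reed--Solomon interpolation matrices that synthesize each $c_{k-1+b}$ from $\chalReg_{d_b}$) to replace $c_1,\dots,c_{k-1+\delctr}$ in the truncated game with independent uniform strings. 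Because the truncation ends before $c_{k+\delctr}$ is corrupted, no further shares need to be interpolated, so the catastrophic re-use of $\chalReg_{d_b}$ is avoided within the truncation. In the resulting game the adversary's view right before deleting $d_\delctr$ is independent of $J_{d_\delctr}$ and $\{y_{d_\delctr,j}\}$; a cut-and-choose/Hoeffding argument (\Cref{claim:hoeffding}) over the $\binom{t}{r}$ possible check patterns, together with gentle measurement (\Cref{lem:gentle-measurement}), then shows that whenever the certificate passes verification the predicate accepts except with probability $2^{-\Omega(\secpar)}$.

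Once the deletion predicates are established in the real game, I would replace shares with uniform strings working backwards from $c_n$ down to $c_k$. For each $c_{k-1+\delctr}$ the synthesis procedure writes $c_{k-1+\delctr} = \vect{R}_1 \vect{x} + \vect{R}_2 \vect{d}_\delctr$ where $\vect{R}_2$ is a square invertible Reed--Solomon submatrix applied to $\chalReg_{d_\delctr}$, and every set of $t'$ columns of the combined matrix is independent by \Cref{fact:interpolation-linear-independence}. Since \Cref{lem:extractor} is stated for general extension fields (this is exactly why the construction could encode check positions qubit-by-qubit in the Hadamard basis despite interpolation happening over $\bbK$), the predicate on $\chalReg_{d_\delctr}$ feeds directly into the hypothesis of the theorem and yields that $c_{k-1+\delctr}$ is statistically close to uniform, independent of the entire side information. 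Processing backwards guarantees that after $c_{k-1+\delctr}$ is replaced the register $\chalReg_{d_\delctr}$ is never touched again, sidestepping the re-use problem. The first $k-1$ corrupted shares are handled by standard Shamir privacy together with the random slack, and the parameter choices in \Cref{fig:acd-parameters-loose} ensure both that $p \geq (k-1)t' + (n-k+1)\ell$ for security and that $(kt-p)/2 > kr$ so that Reed--Solomon correction still recovers $f$ from any $k$ intact shares.

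The main obstacle is precisely the circular dependency this strategy resolves: proving a given share looks random seems to require that earlier deletions were ``good'', but proving those deletions were good in the real (untruncated) game seems to require first knowing that later interpolated shares look random. The truncation trick, which makes each predicate's proof only rely on replacement steps that terminate inside the truncation, is what breaks the cycle; verifying carefully that interpolation never forces re-use of a deleted register past its associated truncation point will be the most delicate bookkeeping in the proof.
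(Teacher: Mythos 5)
Your proposal is correct and follows essentially the same route as the paper's proof: purify via challenger registers, insert the deletion-predicate measurements one at a time, and break the circularity by arguing about truncated experiments in which the extractor (\Cref{lem:extractor} with \Cref{fact:interpolation-linear-independence}) can be applied to the last interpolated share without re-using the deleted register, with Hoeffding and gentle measurement closing each step. The only cosmetic difference is at the very end: instead of your explicit backwards replacement pass from $c_n$ down to $c_k$, the paper observes that the full game's trace distance equals that of its truncation at $c_n$, which by the truncated-game claim already coincides with a secret-independent simulator.
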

\begin{proof}
    \ifsubmission
    The construction is given in \Cref{fig:SS-CD-Construction}. Deletion correctness is apparent from inspection of the construction. We prove reconstruction correctness in \Cref{claim:recon-correctness}. See \Cref{sec:overview-proving-adaptive} for an overview of the security proof and \Cref{app:acd-proof} for more details.
    \else
    The construction is given in \Cref{fig:SS-CD-Construction}. Deletion correctness is apparent from inspection of the construction. We prove reconstruction correctness in \Cref{claim:recon-correctness} and adaptive certified deletion security in \Cref{claim:strongcd}.
    \fi
\ifsubmission\qed\fi\end{proof}

\begin{lemma}\label{claim:recon-correctness}
    The construction in \Cref{fig:SS-CD-Construction} using parameters from \Cref{fig:acd-parameters-loose} has reconstruction correctness.
\end{lemma}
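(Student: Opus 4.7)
The plan is to show that the error-correcting capabilities of the Reed-Solomon code associated with the polynomial $f$ exceed the maximum number of ``errors'' introduced by the check positions, so that $\Correct_{\bbK,p}$ always recovers $f$ (and hence $s = f(0)$) from any $k$ shares.

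First, I would unpack the structure of an authorized reconstruction. Fix any $P$ with $|P| \geq k$ and any $k$-subset $P' \subseteq P$. For each $i \in P'$, the register $\calS_i$ is a tensor product of $t$ qudit registers; the $t' = t-r$ data positions $j \in J_i$ hold $\ket{f(it+j)}$ in the computational basis, and the $r$ check positions hold Hadamard-basis states that, when measured in the computational basis, produce arbitrary values in $\bbK$. Thus after the computational-basis measurement, the set $X = \{(it+j, y_{i,j}) : i \in P',\, j \in [t]\}$ consists of $kt$ points on the $y$-axis of $f$, of which exactly $kt'$ points (the data positions) are \emph{correct}, i.e.\ $y_{i,j} = f(it+j)$, and at most $kr$ points (the check positions) are erroneous. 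Note that the evaluation abscissae $it+j$ for $i \in [n]$, $j \in [t]$ are distinct elements of $\bbK$ by our choice of field size $|\bbK| \geq nt + 1$, so $X$ is well-formed as input to $\Correct_{\bbK,p}$.

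Next I would invoke the error-correction guarantee recalled in \Cref{sec:prelim-poly-RS}: $\Correct_{\bbK,p}$ applied to a set of $s = kt$ evaluation points correctly recovers the underlying degree-$p$ polynomial whenever the number of errors is strictly less than $(s - p + 1)/2$. It therefore suffices to verify the parameter inequality
\[
kr \;<\; \frac{kt - p + 1}{2},
\]
equivalently $p \leq kt - 2kr = k(t - 2r)$. Substituting $p = (k-1)t' + (n-k+1)\ell = (k-1)(t-r) + (n-k+1)\ell$ and simplifying reduces the required inequality to
\[
t \;\geq\; (k+1)r + (n-k+1)\ell.
\]

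The main (and really the only) step is a routine plug-in of the parameters from \Cref{fig:acd-parameters-loose}: by definition $t = (k+1)r\bigl(1 + (n-k+1)\log(\secpar)/\secpar\bigr) + 1$, so $t - (k+1)r = (k+1)r(n-k+1)\log(\secpar)/\secpar + 1$, and with $\ell = t\log(\secpar)/\sqrt{r}$ and $\sqrt{r} = \secpar + (n-k+1)\log(\secpar) \geq \secpar$, we have $(n-k+1)\ell \leq (n-k+1)t\log(\secpar)/\secpar$, which I would then bound by $(k+1)r(n-k+1)\log(\secpar)/\secpar + 1$ using the definition of $t$ once more. Since the derivation is purely arithmetic and the parameters are chosen with exactly this slack in mind (as noted in \Cref{fig:acd-parameters-loose}, they are slightly loose precisely so that correctness falls out cleanly), I do not anticipate any real obstacle beyond bookkeeping; the proof reduces to verifying the displayed inequality and concluding that $\Correct_{\bbK,p}$ outputs $f$, hence $\Reconstruct_{(k,n)}$ outputs $f(0) = s$ with probability $1$.
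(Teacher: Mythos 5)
Your setup and reduction match the paper's: you correctly identify the $kt$ measured points, the at most $kr$ check-position ``errors,'' and you correctly reduce the Reed--Solomon decoding condition to the parameter inequality $t \geq (k+1)r + (n-k+1)\ell$ (the paper arrives at the equivalent $t - (n-k+1)\ell > (k+1)r$). The problem is in your final verification of that inequality. The slack is razor-thin: with the exact value $\sqrt{r} = \secpar + (n-k+1)\log(\secpar)$ one gets
\[
t - (n-k+1)\ell \;=\; t\left(1 - \frac{(n-k+1)\log(\secpar)}{\sqrt{r}}\right) \;=\; t\cdot\frac{\secpar}{\secpar + (n-k+1)\log(\secpar)} \;=\; (k+1)r + \frac{\secpar}{\secpar + (n-k+1)\log(\secpar)},
\]
so $t$ exceeds $(k+1)r + (n-k+1)\ell$ by a quantity \emph{strictly less than} $1$. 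Your proposed shortcut replaces $\sqrt{r}$ by the lower bound $\secpar$ inside $\ell$, yielding $(n-k+1)\ell \leq (n-k+1)t\log(\secpar)/\secpar$, and then you claim this is at most $t - (k+1)r = (k+1)r(n-k+1)\log(\secpar)/\secpar + 1$. That claimed inequality is false: it is equivalent to
\[
\frac{(k+1)r(n-k+1)^2\log^2(\secpar)}{\secpar^2} + \frac{(n-k+1)\log(\secpar)}{\secpar} \;\leq\; 1,
\]
and since $r = (\secpar + (n-k+1)\log(\secpar))^2 \geq \secpar^2$, the first term alone is at least $(k+1)(n-k+1)^2\log^2(\secpar) > 1$. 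In words, loosening $1/\sqrt{r}$ to $1/\secpar$ inflates $(n-k+1)\ell$ by roughly $(k+1)(n-k+1)^2\log^2(\secpar)$, which vastly exceeds the additive $+1$ of headroom built into $t$. The fix is simply to carry out the arithmetic exactly as the paper does: divide through by $1 - (n-k+1)\log(\secpar)/\sqrt{r}$ (after checking this is positive) and use that $\sqrt{r} - (n-k+1)\log(\secpar) = \secpar$ exactly, so the required lower bound on $t$ becomes precisely $(k+1)r(1 + (n-k+1)\log(\secpar)/\secpar)$, which the chosen $t$ exceeds by $1$.
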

\begin{proof}
    The set $\{(it + j, y_{i,j})\}_{i\in P', j\in [t]}$ contains $kt$ pairs which were obtained by measuring $k$ shares. As mentioned in \Cref{sec:prelim-poly-RS}, if all but $e<(kt - p)/2$ of these pairs $(it+j, y_{i,j})$ satisfy $y_{i,j} = f(it + j)$, then $\mathsf{Correct}_{\bbK,p}$ recovers the original polynomial $f$, where $f(0) = s$. The only points which do not satisfy this are the check positions, of which there are $\numChecks$ per share, for a total of $k\numChecks$. Therefore for correctness, we require that
    
    \begin{align}
    2k\numChecks &< kt - p
        \\
        &= kt - (k-1)(t-\numChecks) - (n-k+1)\delLoss
        \\
        &= t + (k-1)\numChecks- (n-k+1)\delLoss
    \end{align}
    Therefore $t - (n-k+1)\ell > (k+1)\numChecks$. Substituting $\delLoss = \delLossValueShort$ yields
    \begin{equation}
         t\left(1 - (n-k+1)\frac{\log(\secpar)}{\sqrt{r}}\right) > (k+1)\numChecks
     \end{equation}
    \begin{align}
        t &> (k+1)\numChecks\frac{1}{1 - (n-k+1)\frac{\log(\secpar)}{\sqrt{r}}}
        \label{eq:correctness-nonnegative}
        \\
        &= (k+1)\numChecks\frac{\sqrt{\numChecks}}{\sqrt{\numChecks} - (n-k+1)\log(\secpar)}
        \\
        &= (k+1)\numChecks \left(1 + \frac{(n-k+1)\log(\secpar)}{\sqrt{\numChecks} - (n-k+1)\log(\secpar)}\right)
        \\
        &= (k+1)\numChecks \left(1 + \frac{(n-k+1)\log(\secpar)}{\secpar + (n-k+1)\log(\secpar) - (n-k+1)\log(\secpar)}\right)
        \\
        &= (k+1)\numChecks \left(1 + \frac{(n-k+1)\log(\secpar)}{\secpar}\right)
    \end{align}
    Note that \Cref{eq:correctness-nonnegative} requires that $\left(1 - (n-k+1)\delLossValueShort\right) > 0$. Since the number of check positions is $\numChecks = (\secpar + (n-k+1)\log(\secpar))^2$, we have
    \begin{align}
        1 - (n-k+1)\frac{\log(\secpar)}{\secpar + (n-k+1)\log(\secpar)} 
        &> 1 - \frac{(n-k+1)\log(\secpar)}{(n-k+1)\log(\secpar)} 
        = 0
    \end{align}
    Finally, observe that the choice of parameters in the construction satisfies these constraints.
\ifsubmission\qed\fi\end{proof}

\ifsubmission
\else
\subsection{Proof of Security}

Recall that $c_{\corctr}$ is the $\corctr$'th share to be corrupted (i.e. added to $C$) and $d_{\delctr}$ is the $\delctr$'th share to be deleted (i.e. added to $D$).
Observe that if $c_{k-1+\delctr}$ is corrupted before $d_{\delctr}$ is deleted, then $C\backslash D$ has size $\geq k$ and is authorized, so $\ACD_{(k,n)}$ would abort.

\begin{lemma}\label{claim:strongcd}
    The construction in \Cref{fig:SS-CD-Construction} using parameters from \Cref{fig:acd-parameters-loose} satisfies adaptive certified deletion for threshold secret sharing.
\end{lemma}


\noindent We begin by defining a projector which will be useful for reasoning about how many data indices were \emph{not} destroyed when an adversary produces a valid certificate for a share $i$.
A certificate $\cert_i$ for share $i$ can be parsed as $t$ elements $\cert_{i,1}, \dots, \cert_{t,i}$ of $\bbK$. Denote $\cert_i' = (\cert_{i,j})_{j\in J_{i}}$ to be the subtuple of elements belonging to data indices. For any certificate $\cert$, we define the projector\footnote{This projector defines the ``deletion predicate'' mentioned in the technical overview (\Cref{sec:overview-proving-adaptive}).} 
\[
    \Pi_{\cert} = \sum_{\vect{u} \in \bbK^{t'}:h_{\bbK}(\vect{u})<\ell/2} H^{\otimes t' \lceil \log_{2}(n + 1)\rceil}\ket{\vect{u}+\vect{\cert}'}\bra{\vect{u}+\vect{\cert}'}H^{\otimes t' \lceil \log_{2}(n + 1)\rceil}
\]

Note that $H$ is the Hadamard gate, i.e. it implements a quantum Fourier transform over the binary field $\bbF_2$, and that the Hamming weight is taken over $\bbK$. 

Let $\Adv$ be any adversary which is initialized with a state $\ket{\psi}$ on register $\calR$. For $s\in \{s_0, s_1\}$, define the following $n-k+3$ hybrid experiments, where $\calH_0(s)$ is the original $\ACD(1^\secpar, \ket{\psi}, \Adv, s)$ experiment.
\\

\noindent\underline{$\calH_1(s)$} \\

In $\calH_1(s)$, we sample the shares lazily using polynomial interpolation.

\begin{enumerate}
    \item For each share $i$, sample the set of data indices $J_i\subset [t]$. Then for every share $i$ and every check position $j\in [t]\backslash J_i$, sample the check position $\ket{\psi_{i,j}}$ as in $\calH_0(s)$.
    
    \item For each share $i$, divide the data indices $J_i$ into a left set $J_{i}^L$ of size $\ell$ and a right set $J_{i}^R$ of size $t' - \ell$.
    For each $j\in J_{i}^L$, sample $f(it+j)\gets \bbK$ uniformly at random.

    \item Until $k-1$ shares are corrupted, i.e. $|C| = k-1$, run $\Adv(\{S_j\}_{j\in C}, \calR)$ as in $\ACD$, with the following exception. Whenever $\Adv$ corrupts a new share by outputting $(c_{\corctr}, \calR)$, finish preparing share $c_{\corctr}$ by sampling $f(c_{\corctr}t + j)\gets \bbK$ uniformly at random for every $j \in J_{c_i}^{R}$. 
    
    At the end of this step, exactly $p = (k-1)t' + (n-k+1)\ell$ points of $f$ have been determined, in addition to $f(0) = s$. This uniquely determines $f$.

    \item Continue to run $\Adv(\{S_j\}_{j\in C}, \calR)$ as in $\ACD$, with the following exception. Whenever $\Adv$ corrupts a new share by outputting $(c_{k-1+\delctr}, \calR)$, finish preparing $c_{k-1+\delctr}$ by interpolating the points in $J_{c_{k-1+\delctr}}^{R}$ using share $d_{\delctr}$ and any other set of $p-t'$ points that have already been determined on $f$.
%
    Specifically, let 
    \[
    \mathsf{Int}_{k-1+\delctr} \subset \{0\} \cup \bigcup_{m\in C} \{mt+j:j\in J_m\} \cup \bigcup_{m\notin C} \{mt+j:j\in J_{m}^{L}\}
    \] 
    
    be any set of $p+1$ indices to be used in the interpolation, such that 
    \[\{d_{\delctr} t + j:j \in J_{d_{\delctr}}\} \subset \mathsf{Int}_{k-1+\delctr}\]
    For each $j\in J_{c_{k-1+\delctr}}^{R}$, compute 
    \[
        f(c_{k-1+\delctr} t+j) \gets \Interpolate_{p}\left(c_{k-1+\delctr} t+j, \{(m, f(m)): m\in \mathsf{Int}_{k-1+\delctr}\}\right)
    \]
    See \Cref{sec:prelim-poly-RS} for the definition of $\Interpolate$.

    Note that if $\ACD$ does not abort in a round, $|C\backslash D| \leq k-1$. In the round where  $\Adv$ corrupts $c_{k-1+i}$, $|C| = k-1+i$, so $d_i$ has already been determined. 
\end{enumerate}

\noindent\underline{$\calH_2(s)$} 
\\

In $\calH_2(s)$, we purify the share sampling using a register $\chalReg = (\chalReg_1, \dots, \chalReg_n)$ which is held by the challenger. The challenger will maintain a copy of share $i$ in register $\chalReg_{i} = (\chalReg_{i,1}, \dots, \chalReg_{i,t})$. Both $\calS$ and $\chalReg$ are initialized to $\ket{0}$ at the beginning of the experiment.

\begin{enumerate}
    \item For each share $i$, sample the set of data indices $J_i\subset [t]$. Then for every share $i$ and every check position $j\in [t]\backslash J_i$, prepare the state
    \[
        \propto \sum_{y\in\bbK} \ket{y}^{\calS_{i,j}} \otimes \ket{y}^{\chalReg_{i,j}}
    \]
    Measure $\chalReg_{i,j}$ in the Hadamard basis to obtain $y_{i,j}$ for the verification key.
    
    \item Divide each $J_i$ into $J_{i}^L$ and $J_{i}^R$ as in $\calH_1(s)$. For each $j\in J_{i}^L$, prepare the state 
    \[
        \propto \sum_{y\in\bbK} \ket{y}^{\calS_{i,j}} \otimes \ket{y}^{\chalReg_{i,j}}
    \]

    \item Until $k-1$ shares are corrupted, i.e. $|C| = k-1$, run $\Adv(\{S_j\}_{j\in C}, \calR)$ as in $\ACD$, with the following exception. Whenever $\Adv$ corrupts a new share by outputting $(c_{\corctr}, \calR)$, for every $j\in J_{c_{\corctr}}^R$ prepare the state 
    \[
        \propto \sum_{y\in\bbK} \ket{y}^{\calS_{c_{\corctr},j}} \otimes \ket{y}^{\chalReg_{c_{\corctr},j}}
    \]

    \item Continue to run $\Adv(\{S_j\}_{j\in C}, \calR)$ as in $\ACD$, with the following exception whenever $\Adv$ corrupts a new share by outputting $(c_{k-1+\delctr}, \calR)$. Let $\mathsf{Int}_{k-1+\delctr}$ be the set of indices to be used in interpolation for share $c_{k-1+\delctr}$, as in $\calH_1(s)$. For each $j\in J_{c_{k-1+\delctr}}$, compute
    \[
        \chalReg_{c_{k-1+\delctr},j} \gets \Interpolate_{p}\left(c_{k-1+\delctr}t + j, \left(mt+j, \calS_{m, j} \right)_{mt + j \in \mathsf{Int}_{k-1+\delctr}} \right)
    \]
    Finally, copy $\chalReg_{c_{k-1+\delctr},j}$ into $\calS_{c_{k-1+\delctr},j}$ in the computational basis, i.e. perform a controlled NOT operation with source register $\chalReg_{c_{k-1+\delctr},j}$ and target register $\calS_{c_{k-1+\delctr},j}$.
    
\end{enumerate}

We emphasize that the timing of initializing each $\calS_{i,j}$ is the same as in $\calH_1(s)$. Note that since $\calH_2(s)$ outputs either $\bot$ or $\Adv$'s view, register $\calC$ never appears in the output of the experiment.
\\

\noindent\underline{$\calH_{2+i}(s)$ for $i\in [n-k+1]$}\\

The only difference between $\calH_{2+i}$ and $\calH_{3+i}$ is that when the $i$'th share $d_i$ is deleted in $\calH_{3+i}$ (i.e. $D$ reaches size $i$), the challenger performs a ``deletion predicate'' measurement on register $\chalReg_{d_i}$. Specifically, let $\cert_{d_i}$ be the certificate output by $\Adv$ for share $d_i$. Immediately after verifying $\cert_{d_i}$ and adding $d_i$ to $D$, the challenger measures the data positions in register $\chalReg_{d_{i}}$ (i.e. register $(\chalReg_{d_{i}, j})_{j\in J_{d_{i}}}$) with respect to the binary projective measurement $\{\Pi_{\cert_{k + i}}, I - \Pi_{\cert_{k + i}}\}$. If the measurement result is ``reject'' (i.e. $I - \Pi_{\cert_{k  + i}}$), immediately output $\bot$ in the experiment.
The difference between $\calH_2$ and $\calH_3$ is the same, for $i=1$.
\\

In addition to hybrids $\calH_0$ through $\calH_{3+n-k}$, we define a set of simulated experiments.  Each $\Sim_i$ will be useful for reasoning about hybrid $\calH_{2+i}$. $\Sim_i$ is similar to $\calH_{2+i}$ except that all of the shares are randomized, whereas in $\calH_{2+i}$, shares corrupted after $c_{k-1+i}$ are interpolated.
\\

\noindent\underline{$\Sim_i$ for $i\in [n-k+1]$} \\

Run the $\ACD(1^\secpar, \ket{\psi}, \Adv, s)$ experiment, with the following exceptions. 
\begin{itemize}
    \item Do \emph{not} initialize $(\calS_1, \dots, \calS_n, \verkey) \gets \Split_{\bbS}(1^\secpar, s)$ in step 1.
    \item Whenever $\Adv$ corrupts a new share by outputting $(c_{\corctr}, \calR)$, prepare the state 
    \[
        \propto \sum_{\vect{y}\in\bbK^t} \ket{\vect{y}}^{\calS_{c_\corctr}} \otimes \ket{\vect{y}}^{\chalReg_{c_{\corctr}}}
    \]
    Then, sample the set of data indices $J_{c_\corctr}\subset [t]$ of size $t'$ and for each check index $j\in [t]\backslash J_{c_\corctr}$ measure $\chalReg_{c_{\corctr},j}$ in the Hadamard basis to obtain $y_{\corctr,j}$ for the verification key.
    
    \item For the first $i$ deletions $d_\delctr$ where $\delctr\leq i$, immediately after the challenger verifies $\cert_{d_\delctr}$ and adds $d_\delctr$ to $D$, it measures the data positions in register $\chalReg_{d_{\delctr}}$ with respect to the binary projective measurement $\{\Pi_{\cert_{d_\delctr}}, I - \Pi_{\cert_{d_\delctr}}\}$. If the measurement result is ``reject'', immediately output $\bot$ in the experiment.
\end{itemize}

\begin{claim}\label{claim:h0-h2} 
For every secret $s$,
    \[
        \tracedist[\calH_0(s), \calH_2(s)] = 0
    \]
\end{claim}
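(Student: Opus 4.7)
My plan is to introduce the classical-sampling experiment $\calH_1(s)$ as an intermediate and to show $\calH_0(s) \equiv \calH_1(s) \equiv \calH_2(s)$ exactly, which gives zero (not merely negligible) trace distance. The first equivalence is a purely classical reshuffling of how the secret-bearing polynomial $f$ is sampled; the second is the standard purification and deferred-measurement argument.

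For $\calH_0 \equiv \calH_1$, the key observation is that the classical polynomial $f$ has the same joint distribution over all evaluation points in both experiments. In $\calH_0$, $f$ is a uniformly random degree-$p$ polynomial over $\bbK$ conditioned on $f(0) = s$. In $\calH_1$, one samples independent uniform values of $f$ at the $n\ell$ left data positions across all $n$ shares together with the $(k-1)(t'-\ell)$ right data positions of the first $k-1$ corrupted shares; this comes out to exactly
\[
    n\ell + (k-1)(t' - \ell) = (k-1)t' + (n-k+1)\ell = p
\]
points, which together with $f(0) = s$ uniquely determine $f$ via Lagrange interpolation. Since a uniform degree-$p$ polynomial with $f(0) = s$ is equivalently described by independent uniform values at any $p$ distinct non-zero points, the induced distributions on $f$, on every share, and on the verification key coincide, so the adversary's view is identically distributed in the two experiments.

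For $\calH_1 \equiv \calH_2$ I would invoke the principle of deferred measurement. Since the challenger's register $\chalReg$ never appears in the output of $\calH_2$, I can freely augment $\calH_2$ by having the challenger measure every data register $\chalReg_{i,j}$ in the computational basis at the end of the experiment without altering the adversary's view, and then push each such measurement back to the moment the corresponding register is first initialized. Every register prepared directly as $\sum_y \ket{y}^{\calS_{i,j}}\ket{y}^{\chalReg_{i,j}}$ (all check positions, all left data positions, and all right data positions of the first $k-1$ corrupted shares) then collapses into a uniformly sampled computational-basis state, exactly reproducing $\calH_1$'s classical sampling; the subsequent Hadamard-basis measurement on check-position $\chalReg_{i,j}$'s recovers the $H^{\otimes \lceil \log_2(n+1)\rceil}\ket{y_{i,j}}$ preparation from the original construction. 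For the right data positions of shares corrupted after the first $k-1$, the challenger applies the classical linear map $\Interpolate_p$ coherently to already-measured input registers and copies the result into $\calS_{c_{k-1+\delctr},j}$; since the inputs are now computational-basis states, coherent interpolation coincides with classical function evaluation, matching $\calH_1$'s classical interpolation step exactly. The main point I would verify carefully is that the interpolation set $\mathsf{Int}_{k-1+\delctr}$ only references positions already prepared — in particular, that $d_\delctr$'s data positions lie in it — which follows from the invariant that $\ACD$ would have aborted before $c_{k-1+\delctr}$ was corrupted unless $d_\delctr$ had already been deleted (otherwise $|C\setminus D|\geq k$). Combining the two equivalences yields $\tracedist[\calH_0(s), \calH_2(s)] = 0$.
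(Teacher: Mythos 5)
Your proposal is correct and follows essentially the same route as the paper: it goes through the intermediate hybrid $\calH_1(s)$, argues $\calH_0 \equiv \calH_1$ via the equivalence between sampling a uniform degree-$p$ polynomial with $f(0)=s$ and fixing $p$ independent uniform evaluations (your point count $n\ell + (k-1)(t'-\ell) = p$ matches the paper's), and argues $\calH_1 \equiv \calH_2$ by purification/deferred measurement plus the observation that coherent interpolation on computational-basis inputs is just classical interpolation. The paper's proof is terser but identical in substance, so no further comparison is needed.
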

\begin{proof}
    It is sufficient to show that $\tracedist[\calH_0(s), \calH_1(s)] = 0$ and $\tracedist[\calH_1(s), \calH_2(s)] = 0$. The former is true by the correctness of polynomial interpolation (see \Cref{sec:prelim-poly-RS}). To see the latter, observe that steps 1, 2, and 3 in $\calH_2(s)$ are equivalent to sampling a uniformly random state (in \emph{any} basis) in register $\calS_{i,j}$ by preparing a uniform superposition over the basis elements in $\calS_{i,j}$, then performing a delayed measurement from $\calS_{i,j}$ to $\chalReg_{i,j}$ in that basis. Observe that steps 1, 2, and 3 in $\calH_{1}(s)$ also sample uniformly random states in $\calS_{i,j}$. Now consider step 4. In $\calH_2(s)$, step 4 performs a (classical) polynomial interpolation using copies of points $(it + j, f(it + j))$ that are obtained by measuring $\calS_{i,j}$. This is equivalent to directly interpolating using $\calS_{i,j}$ if $\calS_{i,j}$ contained a computational basis state, which is the case in $\calH_1(s)$.
\ifsubmission\qed\fi\end{proof}

We show that $\calH_2$ has negligible trace distance from $\calH_{3 + n - k}$ in \Cref{claim:h2pi-to-h3pi}. To prove \Cref{claim:h2pi-to-h3pi}, we will need an additional fact which we show in \Cref{claim:truncated-is-sim}. \Cref{claim:truncated-is-sim} will also show that the final hybrid $\calH_{3 + n - k}$ has zero trace distance from $\Sim_{n-k+1}$, which is independent of the secret $s$. 

Let $\calH_i[c_\corctr](s)$ denote the truncated game where $\calH_i(s)$ is run until the end of the round where the $\corctr$'th corruption occurs, i.e. when $|C|$ reaches $\corctr$. At this point, $\calH_i[c_\corctr](s)$ outputs the adversary's register $\calR$ and the set of corrupted registers $\{\calS_{j}\}_{j\in C}$, unless the game has ended earlier (e.g. from an abort).\footnote{The truncated version of the game outputs both the set of corrupted registers and $\calR$, while the full version only outputs $\calR$. In the full version, the adversary can move whatever information it wants into $\calR$. However, the truncated game ends early, so the adversary may not have done this when the game ends. Outputting the corrupted registers directly ensures that they appear in the output in some form if the game does not abort.}
Let $\calH_i[d_\delctr](s)$ similarly represent the truncated game where $\calH_i(s)$ is run until the end of the round where the $\delctr$'th deletion occurs, i.e. when $|D|$ reaches $\delctr$.
Define $\Sim_i[c_\corctr]$ and $\Sim_i[d_{\delctr}]$ similarly. 

Observe that after the $n$'th corruption in any hybrid experiment, the rest of the challenger's actions in the experiment is independent of the secret $s$. Therefore for every hybrid $\calH_i$ and every pair of secrets $(s_0, s_1)$,

\[
\tracedist[\calH_i[c_n](s_0), \calH_i[c_n](s_1)] = \tracedist[\calH_i(s_0), \calH_i(s_1)]
\]

\begin{claim}\label{claim:truncated-is-sim}
    For every $i \in [0,n-k+1]$ and every secret $s$, 
    \[
        \tracedist[\calH_{2+i}[c_{k-1+i}](s), \Sim_i[c_{k-1+i}]] = 0
    \]
\end{claim}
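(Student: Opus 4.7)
The plan is to split on $i$, with the case $i = 0$ resolved by direct comparison and the case $i \geq 1$ handled by iteratively applying \Cref{lem:extractor} to replace interpolations with fresh EPR preparations.

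For $i = 0$, truncation at $c_{k-1}$ means step 4 of $\calH_2$ is never reached and no predicate measurements occur, so $\calH_2[c_{k-1}](s)$ just runs steps 1--3 of $\calH_2$. In those steps, each corrupted share $c_\corctr$ is prepared with its check positions as Hadamard basis states (via the delayed Hadamard measurement of $\chalReg$ on an EPR pair) and its data positions as EPR pairs between $\calS_{c_\corctr, j}$ and $\chalReg_{c_\corctr, j}$. The simulator $\Sim_0[c_{k-1}]$ prepares each corrupted share identically: a uniform superposition entangling $\calS_{c_\corctr}$ and $\chalReg_{c_\corctr}$ over $\bbK^t$, followed by a Hadamard measurement of $\chalReg_{c_\corctr, j}$ at the check positions. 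The joint states on $(\calR, \{\calS_j\}_{j \in C})$ therefore agree exactly.

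For $i \geq 1$, I would construct a sequence of intermediate hybrids within $\calH_{2+i}[c_{k-1+i}](s)$ that progressively replace each of the $i$ polynomial interpolations (for $c_k, c_{k+1}, \ldots, c_{k-1+i}$) with a fresh EPR preparation, ending at $\Sim_i[c_{k-1+i}]$. First, I would pick each $\mathsf{Int}_{k-1+\delctr}$ so that $d_\delctr$ is the only deleted share whose positions appear in it; a simple counting argument against the parameter settings of \Cref{fig:acd-parameters-loose} shows enough non-deleted free points always remain for this to be possible. With this choice, the linear map producing $c_{k-1+\delctr}$ decomposes as $R = [R_1 \mid R_2]$, where $R_2$ is the submatrix acting on $d_\delctr$'s $t'$ data positions. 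By \Cref{fact:interpolation-linear-independence} any $t'$ columns of $R$ are linearly independent, so $R_2$ satisfies the extractor hypothesis. The predicate measurement on $\chalReg_{d_\delctr}$ projects its data positions onto the span of $H^{\otimes}\ket{\vect{u} + \vect{\cert}'_{d_\delctr}}$ with $h_{\bbK}(\vect{u}) < \ell/2$, which is precisely the post-QFT form of the extractor's source (recalling that the QFT over the binary extension field is $H^{\otimes}$ and is self-inverse). Invoking \Cref{lem:extractor} then yields that $R_2 \cdot \chalReg_{d_\delctr}$ is maximally mixed and perfectly independent of every other register, so the CNOT'd share $\calS_{c_{k-1+\delctr}}$ is itself uniform and independent of everything else, matching exactly what the fresh EPR preparation in $\Sim_i$ produces.

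The main obstacle is the alignment between the extractor theorem's input format (a computational-basis low-Hamming-weight source followed by an explicit QFT) and the hybrid's operation (a post-predicate Hadamard-basis source followed directly by matrix multiplication). The resolution is that Hadamard is self-inverse, so the post-predicate state is already in the form the extractor produces \emph{after} its QFT step, and omitting that QFT while applying $R_2$ directly yields the same uniformly random, independent output. A secondary subtlety is that $\chalReg_{d_\delctr}$ must not be re-used in any interpolation other than the one producing $c_{k-1+\delctr}$; this is guaranteed by the truncation at $c_{k-1+i}$ together with the careful choice of $\mathsf{Int}_{k-1+\delctr}$ just described.
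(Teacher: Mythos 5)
Your local mechanism is correct and matches the paper's: identifying the post-predicate state of $\chalReg_{d_\delctr}$ with the source of \Cref{lem:extractor}, using \Cref{fact:interpolation-linear-independence} to verify the column-independence hypothesis on the submatrix acting on the deleted share, and noting that the Hadamard transform is the relevant QFT and is self-inverse; the base case $i=0$ is also fine. The global orchestration, however, has a genuine gap: the ``careful choice of $\mathsf{Int}_{k-1+\delctr}$'' you rely on does not exist. Just before $c_{k-1+\delctr}$ is interpolated, the number of determined points on $f$ is exactly $p+1+(\delctr-1)(t'-\ell)$ (the initial $p+1$, plus the $t'-\ell$ freshly interpolated right positions of each of $c_k,\dots,c_{k-2+\delctr}$). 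Excluding all $t'$ data positions of each of $d_1,\dots,d_{\delctr-1}$ removes $(\delctr-1)t'$ points and leaves only $p+1-(\delctr-1)\ell$, which is fewer than the $p+1$ points needed to determine a degree-$p$ polynomial as soon as $\delctr\geq 2$. This is not a tunable parameter issue: the extractor consumes $t'$ source symbols but emits only $t'-\ell$, so every round necessarily loses $\ell$ points, and the scheme starts with exactly $p+1$. This is precisely the obstacle highlighted in \Cref{sec:overview-proving-adaptive}. Consequently $\chalReg_{d_\delctr}$ \emph{must} be re-used in later interpolations, and your forward, one-interpolation-at-a-time replacement is then unjustified: \Cref{lem:extractor} only yields uniformity of the output with the source register \emph{traced out}, so after replacing $c_{k-1+\delctr}$ by fresh randomness you cannot account for a later interpolation that reads $\chalReg_{d_\delctr}$ again. (Your argument does go through for $i\leq 1$, where there is at most one interpolation.)

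The repair is to reverse the order of quantification rather than to engineer disjoint interpolation sets. The paper proves the claim by induction on $i$: the inductive hypothesis replaces the entire prefix of the truncated experiment --- including all earlier interpolations --- by the simulator in one shot, and \Cref{lem:extractor} is invoked only for the single, final interpolation of the current truncated game, whose source $\chalReg_{d_i}$ is genuinely never touched again before the game ends and $\chalReg$ is traced out. Equivalently, one may replace the interpolations backwards, last to first, so that by the time an earlier one is replaced its source is no longer consumed downstream. Either way it is the truncation, not the choice of $\mathsf{Int}$ sets, that eliminates the re-use. A minor additional slip: \Cref{fact:interpolation-linear-independence} gives that any $t'-\ell$ (not $t'$) columns of the $(t'-\ell)\times(p+1)$ interpolation matrix are linearly independent; $t'-\ell$ is the number of rows and is the quantity the extractor hypothesis actually requires.
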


Combining this claim with the previous observation about the relation of a truncated experiment to its full version, it is clear that 
\begin{align*}
    \tracedist[\calH_{3 + n - k}(s_0), \calH_{3+n-k}(s_1)] 
    &= \tracedist[\calH_{3 + n - k}[c_n](s_0), \calH_{3+n-k}[c_n](s_1)]
    \\
    &= \tracedist[\Sim_{n-k+1}[c_n], \Sim_{n-k+1}[c_n]] 
    \\
    &= 0
\end{align*}
By \Cref{claim:h2pi-to-h3pi}, we have
\[
    \tracedist[\calH_{2}(s_0), \calH_{2}(s_1)] \leq \tracedist[\calH_{3+n-k}(s_0), \calH_{3+n-k}(s_1)] + \negl
\]
Therefore, combining claims \Cref{claim:h0-h2}, \Cref{claim:truncated-is-sim}, and \Cref{claim:h2pi-to-h3pi}, we have
\begin{align*}
    \tracedist[\calH_{0}(s_0), \calH_{0}(s_1)] 
    &\leq \tracedist[\calH_{3+n-k}(s_0), \calH_{3+n-k}(s_1)] + \negl 
    \\
    &\leq 0 + \negl
\end{align*}
which completes the proof. All that remains is to prove \Cref{claim:truncated-is-sim}, and \Cref{claim:h2pi-to-h3pi}.

\begin{proof}[Proof of \Cref{claim:truncated-is-sim}.]
    We proceed via induction. This is clearly true for $i = 0$, since the first $k-1$ shares to be corrupted are prepared as maximally mixed states in both $\calH_{2}(s)$ and $\Sim$.

    Before addressing the case of $i > 0$, we define some notation for our specific application of interpolation. When preparing a share $c_{k-1+i}$ after it is corrupted, the challenger interpolates evaluations of $f$ into a register
    \[
        \chalReg_{c_{k-1+i}}^R \coloneqq (\chalReg_{c_{k-1+i},j})_{j\in J_{c_{k-1+i}}^{R}}
    \]
    $\chalReg_{c_{k-1+i}}^R$ consists of the right data positions of share $c_{k-1+i}$ and contains $t'-\ell$ $\bbK$-qudits. To do the interpolation, the challenger uses evaluations of $f$ contained in registers 
    \[
        \chalReg_{d_{i}}' \coloneqq (\chalReg_{d_i,j})_{j\in J_{d_{i}}}
    \]
    and some other registers which we group as $\calI$. $\chalReg_{d_{i}}'$ consists of the data positions in share $d_i$ and contains $t'$ $\bbK$-qudits.
    Since polynomial interpolation is a linear operation over $\bbK$, the system immediately after $c_{k-1+i}$ is prepared can be described as a state
    \[
        \sum_{\substack{\vect{x_1} \in \bbK^{t'}\\\vect{x_2} \in \bbK^{d - t'} }}  \alpha_{\vect{x_1}, \vect{x_2}} \ket{\vect{x_1}}^{\chalReg_{d_{i}}'}
        \otimes \ket{\vect{x_2}}^{\calI} 
        \otimes \ket{\vect{\vect{R_1}} \vect{x_1} + \vect{R_2} \vect{x_2}}^{\chalReg_{c_{k-1+i}}^R} 
        \otimes \ket{\vect{R_1} \vect{x_1} + \vect{R_2} \vect{x_2}}^{\calS_{c_{k-1+i}}^R} 
        \otimes \ket{\phi_{\vect{x_1}, \vect{x_2}}}^{\chalReg', \calS', \calR}
    \]
    where $\vect{R_1}\in \bbK^{(t' - \delLoss) \times t' }$ and $\vect{R_2} \in \bbK^{(t' - \delLoss) \times (d + 1 - t')}$ are submatrices of the interpolation transformation, where $\calS_{c_{k-1+i}}^R$ contains the copy of the evaluations in $\chalReg_{c_{k-1+i}}^R$, where $\chalReg'$ and $\calS'$ respectively consist of the unmentioned registers of $\chalReg$ and $\calS$, and where $\calR$ is the adversary's internal register.

    Now we will show that the claim holds for $i+1$ if it holds for $i$. Define the following hybrid experiments.
    \begin{itemize}
        \item $\calH_{3+i}[c_{k+i}]$: Recall that the only difference between $\calH_{2+i}$ and $\calH_{3+i}$ is an additional measurement made in the same round that the $(i+1)$'th share is deleted, i.e. when $|D|$ reaches $i+1$.

        \item $\calH_{3+i}'[c_{k+i}]$: The only difference from $\calH_{3+i}[c_{k+i}]$ occurs when the adversary requests to corrupt share $c_{k+i}$. When this occurs, the challenger prepares the right data positions of share $c_{k+i}$ as the state
        \[
            \propto \sum_{\vect{y}\in\bbK^{t'}} \ket{\vect{y}}^{\chalReg_{c_{k+i}}^R} \otimes \ket{\vect{y}}^{\calS_{c_{k+i}}^R}
        \]
        
        \item $\Sim_i[c_{k+i}]$: The only difference from $\calH_{3+i}'[c_{k+i}]$ is that $\Sim_i[c_{k+i}]$ is run until $c_{k+i}$ is corrupted, then the experiment is finished according to $\calH_{3+i}'[c_{k+i}]$.
        
    \end{itemize}

    We first show that $\calH_{3 + i}'[c_{k+i}]$ and $\Sim_i[c_{k+i}]$ are close. By the inductive hypothesis, 
    \[
        \tracedist[\calH_{2+i}[c_{k-1+i}](s), \Sim_{i-1}[c_{k-1+i}]] = 0
    \]
    \noindent Note that $\calH_{2+i}(s)$ and $\calH'_{3+i}(s)$ are identical until the $(i+1)$'th deletion $d_{i+1}$. Similarly, $\Sim_{i-1}$ and $\Sim_{i}$ are identical until the $(i+1)$'th deletion. Therefore
    \[
        \tracedist[\calH'_{3+i}[d_{i}](s), \Sim_{i}[d_{i}]] = 0
    \]
    Finally, observe that if the experiment does not abort, then $d_{i}$ is deleted before $c_{k+i}$ is corrupted (otherwise $|C\backslash D| = k$ during some round). Because of this, $\calH_{3 + i}'[c_{k+i}]$ and $\Sim[c_{k+i}]$ behave identically after the round where $d_{i}$ is corrupted. Therefore
    \[
        \tracedist[\calH_{3 + i}'[c_{k+i}], \Sim[c_{k+i}]] = 0
    \]

    It remains to be shown that 
    \[
        \tracedist[\calH_{3+i}[c_{k+i}], \calH_{3+i}'[c_{k+i}]] = 0
    \]

    The only difference between $\calH_{3+i}[c_{k+i}]$ and $\calH_{3+i}'[c_{k+i}]$ is at the end of the last round, where $c_{k+i}$ is corrupted.\footnote{In the definition of the $\ACD$ experiment, the corruption set $C$ is updated in between adversarial access to the corrupted registers, which occur once at the beginning of each round. The truncated game outputs according to the updated $C$, including $\calS_{c_{k+i}}$.} If the experiment reaches the end of this round without aborting, then $d_i$ has already been corrupted, since otherwise at some point $|C\backslash D| = k$. Furthermore, the deletion predicate measurement on $\chalReg_{d_{i}}$ must have accepted or else the experiment also would have aborted. It is sufficient to prove that the two experiments have $0$ trace distance conditioned on not aborting.

    In $\calH_{3+i}'[c_{k+i}]$, if the experiment does not abort then its end state (after tracing out the challenger's $\chalReg$ register) is
    \[
        \sum_{\substack{\vect{x_1} \in \bbK^{t'}\\\vect{x_2} \in \bbK^{d - t'} \\ x_3 \in \bbK^{t' - \delLoss}}} |\alpha_{\vect{x_1}, \vect{x_2}}|^2 \ket{x_3}\bra{x_3}^{\calS_{c_{k-1+i}}^R} 
        \otimes \Tr^{\chalReg'}\left[\ket{\phi_{\vect{x_1}, \vect{x_2}}}\bra{\phi_{\vect{x_1}, \vect{x_2}}}^{\chalReg', \calS', \calR}\right]
    \]
    


    We now calculate the end state of $\calH_{3+i}'[c_{k+i}]$, conditioned on it not aborting. In this case, the challenger for $\calH_{3+i}'[c_{k+i}]$ prepares the next corrupted register $\calS_{c_{k+i}}$ by a polynomial interpolation which uses registers $\chalReg_{d_{i}}$ and $\calI$. The deletion predicate measurement on $\chalReg_{d_i}$ must have accepted to avoid an abort, so before performing the interpolation, the state of the system is of the form
    \[
        \ket{\gamma}^{A,C',\mathsf{Int}_2,C_{d_{k+i}}} 
        = \sum_{\vect{u} \in \bbK^{t'}:h_{\bbK}(u)<\ell/2} \alpha_{\vect{u}} \ket{\psi_{\vect{u}}}^{\calS,\chalReg',\calI}
        \otimes H^{\otimes t' \lceil \log_{2}(n + 1)\rceil}\ket{\vect{u}+\vect{\cert}_{k+i}}^{\chalReg_{d_{i}}}
    \]
     We will apply \Cref{lem:extractor} with input size $t'$ and output size $t'-\ell$ to show that share $d_{i}$ contributes uniform randomness to the preparation of $c_{k+i}$. Observe that $\chalReg_{d_{i}}$ contains $t'$  $\bbK$-qudits and the interpolation target $\chalReg_{c_{k+i}}^R$ contains $t' - \ell$ $\bbK$-qudits. Furthermore,  $[R_1 | R_2] \in \bbK^{(t' - \delLoss) \times (d + 1)}$ is an interpolation matrix for a polynomial of degree $p$. By \Cref{fact:interpolation-linear-independence}, any $t' - \delLoss$ columns of $[R_1 | R_2]$ are linearly independent. In particular, any $t' - \delLoss$ columns of $R_1$ are linearly independent. Finally, note that $(t' - (t'-\ell))/2 = \ell/2$. Therefore by \Cref{lem:extractor}, the state of the system after preparing register $\chalReg_{c_{k+i}}^R$ and tracing out $\chalReg_{d_{i}}$ when the experiment ends at the end of this round is
     \[
        \sum_{\vect{x_3} \in \bbK^{t' - \delLoss}} \Tr^{\chalReg_{d_{i}}} \left[\ket{\gamma_{\vect{x_3}}}\bra{\gamma_{\vect{x_3}}}^{\chalReg,\calS,\calR} \right]
    \]
    where
    \[
        \ket{\gamma_{\vect{x_3}}} = 
        \sum_{\substack{\vect{x_1} \in \bbK^{t'}\\\vect{x_2} \in \bbK^{d - t'} }}  \alpha_{\vect{x_1}, \vect{x_2}} \ket{\vect{x_1}}^{\chalReg_{d_{i}}} 
        \otimes \ket{\vect{x_2}}^{\calI} 
        \otimes \ket{\vect{x_3} + \vect{R_2} \vect{x_2}}^{\chalReg_{c_{k+i}}} 
        \otimes \ket{\vect{x_3} + \vect{R_2} \vect{x_2}}^{\calS_{c_{k+i}}} 
        \otimes \ket{\varphi_{\vect{x_1}, \vect{x_2}}}^{\chalReg', \calS', \calR}
    \]

    After this round, $\calH_{3 + i}[c_{k+i}]$ ends and register $\chalReg$ is traced out. This yields the state
    \begin{align*}
        &\sum_{\vect{x_3} \in \bbK^{t' - \delLoss}}\Tr^{\chalReg}\left[\ket{\gamma_{\vect{x_3}}}\bra{\gamma_{\vect{x_3}}}^{\chalReg, \calS, \calR} \right]
        \\
        &\quad= \sum_{\substack{\vect{x_1} \in \bbK^{t'}\\\vect{x_2} \in \bbK^{d - t'} \\ \vect{x_3} \in \bbK^{t' - \delLoss}}} 
            |\alpha_{\vect{x_1}, \vect{x_2}}|^2 \ket{\vect{x_3} + \vect{R_2}\vect{x_2}}\bra{\vect{x_3} + \vect{R_2}\vect{x_2}}^{\calS_{c_{k+i}}^R} 
            \otimes \Tr^{C'}\left[\ket{\phi_{\vect{x_1}, \vect{x_2}}}\bra{\phi_{\vect{x_1}, \vect{x_2}}}^{\chalReg', \calS', \calR}\right]
        \\
        &\quad= \sum_{\substack{\vect{x_1} \in \bbK^{t'}\\\vect{x_2} \in \bbK^{d - t'} \\ \vect{x_4} \in \bbK^{t' - \delLoss}}} 
            |\alpha_{\vect{x_1}, \vect{x_2}}|^2 \ket{\vect{x_4}}\bra{\vect{x_4}}^{\calS_{c_{k+i}}^R} 
            \otimes \Tr^{\chalReg'}\left[\ket{\phi_{\vect{x_1}, \vect{x_2}}}\bra{\phi_{\vect{x_1}, \vect{x_2}}}^{\chalReg', \calS', \calR}\right]
    \end{align*}
    where $\vect{x_4} = \vect{x_3} + \vect{R_2}\vect{x_2}$. This state is identical to the state at the end of $\calH'_{3 + i}[c_{k+i}]$ conditioned on the experiments not aborting.

\ifsubmission\qed\fi\end{proof}

\begin{claim} \label{claim:h2pi-to-h3pi}
    For every $i\in [0,n-k]$ and every secret $s$,
    \[
    \tracedist[\calH_{2+i}(s), \calH_{3 + i}(s)] = \negl
    \]
\end{claim}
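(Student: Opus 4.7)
The plan is to show that the extra predicate measurement performed in $\calH_{3+i}$ but not in $\calH_{2+i}$, namely applying $\{\Pi_{\cert_{d_{i+1}}}, I-\Pi_{\cert_{d_{i+1}}}\}$ to the data positions of $\chalReg_{d_{i+1}}$ right after the $(i+1)$-th deletion is verified, rejects with probability $q = \negl$. A standard gentle-measurement style argument then gives $\tracedist[\calH_{2+i}(s),\calH_{3+i}(s)] \leq \sqrt{q} + q/2 = \negl$, since on rejection $\calH_{3+i}$ aborts to $\bot$ while $\calH_{2+i}$ continues with a state whose trace distance to the accept-conditioned state is controlled by $2\sqrt{q}$. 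The rejection probability depends only on the joint state at the moment of the measurement, so I plan to localize the analysis to the truncated experiment ending just after the $(i+1)$-th deletion. Because $d_{i+1}$ must be deleted before $c_{k+i}$ is corrupted (otherwise $|C\setminus D|=k$ and the game aborts), this truncation lies between the corruptions of $c_{k-1+i}$ and $c_{k+i}$, and in the intervening window the challenger performs only certificate verifications (no interpolation), so $\calH_{2+i}$ and $\Sim_i$ act identically on every register. Combined with \Cref{claim:truncated-is-sim}, this shows that $q$ equals the probability that the $(i+1)$-th predicate measurement would reject inside $\Sim_i$.

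Inside $\Sim_i$, share $d_{i+1}$ is prepared as $\propto \sum_{\vect{y}\in\bbK^t}\ket{\vect{y}}^{\calS_{d_{i+1}}}\otimes\ket{\vect{y}}^{\chalReg_{d_{i+1}}}$, the check-index set $J_{d_{i+1}}\subseteq[t]$ of size $t'$ is sampled uniformly, and the check positions of $\chalReg_{d_{i+1}}$ are measured in the Hadamard basis to obtain the verification key. The reduced state on $\calS_{d_{i+1}}$ that the adversary sees is maximally mixed regardless of $J_{d_{i+1}}$, and every subsequent adversary operation, every earlier simulator predicate measurement on $\chalReg_{d_1},\dots,\chalReg_{d_i}$, and the Hadamard measurement of the check positions of $\chalReg_{d_{i+1}}$ itself all act on registers disjoint from the data positions of $\chalReg_{d_{i+1}}$. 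By standard commutation and delayed-measurement arguments, I can defer (i) the sampling of $J_{d_{i+1}}$ and (ii) a Hadamard basis measurement of every position of $\chalReg_{d_{i+1}}$ to after the adversary has output its certificate $\cert_{d_{i+1}}\in\bbK^t$. Let $\vect{y}\in\bbK^t$ be the resulting Hadamard outcomes and let $E = \{j\in[t] : \cert_{d_{i+1},j} \neq y_j\}$ be the error set. In this view, verification passes iff $E\subseteq J_{d_{i+1}}$, and the predicate $\Pi_{\cert_{d_{i+1}}}$ accepts iff $|E\cap J_{d_{i+1}}| < \ell/2$, so $q = \Pr\bigl[E\subseteq J_{d_{i+1}}\ \text{and}\ |E|\geq \ell/2\bigr]$.

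Since $J_{d_{i+1}}$ is uniform over size-$t'$ subsets and independent of $E$, for any fixed $E$ with $|E|\geq\ell/2$,
\[
\Pr\!\left[E\subseteq J_{d_{i+1}}\right] \;=\; \frac{\binom{t-|E|}{t'-|E|}}{\binom{t}{t'}} \;\leq\; \left(\tfrac{t'}{t}\right)^{\ell/2} \;\leq\; \exp\!\left(-\tfrac{r\ell}{2t}\right),
\]
and averaging over the distribution of $E$ yields $q\leq \exp(-r\ell/(2t))$. Substituting $\ell=\delLossValueShort$ gives $r\ell/(2t) = \tfrac{1}{2}\sqrt{r}\log\secpar$, and since $r\geq\secpar^2$ we obtain $q\leq \secpar^{-\Omega(\secpar)} = \negl$, as required. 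I expect the main obstacle to be the deferred-measurement step: one must carefully justify moving the sampling of $J_{d_{i+1}}$ and the Hadamard measurements on $\chalReg_{d_{i+1}}$ past the adversary's (possibly highly entangled) operations on other shares, past the earlier simulator predicate measurements, and past the original check-position measurements on $\chalReg_{d_{i+1}}$, so that the joint distribution of $(\cert_{d_{i+1}},\vect{y},J_{d_{i+1}})$ has the desired independence structure and the classical combinatorial estimate above applies. Once this commutation reduction is in place, the rest is the parameter calculation.
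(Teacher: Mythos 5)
Your proposal follows essentially the same route as the paper's proof: a gentle-measurement bound reduces the claim to showing the predicate measurement on $\chalReg_{d_{i+1}}$ rejects with negligible probability; the truncation at $d_{i+1}$ (which must precede the corruption of $c_{k+i}$) together with \Cref{claim:truncated-is-sim} transfers the question to the simulated experiment; and deferring both the choice of $J_{d_{i+1}}$ and the Hadamard measurements until after the certificate is produced makes the error set independent of the check positions. The only substantive difference is the final tail bound: the paper samples the $r$ check indices with replacement and applies Hoeffding to get $\epsilon \leq 2\exp(-\log^2(\secpar)/2)$, whereas you compute the exact hypergeometric probability $\binom{t-|E|}{t'-|E|}/\binom{t}{t'} \leq (t'/t)^{\ell/2} \leq \exp(-\sqrt{r}\log(\secpar)/2)$, which is both more elementary and quantitatively stronger; either suffices.
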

\begin{proof}
    The only difference between $\calH_{2+i}(s)$ and $\calH_{3 + i}(s)$ is an additional deletion predicate measurement $\Pi_{\cert_{d_{i+1}}}$ during the round where $d_{i+1}$ is corrupted. Say the deletion predicate accepts with probability $1-\epsilon$. Then the Gentle Measurement Lemma (\Cref{lem:gentle-measurement}) implies that, conditioned on the deletion predicate accepting, the distance between $\calH_{2+i}(s)$ and $\calH_{3 + i}(s)$ is at most $2\sqrt{\epsilon}$. We upper bound the case where the deletion predicate rejects by $1$ to obtain
    \[
        \tracedist[\calH_{2+i}(s), \calH_{3 + i}(s)] \leq (1-\epsilon)2\sqrt{\epsilon} + \epsilon
    \]
    
    Thus, it is sufficient to show that $\epsilon = \negl$, i.e. the deletion predicate accepts with high probability on $\chalReg_{d_{i+1}}$ in $\calH_{3+i}$.
    To show this, we consider the following hybrids, and claim that the probability that the deletion predicate accepts on $\chalReg_{d_{i+1}}$ is \emph{identical} in each of the hybrids.
    \begin{itemize}
        \item $\calH_{3+i}$
        \item $\calH_{3+i}[d_{i+1}]$: The only difference is that the game ends after the round where $d_{i+1}$ is deleted.
        \item $\Sim_{i+1}[d_{i+1}]$: Recall that the only difference between $\calH_{3+i}$ and $\Sim_{i+1}$ is that every share $j$ is prepared as the maximally entangled state
        \[
            \sum_{\vect{x} \in \bbK^{t}} \ket{\vect{x}}^{C_j} \otimes \ket{\vect{x}}^{A_j}
        \]
        
        \item $\Sim'_{i+1}[d_{i+1}]$: The same as $\Sim_{i+1}[d_{i+1}]$, except that after preparing the maximally entangled state for each share $j$, we delay choosing $J_j$ and measuring the check indices $\chalReg_{i,j}$ for $j\in [t]\backslash J_j$. These are now done immediately after $\Adv$ deletes $j$ by outputting $(\cert_j, j, \calR)$, and before the challenger verifies $\cert_j$.
    \end{itemize}

    Observe that $\calH_{3+i}$ and $\calH_{3+i}[d_{i+1}]$ are identical until the deletion predicate measurement in the round where $d_{i+1}$ is deleted, so the probability of acceptance is identical. By \Cref{claim:truncated-is-sim}, 
    \[
        \tracedist[\calH_{3+i}[c_{k+i}](s),\Sim_{i+1}[c_{k+i}]] = 0
    \]
    Share $d_{i+1}$ is deleted before share $c_{k+i}$ is corrupted in both $\calH_{3+i}(s)$ and $\Sim_{i+1}$, unless they abort. Therefore 
    \[
        \tracedist[\calH_{3+i}[d_{i+1}](s),\Sim_{i+1}[d_{i+1}]] = 0
    \]
    and the probability of acceptance is identical in $\calH_{3+i}[d_{i+1}](s)$ and $\Sim_{i+1}[d_{i+1}]$. Finally, the probability of acceptance is identical in $\Sim'_{i+1}[d_{i+1}]$ because the register $C$ is disjoint from the adversary's registers. 

    Thus, it suffices to show that $\epsilon = \negl$ in $\Sim'_{i+1}[d_{i+1}]$. Since $\bbK$ forms a vector space over $\bbF_2$, the certificate verification measurement and $\Pi_{\cert_{d_{i+1}}}$ are diagonal in the binary Fourier basis (i.e. the Hadamard basis) for every $\cert$. Therefore the probability that $\Verify$ accepts $\cert_{d_{i+1}}$ but the deletion predicate measurement \emph{rejects} $\chalReg_{d_{i+1}}$ is
    \[
        \epsilon = \Pr_{\substack{\vect{\cert}, \vect{y} \in \bbK^{t}\\ J\subset [t]: |J| = t'}}\left[\vect{\cert}_{\overline{J}} = \vect{y}_{\overline{J}} \land \Delta_{\bbK}(\vect{\cert}_{J}, \vect{y}_{J}) \geq \frac{\delLoss}{2}\right]
    \]
    where $J$ is the set of data indices for share $d_{i+1}$, where $\overline{J}$ is the set complement of $J$ (i.e. the set of check indices for share $d_{i+1}$), and where $\Delta_{\bbK}(\vect{\cert}_{J}, \vect{y}_{J}) = h_{\bbK}(\vect{\cert}_{J} - \vect{y}_{J})$ is the Hamming distance of $\cert_{J}$ from $y_{J}$. Here, the probability is taken over the adversary outputting a certificate $\cert$ for $d_{k+i}$, the challenger sampling a set of check indices $\overline{J}$, and the challenger measuring all of register $\chalReg_{d_{i+1}}$ in the Hadamard basis to obtain $\vect{y}\in \bbK^t$.
    
    This value can be upper bounded using Hoeffding's inequality, for any fixed $\cert$ and $\vect{y}$ with $\Delta_{\bbK}(\cert_{J}, \vect{y}_{J}) \geq \delLoss/2$. Note that the probability of acceptance is no greater than if the $\numChecks$ check indices $\overline{J}$ are sampled \emph{with} replacement. In this case, the expected number of check indices which do \emph{not} match is 
    \begin{align}
        \geq \frac{\delLoss\numChecks}{2t} 
        &= \frac{t\log(\secpar)}{\secpar + (n-k+1)\log(\secpar)} \frac{(\secpar + (n-k+1)\log(\secpar))^2}{2t} 
        \\
        &= \frac{\log(\secpar)}{2} (\secpar + (n-k+1)\log(\secpar))
    \end{align}
    Therefore Hoeffding's inequality (\Cref{claim:hoeffding}) implies that
    \begin{align}
        \epsilon 
        &\leq 2\exp\left(
            \frac{-2\left(
                    \frac{\log(\secpar)}{2} (\secpar + (n-k+1)\log(\secpar))
                \right)^2}
                {(\secpar + (n-k+1)\log(\secpar))^2}
        \right) 
        \\
        &= 2\exp\left(-\frac{\log^2(\secpar)}{2}\right) 
        \\
        &= \negl
    \end{align}

    
\ifsubmission\qed\fi\end{proof}
\fi 
\section{Acknowledgements}

We thank Orestis Chardouvelis and Dakshita Khurana for collaboration and insightful discussions during the early stages of this research.
\fi 

\bibliographystyle{alpha}
\bibliography{bib/abbrev3,bib/crypto,bib/extra}

\newcommand{\etalchar}[1]{$^{#1}$}
\begin{thebibliography}{HKM{\etalchar{+}}24}

\bibitem[ABF{\etalchar{+}}19]{EC:ABFNP19}
Benny Applebaum, Amos Beimel, Oriol Farr{\`a}s, Oded Nir, and Naty Peter.
\newblock Secret-sharing schemes for general and uniform access structures.
\newblock In Yuval Ishai and Vincent Rijmen, editors, {\em EUROCRYPT~2019,
  Part~III}, volume 11478 of {\em {LNCS}}, pages 441--471. Springer,
  Heidelberg, May 2019.

\bibitem[ABKK23]{EC:ABKK23}
Amit Agarwal, James Bartusek, Dakshita Khurana, and Nishant Kumar.
\newblock A new framework for quantum oblivious transfer.
\newblock In Carmit Hazay and Martijn Stam, editors, {\em EUROCRYPT~2023,
  Part~I}, volume 14004 of {\em {LNCS}}, pages 363--394. Springer, Heidelberg,
  April 2023.

\bibitem[Bei11]{10.1007/978-3-642-20901-7_2}
Amos Beimel.
\newblock Secret-sharing schemes: A survey.
\newblock In Yeow~Meng Chee, Zhenbo Guo, San Ling, Fengjing Shao, Yuansheng
  Tang, Huaxiong Wang, and Chaoping Xing, editors, {\em Coding and Cryptology},
  pages 11--46, Berlin, Heidelberg, 2011. Springer Berlin Heidelberg.

\bibitem[BF10]{C:BouFeh10}
Niek~J. Bouman and Serge Fehr.
\newblock Sampling in a quantum population, and applications.
\newblock In Tal Rabin, editor, {\em CRYPTO~2010}, volume 6223 of {\em {LNCS}},
  pages 724--741. Springer, Heidelberg, August 2010.

\bibitem[BGK{\etalchar{+}}24]{BGKMRR24}
James Bartusek, Vipul Goyal, Dakshita Khurana, Giulio Malavolta, Justin Raizes,
  and Bhaskar Roberts.
\newblock Software with certified deletion.
\newblock In {\em {Eurocrypt} 2024 (to appear)}, 2024.

\bibitem[BI20]{TCC:BroIsl20}
Anne Broadbent and Rabib Islam.
\newblock Quantum encryption with certified deletion.
\newblock In Rafael Pass and Krzysztof Pietrzak, editors, {\em TCC~2020,
  Part~III}, volume 12552 of {\em {LNCS}}, pages 92--122. Springer, Heidelberg,
  November 2020.

\bibitem[BK23]{C:BarKhu23}
James Bartusek and Dakshita Khurana.
\newblock Cryptography with certified deletion.
\newblock In Helena Handschuh and Anna Lysyanskaya, editors, {\em CRYPTO~2023,
  Part~V}, volume 14085 of {\em {LNCS}}, pages 192--223. Springer, Heidelberg,
  August 2023.

\bibitem[BKM{\etalchar{+}}23]{10.1007/978-3-031-48624-1_7}
James Bartusek, Dakshita Khurana, Giulio Malavolta, Alexander Poremba, and
  Michael Walter.
\newblock Weakening assumptions for publicly-verifiable deletion.
\newblock In Guy Rothblum and Hoeteck Wee, editors, {\em Theory of
  Cryptography}, pages 183--197, Cham, 2023. Springer Nature Switzerland.

\bibitem[BKP23]{C:BarKhuPor23}
James Bartusek, Dakshita Khurana, and Alexander Poremba.
\newblock Publicly-verifiable deletion via target-collapsing functions.
\newblock In Helena Handschuh and Anna Lysyanskaya, editors, {\em CRYPTO~2023,
  Part~V}, volume 14085 of {\em {LNCS}}, pages 99--128. Springer, Heidelberg,
  August 2023.

\bibitem[BL90]{C:BenLei88}
Josh~Cohen Benaloh and Jerry Leichter.
\newblock Generalized secret sharing and monotone functions.
\newblock In Shafi Goldwasser, editor, {\em CRYPTO'88}, volume 403 of {\em
  {LNCS}}, pages 27--35. Springer, Heidelberg, August 1990.

\bibitem[Bla79]{Blakley1899SafeguardingCK}
G.~R. Blakley.
\newblock Safeguarding cryptographic keys.
\newblock {\em 1979 International Workshop on Managing Requirements Knowledge
  (MARK)}, pages 313--318, 1979.

\bibitem[DF90]{C:DesFra89}
Yvo Desmedt and Yair Frankel.
\newblock Threshold cryptosystems.
\newblock In Gilles Brassard, editor, {\em CRYPTO'89}, volume 435 of {\em
  {LNCS}}, pages 307--315. Springer, Heidelberg, August 1990.

\bibitem[DFL{\etalchar{+}}09]{C:DFLSS09}
Ivan Damg{\aa}rd, Serge Fehr, Carolin Lunemann, Louis Salvail, and Christian
  Schaffner.
\newblock Improving the security of quantum protocols via commit-and-open.
\newblock In Shai Halevi, editor, {\em CRYPTO~2009}, volume 5677 of {\em
  {LNCS}}, pages 408--427. Springer, Heidelberg, August 2009.

\bibitem[Gao03]{Gao03}
Shuhong Gao.
\newblock {\em A New Algorithm for Decoding Reed-Solomon Codes}, pages 55--68.
\newblock Springer US, Boston, MA, 2003.

\bibitem[HKM{\etalchar{+}}24]{HKMNPY24}
Taiga Hiroka, Fuyuki Kitagawa, Tomoyuki Morimae, Ryo Nishimaki, Tapas Pal, and
  Takashi Yamakawa.
\newblock Certified everlasting secure collusion-resistant functional
  encryption, and more.
\newblock In {\em {Eurocrypt} 2024 (to appear)}, 2024.

\bibitem[HMNY21]{AC:HMNY21}
Taiga Hiroka, Tomoyuki Morimae, Ryo Nishimaki, and Takashi Yamakawa.
\newblock Quantum encryption with certified deletion, revisited: Public key,
  attribute-based, and classical communication.
\newblock In Mehdi Tibouchi and Huaxiong Wang, editors, {\em ASIACRYPT~2021,
  Part~I}, volume 13090 of {\em {LNCS}}, pages 606--636. Springer, Heidelberg,
  December 2021.

\bibitem[HMNY22]{C:HMNY22}
Taiga Hiroka, Tomoyuki Morimae, Ryo Nishimaki, and Takashi Yamakawa.
\newblock Certified everlasting zero-knowledge proof for {QMA}.
\newblock In Yevgeniy Dodis and Thomas Shrimpton, editors, {\em CRYPTO~2022,
  Part~I}, volume 13507 of {\em {LNCS}}, pages 239--268. Springer, Heidelberg,
  August 2022.

\bibitem[Hoe94]{Hoeffding1994}
Wassily Hoeffding.
\newblock {\em Probability Inequalities for sums of Bounded Random Variables},
  pages 409--426.
\newblock Springer New York, New York, NY, 1994.

\bibitem[ISN87]{ItoSaiNis87}
M.~Ito, A.~Saito, and Takao Nishizeki.
\newblock Secret sharing schemes realizing general access structure.
\newblock In {\em Proc.\ {IEEE} Global Telecommunication Conf. (Globecom'87)},
  pages 99--102, 1987.

\bibitem[KNY23]{TCC:KitagawaNY23}
Fuyuki Kitagawa, Ryo Nishimaki, and Takashi Yamakawa.
\newblock Publicly verifiable deletion from minimal assumptions.
\newblock In Guy~N. Rothblum and Hoeteck Wee, editors, {\em Theory of
  Cryptography - 21st International Conference, {TCC} 2023, Taipei, Taiwan,
  November 29 - December 2, 2023, Proceedings, Part {IV}}, volume 14372 of {\em
  Lecture Notes in Computer Science}, pages 228--245. Springer, 2023.

\bibitem[LM01]{lagrange1901lectures}
J.L. Lagrange and T.J. McCormack.
\newblock {\em Lectures on Elementary Mathematics}.
\newblock Open Court Publishing Company, 1901.

\bibitem[LV18]{10.1145/3188745.3188936}
Tianren Liu and Vinod Vaikuntanathan.
\newblock Breaking the circuit-size barrier in secret sharing.
\newblock In {\em Proceedings of the 50th Annual ACM SIGACT Symposium on Theory
  of Computing}, STOC 2018, page 699–708, New York, NY, USA, 2018.
  Association for Computing Machinery.

\bibitem[Por23]{poremba:LIPIcs.ITCS.2023.90}
Alexander Poremba.
\newblock {Quantum Proofs of Deletion for Learning with Errors}.
\newblock In Yael Tauman~Kalai, editor, {\em 14th Innovations in Theoretical
  Computer Science Conference (ITCS 2023)}, volume 251 of {\em Leibniz
  International Proceedings in Informatics (LIPIcs)}, pages 90:1--90:14,
  Dagstuhl, Germany, 2023. Schloss Dagstuhl -- Leibniz-Zentrum f{\"u}r
  Informatik.

\bibitem[RK05]{TCC:RenKon05}
Renato Renner and Robert K{\"o}nig.
\newblock Universally composable privacy amplification against quantum
  adversaries.
\newblock In Joe Kilian, editor, {\em TCC~2005}, volume 3378 of {\em {LNCS}},
  pages 407--425. Springer, Heidelberg, February 2005.

\bibitem[RS60]{ReedSolomon60}
I.~S. Reed and G.~Solomon.
\newblock Polynomial codes over certain finite fields.
\newblock {\em Journal of the Society for Industrial and Applied Mathematics},
  8(2):300--304, 1960.

\bibitem[Sha79]{Shamir79}
Adi Shamir.
\newblock How to share a secret.
\newblock {\em Communications of the Association for Computing Machinery},
  22(11):612--613, November 1979.

\bibitem[WB86]{WelchBerlekamp86}
Lloyd~R Welch and Elwyn~R Berlekamp.
\newblock Error correction for algebraic block codes, December~30 1986.
\newblock US Patent 4,633,470.

\bibitem[Win99]{TIF:Win99}
Andreas~J. Winter.
\newblock Coding theorem and strong converse for quantum channels.
\newblock {\em {IEEE} Trans. Inf. Theory}, 45(7):2481--2485, 1999.

\end{thebibliography}

\ifextendedabstract
\else
\appendix
\ifsubmission

\fi
\section{Tighter Parameters for the Threshold Construction}

In this section, we give alternate parameter settings for the construction in \Cref{fig:SS-CD-Construction} that result in slightly smaller share sizes. The parameters are described in \Cref{fig:acd-parameters-tight}. The main difference from \Cref{fig:acd-parameters-loose} is that $r$ is slightly smaller, which also impacts $t$.


\begin{figure}[h]
    \begin{framed}
    \begin{minipage}{\textwidth}
         The construction in \Cref{fig:SS-CD-Construction} uses the following parameters.
         \begin{itemize}
            \item Each share consists of $t$ total $\bbK$-registers, where 
            \[
                t = 
                (k+1)\numChecks \left(1 + \frac{(n-k+1)\log(\secpar)}{\sqrt{\numChecks} - (n-k+1)\log(\secpar)}\right) + 1
            \]
            
            \item A share is divided into $\numChecks$ check indices and $t'=t-r$ data indices, where
            \[
                \numChecks = \secpar + (n-k+1)^2 \log^2(\secpar)
            \]

            \item $\ell$ intuitively represents an upper bound on the amount of information which is not destroyed when an adversary generates a valid deletion certificate for a share.
            \[
                \ell = \delLossValueShort
            \]
            
            See the proof of \Cref{claim:strongcd} for a more precise usage of $\ell$.
            
            \item The secret will be encoded in a polynomial of degree
            \[
            d = (k-1)t' + (n-k+1)\delLoss
            \]
         \end{itemize}
    \end{minipage}
    \end{framed}
    \caption{Alternate Parameters for Secret Sharing with Adaptive Certified Deletion}
    \label{fig:acd-parameters-tight}
\end{figure}

\begin{lemma}\label{claim:recon-correctness-tight}
    The construction in \Cref{fig:SS-CD-Construction} has reconstruction correctness with the parameters in \Cref{fig:acd-parameters-tight}.
\end{lemma}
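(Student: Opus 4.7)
The plan is to mirror the proof of Lemma \ref{claim:recon-correctness} essentially verbatim, and observe that almost none of its content depends on the specific value of $\numChecks$. Concretely, I would start in the same way: any $k$-subset of shares yields $kt$ measured pairs $(it+j, y_{i,j})$, and $\mathsf{Correct}_{\bbK,p}$ recovers $f$ whenever strictly fewer than $(kt-p)/2$ of these disagree with $f$. Since only check positions can disagree, and each share has $\numChecks$ of them, the requirement becomes $2k\numChecks < kt - p$. Substituting $p = (k-1)t' + (n-k+1)\delLoss$ and $t' = t - \numChecks$ simplifies this, exactly as before, to $t - (n-k+1)\delLoss > (k+1)\numChecks$, and then plugging in $\delLoss = \delLossValueShort$ gives
\[
t > (k+1)\numChecks \left(1 + \frac{(n-k+1)\log(\secpar)}{\sqrt{\numChecks} - (n-k+1)\log(\secpar)}\right).
\]
All of this derivation is parameter-agnostic and can be copied directly from the loose proof.

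The only step that changes is how one finishes. In the loose proof, the specific choice $\numChecks = (\secpar + (n-k+1)\log(\secpar))^2$ makes the denominator $\sqrt{\numChecks} - (n-k+1)\log(\secpar)$ collapse to $\secpar$, simplifying the final expression. With the tight setting $\numChecks = \secpar + (n-k+1)^2\log^2(\secpar)$, no such collapse happens, so I would simply leave the bound as displayed above. The value of $t$ in Figure \ref{fig:acd-parameters-tight} is \emph{defined} to be exactly this right-hand side plus $1$, so strict inequality holds by construction, which is the desired conclusion.

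The one subtlety to check, which is also the main (and only) potential obstacle, is that the derivation above requires $\sqrt{\numChecks} > (n-k+1)\log(\secpar)$, equivalently $\numChecks > (n-k+1)^2 \log^2(\secpar)$; without this the denominator is non-positive and the manipulation is invalid. For the tight parameters this is immediate since $\numChecks = \secpar + (n-k+1)^2\log^2(\secpar) > (n-k+1)^2\log^2(\secpar)$ for every $\secpar \geq 1$. This check is the analogue of the loose proof's verification of Equation \eqref{eq:correctness-nonnegative}, and it is the place where one has to be slightly careful about the parameter change — but it goes through easily. No other calculation is required, and I expect the proof to be only a few lines once the relevant portion of the loose proof is invoked.
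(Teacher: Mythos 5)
Your proposal is correct and follows essentially the same route as the paper's proof: both rederive the condition $t > (k+1)\numChecks\bigl(1 + \frac{(n-k+1)\log(\secpar)}{\sqrt{\numChecks} - (n-k+1)\log(\secpar)}\bigr)$ from the Reed--Solomon decoding bound, verify that the denominator is positive because $\numChecks > (n-k+1)^2\log^2(\secpar)$, and conclude by noting the tight choice of $t$ is defined to exceed this bound. No gaps.
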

\begin{proof}
    The set $\{(it + j, y_{i,j})\}_{i\in P', j\in [t]}$ contains $kt$ pairs which were obtained by measuring $k$ shares. As mentioned in \Cref{sec:prelim-poly-RS}, if all but $e<(kt - d)/2$ of these pairs $(it+j, y_{i,j})$ satisfy $y_{i,j} = f(it + j)$, then $\mathsf{Correct}_{\bbK,d}$ recovers the original polynomial $f$, where $f(0) = s$. The only points which do not satisfy this are the check positions, of which there are $\numChecks$ per share, for a total of $k\numChecks$. Therefore for correctness, we require that
    
    \begin{align}
    2k\numChecks &< kt - d
        \\
        &= kt - (k-1)(t-\numChecks) - (n-k+1)\delLoss
        \\
        &= t + (k-1)\numChecks- (n-k+1)\delLoss
    \end{align}
    Therefore $t - (n-k+1)\ell > (k+1)\numChecks$. Substituting $\delLoss = \delLossValueShort$ yields
    \begin{align}
        t\left(1 - (n-k+1)\frac{\log(\secpar)}{\sqrt{r}}\right) &> (k+1)\numChecks
        \\
        t &> (k+1)\numChecks\frac{1}{1 - (n-k+1)\frac{\log(\secpar)}{\sqrt{r}}}
        \label{eq:correctness-nonnegative-tight}
        \\
        &= (k+1)\numChecks\frac{\sqrt{\numChecks}}{\sqrt{\numChecks} - (n-k+1)\log(\secpar)}
        \\
        &= (k+1)\numChecks \left(1 + \frac{(n-k+1)\log(\secpar)}{\sqrt{\numChecks} - (n-k+1)\log(\secpar)}\right)
    \end{align}
    Note that \Cref{eq:correctness-nonnegative-tight} requires that $\left(1 - (n-k+1)\delLossValueShort\right) > 0$. Since the number of check positions is $\numChecks = \secpar + (n-k+1)^2\log^2(\secpar)$, we have
    \begin{align}
        1 - (n-k+1)\frac{\log(\secpar)}{\sqrt{\secpar + (n-k+1)^2\log^2(\secpar)}} 
        &> 1 - \frac{(n-k+1)\log(\secpar)}{(n-k+1)\log(\secpar)} = 0
    \end{align}
    Finally, observe that the choice of parameters in the construction satisfies these constraints.
\ifsubmission\qed\fi\end{proof}

\begin{lemma}
    The construction in \Cref{fig:SS-CD-Construction} has adaptive certified deletion security with the parameters in \Cref{fig:acd-parameters-tight}.
\end{lemma}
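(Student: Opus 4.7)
The plan is to observe that the entire proof architecture of \Cref{claim:strongcd} (the hybrids $\calH_0,\dots,\calH_{3+n-k}$, the simulators $\Sim_i$, and the inductive argument combining \Cref{claim:h0-h2}, \Cref{claim:truncated-is-sim}, and \Cref{claim:h2pi-to-h3pi}) is parameter-agnostic: it only uses three structural facts about $t$, $t'$, $r$, $\ell$, and $d$, namely (i) reconstruction correctness holds, (ii) the submatrices of the Lagrange interpolation operator are full rank (via \Cref{fact:interpolation-linear-independence}), and (iii) the deletion-predicate measurement accepts with overwhelming probability whenever $\Verify$ accepts. Thus I would simply re-instantiate the proof of \Cref{claim:strongcd} verbatim, pausing only to verify that each of (i)--(iii) is preserved under the tighter parameters of \Cref{fig:acd-parameters-tight}.

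For (i), reconstruction correctness under the tighter parameters is already established by \Cref{claim:recon-correctness-tight}. For (ii), the extractor invocation in the proof of \Cref{claim:truncated-is-sim} requires only that $R_1 \in \bbK^{(t'-\ell)\times t'}$ be a submatrix of an interpolation matrix for a polynomial of degree $d$ and that the Hamming-weight bound $< \ell/2$ implied by $\Pi_{\cert}$ match the extractor input bound $< (t' - (t' - \ell))/2$; both of these depend only on the relationship $t' - r = t'$ being well-defined and on $d \geq t' - \ell - 1$, which the tighter parameters still satisfy. Neither step consumes any slack from the $r$ used in \Cref{fig:acd-parameters-loose}.

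The main obstacle, and the only calculation that genuinely depends on the precise value of $r$, is step (iii): the Hoeffding-type bound at the end of \Cref{claim:h2pi-to-h3pi} which controls the probability $\epsilon$ that $\Verify$ accepts while $\Pi_{\cert_{d_{i+1}}}$ rejects. I would redo this computation under $r = \secpar + (n-k+1)^2\log^2(\secpar)$ and $\ell = t\log(\secpar)/\sqrt{r}$. The expected number of mismatches among $r$ check positions sampled with replacement from a string with Hamming distance $\geq \ell/2$ over $t$ positions is at least
\[
    \mu \;\geq\; \frac{\ell r}{2t} \;=\; \frac{\log(\secpar)\sqrt{r}}{2}.
\]
Then Hoeffding's inequality (\Cref{claim:hoeffding}) yields
\[
    \epsilon \;\leq\; 2\exp\!\left(\frac{-2\mu^2}{r}\right) \;\leq\; 2\exp\!\left(\frac{-\log^2(\secpar)\, r / 2}{r}\right) \;=\; 2\exp\!\left(-\frac{\log^2(\secpar)}{2}\right) \;=\; \negl,
\]
exactly as in the loose-parameter analysis. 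The key observation is that the product $\ell \cdot \sqrt{r} = t\log(\secpar)$ is insensitive to the split between $\ell$ and $r$, so shrinking $r$ from $(\secpar + (n-k+1)\log(\secpar))^2$ to $\secpar + (n-k+1)^2\log^2(\secpar)$ does not change the final negligibility bound.

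With (i)--(iii) confirmed, the chain of inequalities $\tracedist[\calH_0(s_0),\calH_0(s_1)] \leq \tracedist[\calH_{3+n-k}(s_0),\calH_{3+n-k}(s_1)] + \negl \leq \negl$ goes through unchanged, proving adaptive certified deletion security under the tighter parameters.
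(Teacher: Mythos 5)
Your proposal is correct and follows essentially the same route as the paper: the paper's proof is likewise a re-instantiation of \Cref{claim:strongcd} in which the only step redone is the Hoeffding bound in \Cref{claim:h2pi-to-h3pi}, and your computation of $\mu \geq \ell r/(2t) = \log(\secpar)\sqrt{r}/2$ and $\epsilon \leq 2\exp(-\log^2(\secpar)/2)$ matches the paper's exactly. Your observation that the invariant $\ell\sqrt{r} = t\log(\secpar)$ is what makes the bound insensitive to the parameter change is a nice way to phrase why the argument survives.
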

\begin{proof}[Proof Sketch.]
    The proof is almost the same as that of \Cref{claim:strongcd}, except for the application of Hoeffding's inequality in \Cref{claim:h2pi-to-h3pi}. The expected number of check indices which do not match becomes
     \begin{align*}
        \geq \frac{\delLoss\numChecks}{2t} 
        &= \frac{t\log(\secpar)}{\sqrt{\secpar + (n-k+1)^2\log^2(\secpar)}} \frac{\secpar + (n-k+1)^2\log^2(\secpar)}{2t} 
        \\
        &= \frac{\log(\secpar)}{2} \sqrt{\secpar + (n-k+1)^2\log^2(\secpar)}
    \end{align*}
    Then Hoeffding's inequality implies
    \begin{align*}
        \epsilon 
        &\leq 2\exp\left(
            \frac{-2\left(\frac{\log(\secpar)}{2} \sqrt{\secpar + (n-k+1)^2\log^2(\secpar)}\right)^2}{\secpar + (n-k+1)^2\log^2(\secpar)}
        \right) 
        \\
        &= 2\exp\left(-\frac{\log^2(\secpar)}{2}\right) 
        \\
        &= \negl
    \end{align*}
\ifsubmission\smallskip\qed\fi\end{proof}

\ifsubmission
\section{Proof of Adaptive Certified Deletion Security}\label{app:acd-proof}

Recall that $c_{\corctr}$ is the $\corctr$'th share to be corrupted (i.e. added to $C$) and $d_{\delctr}$ is the $\delctr$'th share to be deleted (i.e. added to $D$).
Observe that if $c_{k-1+\delctr}$ is corrupted before $d_{\delctr}$ is deleted, then $C\backslash D$ has size $\geq k$ and is authorized, so $\ACD_{(k,n)}$ would abort.

\begin{lemma}\label{claim:strongcd}
    The construction in \Cref{fig:SS-CD-Construction} using parameters from \Cref{fig:acd-parameters-loose} satisfies adaptive certified deletion for threshold secret sharing.
\end{lemma}


\noindent We begin by defining a projector which will be useful for reasoning about how many data indices were \emph{not} destroyed when an adversary produces a valid certificate for a share $i$.
A certificate $\cert_i$ for share $i$ can be parsed as $t$ elements $\cert_{i,1}, \dots, \cert_{t,i}$ of $\bbK$. Denote $\cert_i' = (\cert_{i,j})_{j\in J_{i}}$ to be the subtuple of elements belonging to data indices. For any certificate $\cert$, we define the projector\footnote{This projector defines the ``deletion predicate'' mentioned in the technical overview (\Cref{sec:overview-proving-adaptive}).} 
\[
    \Pi_{\cert} = \sum_{\vect{u} \in \bbK^{t'}:h_{\bbK}(\vect{u})<\ell/2} H^{\otimes t' \lceil \log_{2}(n + 1)\rceil}\ket{\vect{u}+\vect{\cert}'}\bra{\vect{u}+\vect{\cert}'}H^{\otimes t' \lceil \log_{2}(n + 1)\rceil}
\]

Note that $H$ is the Hadamard gate, i.e. it implements a quantum Fourier transform over the binary field $\bbF_2$, and that the Hamming weight is taken over $\bbK$. 

Let $\Adv$ be any adversary which is initialized with a state $\ket{\psi}$ on register $\calR$. For $s\in \{s_0, s_1\}$, define the following $n-k+3$ hybrid experiments, where $\calH_0(s)$ is the original $\ACD(1^\secpar, \ket{\psi}, \Adv, s)$ experiment.
\\

\noindent\underline{$\calH_1(s)$} \\

In $\calH_1(s)$, we sample the shares lazily using polynomial interpolation.

\begin{enumerate}
    \item For each share $i$, sample the set of data indices $J_i\subset [t]$. Then for every share $i$ and every check position $j\in [t]\backslash J_i$, sample the check position $\ket{\psi_{i,j}}$ as in $\calH_0(s)$.
    
    \item For each share $i$, divide the data indices $J_i$ into a left set $J_{i}^L$ of size $\ell$ and a right set $J_{i}^R$ of size $t' - \ell$.
    For each $j\in J_{i}^L$, sample $f(it+j)\gets \bbK$ uniformly at random.

    \item Until $k-1$ shares are corrupted, i.e. $|C| = k-1$, run $\Adv(\{S_j\}_{j\in C}, \calR)$ as in $\ACD$, with the following exception. Whenever $\Adv$ corrupts a new share by outputting $(c_{\corctr}, \calR)$, finish preparing share $c_{\corctr}$ by sampling $f(c_{\corctr}t + j)\gets \bbK$ uniformly at random for every $j \in J_{c_i}^{R}$. 
    
    At the end of this step, exactly $p = (k-1)t' + (n-k+1)\ell$ points of $f$ have been determined, in addition to $f(0) = s$. This uniquely determines $f$.

    \item Continue to run $\Adv(\{S_j\}_{j\in C}, \calR)$ as in $\ACD$, with the following exception. Whenever $\Adv$ corrupts a new share by outputting $(c_{k-1+\delctr}, \calR)$, finish preparing $c_{k-1+\delctr}$ by interpolating the points in $J_{c_{k-1+\delctr}}^{R}$ using share $d_{\delctr}$ and any other set of $p-t'$ points that have already been determined on $f$.
%
    Specifically, let 
    \[
    \mathsf{Int}_{k-1+\delctr} \subset \{0\} \cup \bigcup_{m\in C} \{mt+j:j\in J_m\} \cup \bigcup_{m\notin C} \{mt+j:j\in J_{m}^{L}\}
    \] 
    
    be any set of $p+1$ indices to be used in the interpolation, such that 
    \[\{d_{\delctr} t + j:j \in J_{d_{\delctr}}\} \subset \mathsf{Int}_{k-1+\delctr}\]
    For each $j\in J_{c_{k-1+\delctr}}^{R}$, compute 
    \[
        f(c_{k-1+\delctr} t+j) \gets \Interpolate_{p}\left(c_{k-1+\delctr} t+j, \{(m, f(m)): m\in \mathsf{Int}_{k-1+\delctr}\}\right)
    \]
    See \Cref{sec:prelim-poly-RS} for the definition of $\Interpolate$.

    Note that if $\ACD$ does not abort in a round, $|C\backslash D| \leq k-1$. In the round where  $\Adv$ corrupts $c_{k-1+i}$, $|C| = k-1+i$, so $d_i$ has already been determined. 
\end{enumerate}

\noindent\underline{$\calH_2(s)$} 
\\

In $\calH_2(s)$, we purify the share sampling using a register $\chalReg = (\chalReg_1, \dots, \chalReg_n)$ which is held by the challenger. The challenger will maintain a copy of share $i$ in register $\chalReg_{i} = (\chalReg_{i,1}, \dots, \chalReg_{i,t})$. Both $\calS$ and $\chalReg$ are initialized to $\ket{0}$ at the beginning of the experiment.

\begin{enumerate}
    \item For each share $i$, sample the set of data indices $J_i\subset [t]$. Then for every share $i$ and every check position $j\in [t]\backslash J_i$, prepare the state
    \[
        \propto \sum_{y\in\bbK} \ket{y}^{\calS_{i,j}} \otimes \ket{y}^{\chalReg_{i,j}}
    \]
    Measure $\chalReg_{i,j}$ in the Hadamard basis to obtain $y_{i,j}$ for the verification key.
    
    \item Divide each $J_i$ into $J_{i}^L$ and $J_{i}^R$ as in $\calH_1(s)$. For each $j\in J_{i}^L$, prepare the state 
    \[
        \propto \sum_{y\in\bbK} \ket{y}^{\calS_{i,j}} \otimes \ket{y}^{\chalReg_{i,j}}
    \]

    \item Until $k-1$ shares are corrupted, i.e. $|C| = k-1$, run $\Adv(\{S_j\}_{j\in C}, \calR)$ as in $\ACD$, with the following exception. Whenever $\Adv$ corrupts a new share by outputting $(c_{\corctr}, \calR)$, for every $j\in J_{c_{\corctr}}^R$ prepare the state 
    \[
        \propto \sum_{y\in\bbK} \ket{y}^{\calS_{c_{\corctr},j}} \otimes \ket{y}^{\chalReg_{c_{\corctr},j}}
    \]

    \item Continue to run $\Adv(\{S_j\}_{j\in C}, \calR)$ as in $\ACD$, with the following exception whenever $\Adv$ corrupts a new share by outputting $(c_{k-1+\delctr}, \calR)$. Let $\mathsf{Int}_{k-1+\delctr}$ be the set of indices to be used in interpolation for share $c_{k-1+\delctr}$, as in $\calH_1(s)$. For each $j\in J_{c_{k-1+\delctr}}$, compute
    \[
        \chalReg_{c_{k-1+\delctr},j} \gets \Interpolate_{p}\left(c_{k-1+\delctr}t + j, \left(mt+j, \calS_{m, j} \right)_{mt + j \in \mathsf{Int}_{k-1+\delctr}} \right)
    \]
    Finally, copy $\chalReg_{c_{k-1+\delctr},j}$ into $\calS_{c_{k-1+\delctr},j}$ in the computational basis, i.e. perform a controlled NOT operation with source register $\chalReg_{c_{k-1+\delctr},j}$ and target register $\calS_{c_{k-1+\delctr},j}$.
    
\end{enumerate}

We emphasize that the timing of initializing each $\calS_{i,j}$ is the same as in $\calH_1(s)$. Note that since $\calH_2(s)$ outputs either $\bot$ or $\Adv$'s view, register $\calC$ never appears in the output of the experiment.
\\

\noindent\underline{$\calH_{2+i}(s)$ for $i\in [n-k+1]$}\\

The only difference between $\calH_{2+i}$ and $\calH_{3+i}$ is that when the $i$'th share $d_i$ is deleted in $\calH_{3+i}$ (i.e. $D$ reaches size $i$), the challenger performs a ``deletion predicate'' measurement on register $\chalReg_{d_i}$. Specifically, let $\cert_{d_i}$ be the certificate output by $\Adv$ for share $d_i$. Immediately after verifying $\cert_{d_i}$ and adding $d_i$ to $D$, the challenger measures the data positions in register $\chalReg_{d_{i}}$ (i.e. register $(\chalReg_{d_{i}, j})_{j\in J_{d_{i}}}$) with respect to the binary projective measurement $\{\Pi_{\cert_{k + i}}, I - \Pi_{\cert_{k + i}}\}$. If the measurement result is ``reject'' (i.e. $I - \Pi_{\cert_{k  + i}}$), immediately output $\bot$ in the experiment.
The difference between $\calH_2$ and $\calH_3$ is the same, for $i=1$.
\\

In addition to hybrids $\calH_0$ through $\calH_{3+n-k}$, we define a set of simulated experiments.  Each $\Sim_i$ will be useful for reasoning about hybrid $\calH_{2+i}$. $\Sim_i$ is similar to $\calH_{2+i}$ except that all of the shares are randomized, whereas in $\calH_{2+i}$, shares corrupted after $c_{k-1+i}$ are interpolated.
\\

\noindent\underline{$\Sim_i$ for $i\in [n-k+1]$} \\

Run the $\ACD(1^\secpar, \ket{\psi}, \Adv, s)$ experiment, with the following exceptions. 
\begin{itemize}
    \item Do \emph{not} initialize $(\calS_1, \dots, \calS_n, \verkey) \gets \Split_{\bbS}(1^\secpar, s)$ in step 1.
    \item Whenever $\Adv$ corrupts a new share by outputting $(c_{\corctr}, \calR)$, prepare the state 
    \[
        \propto \sum_{\vect{y}\in\bbK^t} \ket{\vect{y}}^{\calS_{c_\corctr}} \otimes \ket{\vect{y}}^{\chalReg_{c_{\corctr}}}
    \]
    Then, sample the set of data indices $J_{c_\corctr}\subset [t]$ of size $t'$ and for each check index $j\in [t]\backslash J_{c_\corctr}$ measure $\chalReg_{c_{\corctr},j}$ in the Hadamard basis to obtain $y_{\corctr,j}$ for the verification key.
    
    \item For the first $i$ deletions $d_\delctr$ where $\delctr\leq i$, immediately after the challenger verifies $\cert_{d_\delctr}$ and adds $d_\delctr$ to $D$, it measures the data positions in register $\chalReg_{d_{\delctr}}$ with respect to the binary projective measurement $\{\Pi_{\cert_{d_\delctr}}, I - \Pi_{\cert_{d_\delctr}}\}$. If the measurement result is ``reject'', immediately output $\bot$ in the experiment.
\end{itemize}

\begin{proposition}\label{claim:h0-h2} 
For every secret $s$,
    \[
        \tracedist[\calH_0(s), \calH_2(s)] = 0
    \]
\end{proposition}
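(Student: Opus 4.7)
The plan is to decompose the claim via the triangle inequality, showing separately that $\tracedist[\calH_0(s), \calH_1(s)] = 0$ and $\tracedist[\calH_1(s), \calH_2(s)] = 0$.

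For the first equality, I would appeal to correctness of Lagrange interpolation. In $\calH_0(s)$ the challenger samples a uniformly random polynomial $f$ of degree $p$ subject to $f(0) = s$, and then evaluates $f$ on all data indices. In $\calH_1(s)$ the challenger instead samples $f$ lazily: the $p$ field elements it draws uniformly in steps 2 and 3 (the left data positions of every share, plus the right data positions of the first $k-1$ corrupted shares), together with the constraint $f(0) = s$, pin down exactly $p+1$ evaluations of a degree-$p$ polynomial. Since any such $p+1$ evaluations uniquely determine $f$, the joint distribution of all evaluations produced in $\calH_1(s)$ (including the interpolated ones generated in step 4 via $\Interpolate_p$) agrees exactly with the distribution of evaluations of $f$ in $\calH_0(s)$. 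The check positions are sampled by the same classical procedure in both experiments, so the two produce identical states.

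For the second equality, the key tool is the deferred measurement principle. Each pair $\sum_{y \in \bbK} \ket{y}^{\calS_{i,j}} \otimes \ket{y}^{\chalReg_{i,j}}$ prepared in $\calH_2(s)$ is maximally entangled, so measuring $\chalReg_{i,j}$ in the computational basis yields a uniformly random $y$ in both registers. For check positions, the Hadamard-basis measurement of $\chalReg_{i,j}$ that $\calH_2(s)$ already performs leaves $\calS_{i,j}$ in the state $H^{\otimes \lceil \log_2(n+1)\rceil}\ket{y_{i,j}}$ with $y_{i,j}$ uniform, exactly reproducing the check-position sampling in $\calH_1(s)$ (and the verification key is identically distributed). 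For the data positions prepared in steps 2, 3, and the entangled preparation inside step 4, I would advance a hypothetical computational-basis measurement on each $\chalReg_{i,j}$ to immediately after its preparation; this commutes with all subsequent adversarial operations, which act on the disjoint $\calS$ system, and it collapses each entangled pair into the same uniformly random classical value stored in $\calS_{i,j}$, matching $\calH_1(s)$.

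The remaining bookkeeping is that the coherent $\Interpolate_p$ in step 4 of $\calH_2(s)$, when applied to the computational-basis states produced by these delayed measurements, coincides with the classical interpolation prescribed in $\calH_1(s)$, because $\Interpolate_p$ is a classical, $\bbK$-linear, reversible map. The final controlled copy from $\chalReg_{c_{k-1+\delctr},j}$ into $\calS_{c_{k-1+\delctr},j}$ then just prepares the classical basis state $\ket{f(c_{k-1+\delctr}t+j)}$, again matching $\calH_1(s)$. I do not foresee a serious obstacle here; the only point requiring care is to verify that the imagined measurements on the challenger's $\chalReg$ registers commute with the adversary's actions on $\calS$, which holds because they act on disjoint subsystems.
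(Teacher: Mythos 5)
Your proposal is correct and follows essentially the same route as the paper: split through $\calH_1(s)$, use correctness of Lagrange interpolation (the $p$ lazily sampled points plus $f(0)=s$ determine $f$) for the first leg, and a delayed-measurement argument on the maximally entangled pairs, together with the observation that the coherent interpolation acts classically on computational-basis states, for the second leg. No gaps.
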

\begin{proof}
    It is sufficient to show that $\tracedist[\calH_0(s), \calH_1(s)] = 0$ and $\tracedist[\calH_1(s), \calH_2(s)] = 0$. The former is true by the correctness of polynomial interpolation (see \Cref{sec:prelim-poly-RS}). To see the latter, observe that steps 1, 2, and 3 in $\calH_2(s)$ are equivalent to sampling a uniformly random state (in \emph{any} basis) in register $\calS_{i,j}$ by preparing a uniform superposition over the basis elements in $\calS_{i,j}$, then performing a delayed measurement from $\calS_{i,j}$ to $\chalReg_{i,j}$ in that basis. Observe that steps 1, 2, and 3 in $\calH_{1}(s)$ also sample uniformly random states in $\calS_{i,j}$. Now consider step 4. In $\calH_2(s)$, step 4 performs a (classical) polynomial interpolation using copies of points $(it + j, f(it + j))$ that are obtained by measuring $\calS_{i,j}$. This is equivalent to directly interpolating using $\calS_{i,j}$ if $\calS_{i,j}$ contained a computational basis state, which is the case in $\calH_1(s)$.
\ifsubmission\qed\fi\end{proof}

We show that $\calH_2$ has negligible trace distance from $\calH_{3 + n - k}$ in \Cref{claim:h2pi-to-h3pi}. To prove \Cref{claim:h2pi-to-h3pi}, we will need an additional fact which we show in \Cref{claim:truncated-is-sim}. \Cref{claim:truncated-is-sim} will also show that the final hybrid $\calH_{3 + n - k}$ has zero trace distance from $\Sim_{n-k+1}$, which is independent of the secret $s$. 

Let $\calH_i[c_\corctr](s)$ denote the truncated game where $\calH_i(s)$ is run until the end of the round where the $\corctr$'th corruption occurs, i.e. when $|C|$ reaches $\corctr$. At this point, $\calH_i[c_\corctr](s)$ outputs the adversary's register $\calR$ and the set of corrupted registers $\{\calS_{j}\}_{j\in C}$, unless the game has ended earlier (e.g. from an abort).\footnote{The truncated version of the game outputs both the set of corrupted registers and $\calR$, while the full version only outputs $\calR$. In the full version, the adversary can move whatever information it wants into $\calR$. However, the truncated game ends early, so the adversary may not have done this when the game ends. Outputting the corrupted registers directly ensures that they appear in the output in some form if the game does not abort.}
Let $\calH_i[d_\delctr](s)$ similarly represent the truncated game where $\calH_i(s)$ is run until the end of the round where the $\delctr$'th deletion occurs, i.e. when $|D|$ reaches $\delctr$.
Define $\Sim_i[c_\corctr]$ and $\Sim_i[d_{\delctr}]$ similarly. 

Observe that after the $n$'th corruption in any hybrid experiment, the rest of the challenger's actions in the experiment is independent of the secret $s$. Therefore for every hybrid $\calH_i$ and every pair of secrets $(s_0, s_1)$,

\[
\tracedist[\calH_i[c_n](s_0), \calH_i[c_n](s_1)] = \tracedist[\calH_i(s_0), \calH_i(s_1)]
\]

\begin{proposition}\label{claim:truncated-is-sim}
    For every $i \in [0,n-k+1]$ and every secret $s$, 
    \[
        \tracedist[\calH_{2+i}[c_{k-1+i}](s), \Sim_i[c_{k-1+i}]] = 0
    \]
\end{proposition}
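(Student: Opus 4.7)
The plan is to prove this claim by induction on $i$. The base case $i=0$ is immediate: the truncated experiment $\calH_{2}[c_{k-1}](s)$ stops right after the $(k-1)$-th corruption. By construction, the first $k-1$ corrupted shares in $\calH_2(s)$ are prepared identically to those in $\Sim_0$---both procedures initialize every data register (both the ``left'' and ``right'' subsets) as a maximally entangled pair $\sum_{y\in\bbK}\ket{y}^{\calS_{c_\corctr,j}}\otimes\ket{y}^{\chalReg_{c_\corctr,j}}$, and the check registers are sampled in the same way. No interpolation has happened yet, and no deletion predicate has fired, so the two distributions coincide exactly.

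For the inductive step, assume the claim holds for $i$ and target it for $i+1$. The strategy is to introduce two intermediate experiments and argue zero trace distance at each step. First, let $\calH_{3+i}'[c_{k+i}]$ denote the modification of $\calH_{3+i}[c_{k+i}]$ in which the ``right'' data positions of the $(k+i)$-th corrupted share are prepared as a fresh maximally entangled state $\sum_{\vect{y}\in\bbK^{t'-\ell}}\ket{\vect{y}}^{\calS_{c_{k+i}}^R}\otimes\ket{\vect{y}}^{\chalReg_{c_{k+i}}^R}$ instead of being interpolated. I would then argue:
\begin{enumerate}
    \item $\tracedist[\calH_{3+i}'[c_{k+i}],\ \Sim_{i+1}[c_{k+i}]] = 0$. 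Here I apply the inductive hypothesis to match $\calH_{2+i}[c_{k-1+i}](s)$ with $\Sim_i[c_{k-1+i}]$; because $\calH_{2+i}$ and $\calH_{3+i}'$ differ only in the additional predicate measurement at deletion $d_{i+1}$ and the randomized preparation of $c_{k+i}$ (both of which also occur in $\Sim_{i+1}$), and because $d_{i+1}$ must be deleted before $c_{k+i}$ is corrupted (otherwise $|C\setminus D|=k$ and the game aborts), the two experiments behave identically round-by-round.
    \item $\tracedist[\calH_{3+i}[c_{k+i}],\ \calH_{3+i}'[c_{k+i}]] = 0$. This is the crux of the argument.
\end{enumerate}

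For step 2, the idea is to condition on the experiment not aborting in the final round. Then the deletion predicate measurement on $\chalReg_{d_{i+1}}$ accepted, which pins the joint state of the remaining registers to a superposition
\[
    \ket{\gamma} \;=\; \sum_{\vect{u}\in\bbK^{t'}:\,h_{\bbK}(\vect{u})<\ell/2}\alpha_{\vect{u}}\,\ket{\psi_{\vect{u}}}^{\calS,\chalReg',\calI}\otimes H^{\otimes t'\lceil\log_2(n+1)\rceil}\ket{\vect{u}+\vect{\cert}_{d_{i+1}}}^{\chalReg_{d_{i+1}}}
\]
with Hamming weight bounded over $\bbK$. The preparation of $c_{k+i}$ in $\calH_{3+i}$ applies a linear interpolation map $[\vect{R}_1 \mid \vect{R}_2]$ to $\chalReg_{d_{i+1}}$ and a fixed register $\calI$, writing the result into $\chalReg_{c_{k+i}}^R$ and coherently copying it into $\calS_{c_{k+i}}^R$. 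By Fact \ref{fact:interpolation-linear-independence}, any $t'-\ell$ columns of the interpolation matrix are linearly independent, so $\vect{R}_1\in\bbK^{(t'-\ell)\times t'}$ satisfies the hypotheses of Theorem \ref{lem:extractor} with $M=t'$, $m=t'-\ell$, and weight bound $(M-m)/2=\ell/2$. Applying Theorem \ref{lem:extractor} to the source $\chalReg_{d_{i+1}}$, the output is uniformly random and independent of the rest of the state; after tracing out $\chalReg$, the distribution on $\calS_{c_{k+i}}^R$ becomes a uniform mixture over $\bbK^{t'-\ell}$, exactly matching the reduced state of the fresh maximally entangled preparation in $\calH_{3+i}'[c_{k+i}]$.

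The main obstacle will be carefully tracking which registers are live at the moment of the predicate measurement versus at the moment of interpolation, and ensuring that the application of Theorem \ref{lem:extractor} uses \emph{all} of $\chalReg_{d_{i+1}}$ as the source. The prior rounds may have entangled $\chalReg_{d_{i+1}}$ with the adversary's side register $\calR$ and with other $\chalReg_m$ registers via earlier interpolations, but this is fine: Theorem \ref{lem:extractor} is stated with an arbitrary side-information register $\calA$, so the entanglement is absorbed there. What must be verified is that no operation between the predicate measurement and the interpolation touches $\chalReg_{d_{i+1}}$ except via the interpolation map itself; this follows by inspection of the hybrid definitions, since $d_{i+1}\in D$ means the challenger no longer interacts with $\calS_{d_{i+1}}$ and only reads $\chalReg_{d_{i+1}}$ through future interpolations. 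Because the truncation at $c_{k-1+(i+1)}=c_{k+i}$ ends the game before any further use of $\chalReg_{d_{i+1}}$, a single invocation of Theorem \ref{lem:extractor} suffices, completing the inductive step.
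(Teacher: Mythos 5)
Your proposal is correct and follows essentially the same route as the paper's proof: induction on $i$, an intermediate hybrid $\calH_{3+i}'[c_{k+i}]$ that replaces the interpolated right data positions of $c_{k+i}$ with a fresh maximally entangled state, the inductive hypothesis plus the abort logic ($d_{i+1}$ must precede $c_{k+i}$) to match against the simulator, and a single invocation of \Cref{lem:extractor} (with $M=t'$, $m=t'-\ell$, column independence from \Cref{fact:interpolation-linear-independence}) on $\chalReg$ of the deleted share, conditioned on the predicate measurement accepting. The only deviation is cosmetic indexing of the deleted share ($d_{i+1}$ versus the paper's occasionally inconsistent $d_i$), which does not affect correctness.
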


Combining this claim with the previous observation about the relation of a truncated experiment to its full version, it is clear that 
\begin{align*}
    \tracedist[\calH_{3 + n - k}(s_0), \calH_{3+n-k}(s_1)] 
    &= \tracedist[\calH_{3 + n - k}[c_n](s_0), \calH_{3+n-k}[c_n](s_1)]
    \\
    &= \tracedist[\Sim_{n-k+1}[c_n], \Sim_{n-k+1}[c_n]] 
    \\
    &= 0
\end{align*}
By \Cref{claim:h2pi-to-h3pi}, we have
\[
    \tracedist[\calH_{2}(s_0), \calH_{2}(s_1)] \leq \tracedist[\calH_{3+n-k}(s_0), \calH_{3+n-k}(s_1)] + \negl
\]
Therefore, combining \Cref{claim:h0-h2}, \Cref{claim:truncated-is-sim}, and \Cref{claim:h2pi-to-h3pi}, we have
\begin{align*}
    \tracedist[\calH_{0}(s_0), \calH_{0}(s_1)] 
    &\leq \tracedist[\calH_{3+n-k}(s_0), \calH_{3+n-k}(s_1)] + \negl 
    \\
    &\leq 0 + \negl
\end{align*}
which completes the proof. All that remains is to prove \Cref{claim:truncated-is-sim}, and \Cref{claim:h2pi-to-h3pi}.

\begin{proof}[\Cref{claim:truncated-is-sim}.]
    We proceed via induction. This is clearly true for $i = 0$, since the first $k-1$ shares to be corrupted are prepared as maximally mixed states in both $\calH_{2}(s)$ and $\Sim$.

    Before addressing the case of $i > 0$, we define some notation for our specific application of interpolation. When preparing a share $c_{k-1+i}$ after it is corrupted, the challenger interpolates evaluations of $f$ into a register
    \[
        \chalReg_{c_{k-1+i}}^R \coloneqq (\chalReg_{c_{k-1+i},j})_{j\in J_{c_{k-1+i}}^{R}}
    \]
    $\chalReg_{c_{k-1+i}}^R$ consists of the right data positions of share $c_{k-1+i}$ and contains $t'-\ell$ $\bbK$-qudits. To do the interpolation, the challenger uses evaluations of $f$ contained in registers 
    \[
        \chalReg_{d_{i}}' \coloneqq (\chalReg_{d_i,j})_{j\in J_{d_{i}}}
    \]
    and some other registers which we group as $\calI$. $\chalReg_{d_{i}}'$ consists of the data positions in share $d_i$ and contains $t'$ $\bbK$-qudits.
    Since polynomial interpolation is a linear operation over $\bbK$, the system immediately after $c_{k-1+i}$ is prepared can be described as a state
    \[
        \sum_{\substack{\vect{x_1} \in \bbK^{t'}\\\vect{x_2} \in \bbK^{d - t'} }}  \alpha_{\vect{x_1}, \vect{x_2}} \ket{\vect{x_1}}^{\chalReg_{d_{i}}'}
        \otimes \ket{\vect{x_2}}^{\calI} 
        \otimes \ket{\vect{\vect{R_1}} \vect{x_1} + \vect{R_2} \vect{x_2}}^{\chalReg_{c_{k-1+i}}^R} 
        \otimes \ket{\vect{R_1} \vect{x_1} + \vect{R_2} \vect{x_2}}^{\calS_{c_{k-1+i}}^R} 
        \otimes \ket{\phi_{\vect{x_1}, \vect{x_2}}}^{\chalReg', \calS', \calR}
    \]
    where $\vect{R_1}\in \bbK^{(t' - \delLoss) \times t' }$ and $\vect{R_2} \in \bbK^{(t' - \delLoss) \times (d + 1 - t')}$ are submatrices of the interpolation transformation, where $\calS_{c_{k-1+i}}^R$ contains the copy of the evaluations in $\chalReg_{c_{k-1+i}}^R$, where $\chalReg'$ and $\calS'$ respectively consist of the unmentioned registers of $\chalReg$ and $\calS$, and where $\calR$ is the adversary's internal register.

    Now we will show that the claim holds for $i+1$ if it holds for $i$. Define the following hybrid experiments.
    \begin{itemize}
        \item $\calH_{3+i}[c_{k+i}]$: Recall that the only difference between $\calH_{2+i}$ and $\calH_{3+i}$ is an additional measurement made in the same round that the $(i+1)$'th share is deleted, i.e. when $|D|$ reaches $i+1$.

        \item $\calH_{3+i}'[c_{k+i}]$: The only difference from $\calH_{3+i}[c_{k+i}]$ occurs when the adversary requests to corrupt share $c_{k+i}$. When this occurs, the challenger prepares the right data positions of share $c_{k+i}$ as the state
        \[
            \propto \sum_{\vect{y}\in\bbK^{t'}} \ket{\vect{y}}^{\chalReg_{c_{k+i}}^R} \otimes \ket{\vect{y}}^{\calS_{c_{k+i}}^R}
        \]
        
        \item $\Sim_i[c_{k+i}]$: The only difference from $\calH_{3+i}'[c_{k+i}]$ is that $\Sim_i[c_{k+i}]$ is run until $c_{k+i}$ is corrupted, then the experiment is finished according to $\calH_{3+i}'[c_{k+i}]$.
        
    \end{itemize}

    We first show that $\calH_{3 + i}'[c_{k+i}]$ and $\Sim_i[c_{k+i}]$ are close. By the inductive hypothesis, 
    \[
        \tracedist[\calH_{2+i}[c_{k-1+i}](s), \Sim_{i-1}[c_{k-1+i}]] = 0
    \]
    \noindent Note that $\calH_{2+i}(s)$ and $\calH'_{3+i}(s)$ are identical until the $(i+1)$'th deletion $d_{i+1}$. Similarly, $\Sim_{i-1}$ and $\Sim_{i}$ are identical until the $(i+1)$'th deletion. Therefore
    \[
        \tracedist[\calH'_{3+i}[d_{i}](s), \Sim_{i}[d_{i}]] = 0
    \]
    Finally, observe that if the experiment does not abort, then $d_{i}$ is deleted before $c_{k+i}$ is corrupted (otherwise $|C\backslash D| = k$ during some round). Because of this, $\calH_{3 + i}'[c_{k+i}]$ and $\Sim[c_{k+i}]$ behave identically after the round where $d_{i}$ is corrupted. Therefore
    \[
        \tracedist[\calH_{3 + i}'[c_{k+i}], \Sim[c_{k+i}]] = 0
    \]

    It remains to be shown that 
    \[
        \tracedist[\calH_{3+i}[c_{k+i}], \calH_{3+i}'[c_{k+i}]] = 0
    \]

    The only difference between $\calH_{3+i}[c_{k+i}]$ and $\calH_{3+i}'[c_{k+i}]$ is at the end of the last round, where $c_{k+i}$ is corrupted.\footnote{In the definition of the $\ACD$ experiment, the corruption set $C$ is updated in between adversarial access to the corrupted registers, which occur once at the beginning of each round. The truncated game outputs according to the updated $C$, including $\calS_{c_{k+i}}$.} If the experiment reaches the end of this round without aborting, then $d_i$ has already been corrupted, since otherwise at some point $|C\backslash D| = k$. Furthermore, the deletion predicate measurement on $\chalReg_{d_{i}}$ must have accepted or else the experiment also would have aborted. It is sufficient to prove that the two experiments have $0$ trace distance conditioned on not aborting.

    In $\calH_{3+i}'[c_{k+i}]$, if the experiment does not abort then its end state (after tracing out the challenger's $\chalReg$ register) is
    \[
        \sum_{\substack{\vect{x_1} \in \bbK^{t'}\\\vect{x_2} \in \bbK^{d - t'} \\ x_3 \in \bbK^{t' - \delLoss}}} |\alpha_{\vect{x_1}, \vect{x_2}}|^2 \ket{x_3}\bra{x_3}^{\calS_{c_{k-1+i}}^R} 
        \otimes \Tr^{\chalReg'}\left[\ket{\phi_{\vect{x_1}, \vect{x_2}}}\bra{\phi_{\vect{x_1}, \vect{x_2}}}^{\chalReg', \calS', \calR}\right]
    \]

    We now calculate the end state of $\calH_{3+i}'[c_{k+i}]$, conditioned on it not aborting. In this case, the challenger for $\calH_{3+i}'[c_{k+i}]$ prepares the next corrupted register $\calS_{c_{k+i}}$ by a polynomial interpolation which uses registers $\chalReg_{d_{i}}$ and $\calI$. The deletion predicate measurement on $\chalReg_{d_i}$ must have accepted to avoid an abort, so before performing the interpolation, the state of the system is of the form
    \[
        \ket{\gamma}^{A,C',\mathsf{Int}_2,C_{d_{k+i}}} 
        = \sum_{\vect{u} \in \bbK^{t'}:h_{\bbK}(u)<\ell/2} \alpha_{\vect{u}} \ket{\psi_{\vect{u}}}^{\calS,\chalReg',\calI}
        \otimes H^{\otimes t' \lceil \log_{2}(n + 1)\rceil}\ket{\vect{u}+\vect{\cert}_{k+i}}^{\chalReg_{d_{i}}}
    \]
     We will apply \Cref{lem:extractor} with input size $t'$ and output size $t'-\ell$ to show that share $d_{i}$ contributes uniform randomness to the preparation of $c_{k+i}$. Observe that $\chalReg_{d_{i}}$ contains $t'$  $\bbK$-qudits and the interpolation target $\chalReg_{c_{k+i}}^R$ contains $t' - \ell$ $\bbK$-qudits. Furthermore,  $[R_1 | R_2] \in \bbK^{(t' - \delLoss) \times (d + 1)}$ is an interpolation matrix for a polynomial of degree $p$. By \Cref{fact:interpolation-linear-independence}, any $t' - \delLoss$ columns of $[R_1 | R_2]$ are linearly independent. In particular, any $t' - \delLoss$ columns of $R_1$ are linearly independent. Finally, note that $(t' - (t'-\ell))/2 = \ell/2$. Therefore by \Cref{lem:extractor}, the state of the system after preparing register $\chalReg_{c_{k+i}}^R$ and tracing out $\chalReg_{d_{i}}$ when the experiment ends at the end of this round is
     \[
        \sum_{\vect{x_3} \in \bbK^{t' - \delLoss}} \Tr^{\chalReg_{d_{i}}} \left[\ket{\gamma_{\vect{x_3}}}\bra{\gamma_{\vect{x_3}}}^{\chalReg,\calS,\calR} \right]
    \]
    where
    \[
        \ket{\gamma_{\vect{x_3}}} = 
        \sum_{\substack{\vect{x_1} \in \bbK^{t'}\\\vect{x_2} \in \bbK^{d - t'} }}  \alpha_{\vect{x_1}, \vect{x_2}} \ket{\vect{x_1}}^{\chalReg_{d_{i}}} 
        \otimes \ket{\vect{x_2}}^{\calI} 
        \otimes \ket{\vect{x_3} + \vect{R_2} \vect{x_2}}^{\chalReg_{c_{k+i}}} 
        \otimes \ket{\vect{x_3} + \vect{R_2} \vect{x_2}}^{\calS_{c_{k+i}}} 
        \otimes \ket{\varphi_{\vect{x_1}, \vect{x_2}}}^{\chalReg', \calS', \calR}
    \]

    After this round, $\calH_{3 + i}[c_{k+i}]$ ends and register $\chalReg$ is traced out. This yields the state
    \begin{align*}
        &\sum_{\vect{x_3} \in \bbK^{t' - \delLoss}}\Tr^{\chalReg}\left[\ket{\gamma_{\vect{x_3}}}\bra{\gamma_{\vect{x_3}}}^{\chalReg, \calS, \calR} \right]
        \\
        &\quad= \sum_{\substack{\vect{x_1} \in \bbK^{t'}\\\vect{x_2} \in \bbK^{d - t'} \\ \vect{x_3} \in \bbK^{t' - \delLoss}}} 
            |\alpha_{\vect{x_1}, \vect{x_2}}|^2 \ket{\vect{x_3} + \vect{R_2}\vect{x_2}}\bra{\vect{x_3} + \vect{R_2}\vect{x_2}}^{\calS_{c_{k+i}}^R} 
            \otimes \Tr^{C'}\left[\ket{\phi_{\vect{x_1}, \vect{x_2}}}\bra{\phi_{\vect{x_1}, \vect{x_2}}}^{\chalReg', \calS', \calR}\right]
        \\
        &\quad= \sum_{\substack{\vect{x_1} \in \bbK^{t'}\\\vect{x_2} \in \bbK^{d - t'} \\ \vect{x_4} \in \bbK^{t' - \delLoss}}} 
            |\alpha_{\vect{x_1}, \vect{x_2}}|^2 \ket{\vect{x_4}}\bra{\vect{x_4}}^{\calS_{c_{k+i}}^R} 
            \otimes \Tr^{\chalReg'}\left[\ket{\phi_{\vect{x_1}, \vect{x_2}}}\bra{\phi_{\vect{x_1}, \vect{x_2}}}^{\chalReg', \calS', \calR}\right]
    \end{align*}
    where $\vect{x_4} = \vect{x_3} + \vect{R_2}\vect{x_2}$. This state is identical to the state at the end of $\calH'_{3 + i}[c_{k+i}]$ conditioned on the experiments not aborting.
\ifsubmission\qed\fi\end{proof}

\begin{proposition} \label{claim:h2pi-to-h3pi}
    For every $i\in [0,n-k]$ and every secret $s$,
    \[
    \tracedist[\calH_{2+i}(s), \calH_{3 + i}(s)] = \negl
    \]
\end{proposition}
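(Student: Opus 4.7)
The plan is to reduce the trace distance to the rejection probability of the extra predicate measurement, and then argue that this rejection probability can be analyzed in a much simpler simulated experiment where the check indices are sampled after the adversary commits to its certificate. The first step will be to invoke the Gentle Measurement Lemma (\Cref{lem:gentle-measurement}): since the only difference between $\calH_{2+i}(s)$ and $\calH_{3+i}(s)$ is that $\calH_{3+i}(s)$ applies the binary projective measurement $\{\Pi_{\cert_{d_{i+1}}}, I - \Pi_{\cert_{d_{i+1}}}\}$ on $\chalReg_{d_{i+1}}$ and aborts on the ``reject'' branch, bounding $\Pr[\text{reject}] \leq \epsilon$ yields $\tracedist[\calH_{2+i}(s), \calH_{3+i}(s)] \leq (1-\epsilon) \cdot 2\sqrt{\epsilon} + \epsilon$. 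So it suffices to show $\epsilon = \negl$.

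Next, I would move the analysis into a chain of experiments in which this acceptance probability is preserved exactly. Let $\calH_{3+i}[d_{i+1}]$ be the truncation of $\calH_{3+i}$ that halts immediately after the predicate measurement on $\chalReg_{d_{i+1}}$; truncation preserves the rejection probability by definition. Then I would use \Cref{claim:truncated-is-sim} applied at step $i+1$ together with the fact that, unless the game aborts, $d_{i+1}$ is deleted strictly before $c_{k+i}$ is corrupted (otherwise $|C \setminus D|$ would reach $k$); this lets us identify $\calH_{3+i}[d_{i+1}](s)$ with $\Sim_{i+1}[d_{i+1}]$ up to zero trace distance, so the predicate's rejection probability is identical in both. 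Finally, I pass to a variant $\Sim'_{i+1}[d_{i+1}]$ of $\Sim_{i+1}[d_{i+1}]$ that delays sampling the data-index set $J_j$ and the Hadamard-basis measurement of the check positions on register $\chalReg_j$ until the moment the adversary submits its certificate for share $j$; since those delayed operations act only on the challenger's $\chalReg$ register, which is disjoint from the adversary's view and the predicate measurement commutes with them in the Hadamard basis, the rejection probability is still unchanged.

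Now the heart of the argument is a classical cut-and-choose estimate inside $\Sim'_{i+1}[d_{i+1}]$. Here the relevant share $d_{i+1}$ is a maximally entangled $\bbK$-register, and when the adversary outputs $\cert_{d_{i+1}}$ the challenger measures all of $\chalReg_{d_{i+1}}$ in the Hadamard basis to obtain $\vect{y} \in \bbK^t$, samples $J \subset [t]$ of size $t'$, copies check positions to form the verification key, and checks that $\cert$ agrees with $\vect{y}$ on $\overline{J}$ while the predicate tests that $\cert$ and $\vect{y}$ agree on $J$ up to Hamming distance $\ell/2$ over $\bbK$. Because $\{\Pi_{\cert}, I - \Pi_{\cert}\}$ and the verification measurement are diagonal in the Hadamard basis, I can pull the $\vect{y}$-measurement forward and reduce $\epsilon$ to the classical probability
\[
\epsilon \leq \Pr_{J,\vect{\cert},\vect{y}}\left[\vect{\cert}_{\overline{J}} = \vect{y}_{\overline{J}} \ \wedge\ \Delta_{\bbK}(\vect{\cert}_J, \vect{y}_J) \geq \ell/2 \right],
\]
where $\vect{y}$ is uniform and independent of $\vect{\cert}$, and $J$ is a uniformly random subset. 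Conditioning on any bad $(\vect{\cert}, \vect{y})$ with total Hamming disagreement $\geq \ell/2$, the expected number of disagreements sampled into $\overline{J}$ is at least $\ell r / (2t)$, so Hoeffding's inequality (\Cref{claim:hoeffding}, after replacing sampling-without-replacement by the more pessimistic with-replacement variant) yields $\epsilon \leq 2 \exp(-\log^2(\secpar)/2) = \negl$ with our choice of $r = \numChecksValue$ and $\ell = \delLossValueShort$.

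The step I expect to be the main obstacle is the second one: cleanly formalizing that the predicate's rejection probability is invariant under passing from $\calH_{3+i}[d_{i+1}](s)$ to $\Sim_{i+1}[d_{i+1}]$. This is where the inductive structure of \Cref{claim:truncated-is-sim} really bites, because we need to know that the state of $\chalReg_{d_{i+1}}$ at the moment of the predicate measurement is the same regardless of whether later shares are sampled via interpolation (as in $\calH_{3+i}$) or freshly (as in $\Sim_{i+1}$). The key observation that resolves it is that in any non-aborting execution $d_{i+1}$ is deleted before $c_{k+i}$ is corrupted, so only shares $c_1, \ldots, c_{k-1+i}$ have been produced by the challenger at the moment of the predicate measurement, and these are exactly the ones covered by the inductive hypothesis for $\Sim_i$ (together with the observation that $\Sim_i$ and $\Sim_{i+1}$ agree up until the $(i+1)$-st deletion).
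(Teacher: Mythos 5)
Your proposal is correct and follows essentially the same route as the paper's proof: the Gentle Measurement reduction to bounding the predicate's rejection probability $\epsilon$, the chain $\calH_{3+i} \to \calH_{3+i}[d_{i+1}] \to \Sim_{i+1}[d_{i+1}] \to \Sim'_{i+1}[d_{i+1}]$ using \Cref{claim:truncated-is-sim} at step $i+1$ and the delayed choice of $J$, and the final cut-and-choose Hoeffding bound over the random check set. The only cosmetic difference is your extra remark that the predicate commutes with the delayed operations, where the paper simply notes that the delayed operations act on the challenger's register, which is disjoint from the adversary's; both justifications are fine.
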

\begin{proof}
    The only difference between $\calH_{2+i}(s)$ and $\calH_{3 + i}(s)$ is an additional deletion predicate measurement $\Pi_{\cert_{d_{i+1}}}$ during the round where $d_{i+1}$ is corrupted. Say the deletion predicate accepts with probability $1-\epsilon$. Then the Gentle Measurement Lemma (\Cref{lem:gentle-measurement}) implies that, conditioned on the deletion predicate accepting, the distance between $\calH_{2+i}(s)$ and $\calH_{3 + i}(s)$ is at most $2\sqrt{\epsilon}$. We upper bound the case where the deletion predicate rejects by $1$ to obtain
    \[
        \tracedist[\calH_{2+i}(s), \calH_{3 + i}(s)] \leq (1-\epsilon)2\sqrt{\epsilon} + \epsilon
    \]
    
    Thus, it is sufficient to show that $\epsilon = \negl$, i.e. the deletion predicate accepts with high probability on $\chalReg_{d_{i+1}}$ in $\calH_{3+i}$.
    To show this, we consider the following hybrids, and claim that the probability that the deletion predicate accepts on $\chalReg_{d_{i+1}}$ is \emph{identical} in each of the hybrids.
    \begin{itemize}
        \item $\calH_{3+i}$
        \item $\calH_{3+i}[d_{i+1}]$: The only difference is that the game ends after the round where $d_{i+1}$ is deleted.
        \item $\Sim_{i+1}[d_{i+1}]$: Recall that the only difference between $\calH_{3+i}$ and $\Sim_{i+1}$ is that every share $j$ is prepared as the maximally entangled state
        \[
            \sum_{\vect{x} \in \bbK^{t}} \ket{\vect{x}}^{C_j} \otimes \ket{\vect{x}}^{A_j}
        \]
        
        \item $\Sim'_{i+1}[d_{i+1}]$: The same as $\Sim_{i+1}[d_{i+1}]$, except that after preparing the maximally entangled state for each share $j$, we delay choosing $J_j$ and measuring the check indices $\chalReg_{i,j}$ for $j\in [t]\backslash J_j$. These are now done immediately after $\Adv$ deletes $j$ by outputting $(\cert_j, j, \calR)$, and before the challenger verifies $\cert_j$.
    \end{itemize}

    Observe that $\calH_{3+i}$ and $\calH_{3+i}[d_{i+1}]$ are identical until the deletion predicate measurement in the round where $d_{i+1}$ is deleted, so the probability of acceptance is identical. By \Cref{claim:truncated-is-sim}, 
    \[
        \tracedist[\calH_{3+i}[c_{k+i}](s),\Sim_{i+1}[c_{k+i}]] = 0
    \]
    Share $d_{i+1}$ is deleted before share $c_{k+i}$ is corrupted in both $\calH_{3+i}(s)$ and $\Sim_{i+1}$, unless they abort. Therefore 
    \[
        \tracedist[\calH_{3+i}[d_{i+1}](s),\Sim_{i+1}[d_{i+1}]] = 0
    \]
    and the probability of acceptance is identical in $\calH_{3+i}[d_{i+1}](s)$ and $\Sim_{i+1}[d_{i+1}]$. Finally, the probability of acceptance is identical in $\Sim'_{i+1}[d_{i+1}]$ because the register $C$ is disjoint from the adversary's registers. 

    Thus, it suffices to show that $\epsilon = \negl$ in $\Sim'_{i+1}[d_{i+1}]$. Since $\bbK$ forms a vector space over $\bbF_2$, the certificate verification measurement and $\Pi_{\cert_{d_{i+1}}}$ are diagonal in the binary Fourier basis (i.e. the Hadamard basis) for every $\cert$. Therefore the probability that $\Verify$ accepts $\cert_{d_{i+1}}$ but the the deletion predicate measurement \emph{rejects} $\chalReg_{d_{i+1}}$ is
    \[
        \epsilon = \Pr_{\substack{\vect{\cert}, \vect{y} \in \bbK^{t}\\ J\subset [t]: |J| = t'}}\left[\vect{\cert}_{\overline{J}} = \vect{y}_{\overline{J}} \land \Delta_{\bbK}(\vect{\cert}_{J}, \vect{y}_{J}) \geq \frac{\delLoss}{2}\right]
    \]
    where $J$ is the set of data indices for share $d_{i+1}$, where $\overline{J}$ is the set complement of $J$ (i.e. the set of check indices for share $d_{i+1}$), and where $\Delta_{\bbK}(\vect{\cert}_{J}, \vect{y}_{J}) = h_{\bbK}(\vect{\cert}_{J} - \vect{y}_{J})$ is the Hamming distance of $\cert_{J}$ from $y_{J}$. Here, the probability is taken over the adversary outputting a certificate $\cert$ for $d_{k+i}$, the challenger sampling a set of check indices $\overline{J}$, and the challenger measuring all of register $\chalReg_{d_{i+1}}$ in the Hadamard basis to obtain $\vect{y}\in \bbK^t$.
    
    This value can be upper bounded using Hoeffding's inequality, for any fixed $\cert$ and $\vect{y}$ with $\Delta_{\bbK}(\cert_{J}, \vect{y}_{J}) \geq \delLoss/2$. Note that the probability of acceptance is no greater than if the $\numChecks$ check indices $\overline{J}$ are sampled \emph{with} replacement. In this case, the expected number of check indices which do \emph{not} match is 
    \begin{align}
        \geq \frac{\delLoss\numChecks}{2t} 
        &= \frac{t\log(\secpar)}{\secpar + (n-k+1)\log(\secpar)} \frac{(\secpar + (n-k+1)\log(\secpar))^2}{2t} 
        \\
        &= \frac{\log(\secpar)}{2} (\secpar + (n-k+1)\log(\secpar))
    \end{align}
    Therefore Hoeffding's inequality (\Cref{claim:hoeffding}) implies that
    \begin{align}
        \epsilon 
        &\leq 2\exp\left(
            \frac{-2\left(
                    \frac{\log(\secpar)}{2} (\secpar + (n-k+1)\log(\secpar))
                \right)^2}
                {(\secpar + (n-k+1)\log(\secpar))^2}
        \right) 
        \\
        &= 2\exp\left(-\frac{\log^2(\secpar)}{2}\right) 
        \\
        &= \negl
    \end{align}

    
\ifsubmission\smallskip\qed\fi\end{proof}
\fi
\fi 

\end{document}